\documentclass[unsortedaddress,superscriptaddress,pra,notitlepage]{revtex4}

\usepackage{amsmath,amssymb}
\usepackage{braket}
\usepackage{theorem}
\usepackage{color}
\usepackage{amsfonts}
\usepackage[mathscr]{eucal}
\usepackage[hidelinks]{hyperref}
\usepackage{graphicx}
\usepackage{xcolor}
\usepackage{tikz}
\usepackage{scalerel}
\usepackage{bm}

\usetikzlibrary{calc,decorations.pathreplacing}
\usetikzlibrary{arrows,shapes}

\newtheorem{definition}{Definition}[section]

\newtheorem{thm}[definition]{Theorem}
\newtheorem{prop}[definition]{Proposition}
\newtheorem{lemma}[definition]{Lemma}

\newtheorem{rem}[definition]{Remark}

\newtheorem{cor}[definition]{Corollary}

\newenvironment{proof}[1][Proof]{\begin{trivlist}
\item[\hskip \labelsep {\bfseries #1}]}{\hfill$\Box$\end{trivlist}}

\newcommand{\nn}{\nonumber \\}

\def\theta{\vartheta}
\def\hil{{\mathcal H}}
\def\kil{{\mathcal K}}

\def\B{{\mathcal B}}

\def\I{{\mathcal I}}

\def\M{\mathcal{M}}

\def\S{{\mathcal S}}

\def\X{{\mathcal X}}

\def\imp{\implies}

\def\ep{\varepsilon}
\def\bN{\mathbb{N}}
\def\bC{\mathbb{C}}

\def\bR{\mathbb{R}}

\def\bz{\left(}
\def\jz{\right)}

\def\inv{^{-1}}

\def\egy{\mathbf 1}
\def\map{\Phi}

\def\W{W}

\def\what{\widehat}
\def\oll{\overline}

\def\rho{\varrho}
\def\povm{\mathrm{POVM}}

\def\cl{\mathrm{cl}}

\def\nn{\nonumber}

\def\meas{\mathrm{meas}}

\def\bary{\mathrm{b}}

\def\p{_{\ge 0}}
\def\pne{_{\gneq 0}}
\def\pp{_{>0}}

\def\valt{\cdot}

\def\qv{\mathbf{q}}

\def\Um{\mathrm{Um}}

\def\req{Q}
\def\LE{\mathrm{LE}}
\def\GM{\mathrm{G}}
\def\wtilde{\widetilde}
\def\hell{\omega}
\def\chtimes{\otimes_{\mathrm{ch}}}
\def\univ{\Omega}
\def\symm{\mathrm{sym}}

\newcommand{\ki}[1]{\emph{#1}}

\newcommand{\ds}{\mbox{ }\mbox{ }}

\newcommand{\norm}[1]{\left\| #1\right\|}

\newcommand{\inner}[2]{\left\langle #1 , #2\right\rangle}
\newcommand{\diad}[2]{\left|#1\right\rangle\!\left\langle #2\right|}

\newcommand{\pr}[1]{\diad{#1}{#1}}

\newcommand{\ch}[1]{W_{#1}}

\newcommand{\DP}[1]{D^{\bary,\qv}_P\bz #1\jz}

\newcommand{\acc}[2]{#1\vert_{#2}}

\newcommand{\jsp}[2]{S_{#1,#2}}
\newcommand{\lreg}[1]{\underline{#1}}
\newcommand{\ureg}[1]{\overline{#1}}
\newcommand{\breg}[1]{\overline{\underline{#1}}}

\makeatletter
\renewcommand{\p@enumii}{}
\makeatother

\DeclareMathOperator{\Tr}{Tr}

\DeclareMathOperator{\supp}{supp}

\DeclareMathOperator{\ran}{ran}
\DeclareMathOperator{\dom}{dom}

\DeclareMathOperator{\logn}{\widehat\log}

\DeclareMathOperator{\divv}{\Delta}

\DeclareMathOperator{\D}{\mathit{D}}
\DeclareMathOperator{\DU}{\mathit{D}^{Um}}
\DeclareMathOperator{\DBS}{\mathit{D}^{\max}}

\DeclareMathOperator{\nlog}{\what\log}

\DeclareMathOperator*{\medcup}{\scalerel*{\cup}{\textstyle\sum}}
\DeclareMathOperator*{\medwedge}{\scalerel*{\wedge}{\textstyle\sum}}

\begin{document}

\title{Regularized barycentric R\'enyi divergences}

\author{Mil\'an Mosonyi}
\email{milan.mosonyi@gmail.com}

\affiliation{Department of Analysis and Operations Research, Institute of Mathematics,
Budapest University of Technology and Economics,
M\H uegyetem rkp.~3., H-1111 Budapest, Hungary}

\begin{abstract}
\centerline{\textbf{Abstract}}
\vspace{.3cm}

Barycentric R\'enyi divergences were introduced in [Mosonyi, Bunth, Vrana, Linear Algebra and its Applications, 2024] as an alternative to standard Kubo-Ando constructions to define multivariate quantum R\'enyi divergences. 
They are defined via a variational expression and depend on a finite collection of quantum relative entropies
$D^{q_x}$. When all the relative entropies are monotone under CPTP maps then so are the corresponding barycentric R\'enyi divergences, and when all the relative entropies are additive then the corresponding 
barycentric R\'enyi divergences are subadditive under tensor product. Additivity has only been established before for the case where all $D^{q_x}$ are chosen to be the Umegaki relative entropy, which is also the only case where the barycentric R\'enyi divergence (called the minimal one) was known to admit an explicit expression. 

Here we settle the problem of additivity by showing that for any choice of monotone quantum relative entropies, the regularized barycentric R\'enyi divergence coincides with the  minimal barycentric R\'enyi divergence on invertible inputs. This in turn implies that if at least two of the $D^{q_x}$ are strictly larger than the Umegaki relative entropy (in a precise sense), then the resulting barycentric R\'enyi divergence is not even weakly additive.
\end{abstract}

\maketitle

\tableofcontents

\section{Introduction}

For a finite collection $W=(\ch{x})_{x\in\X}$ of probability distributions on a finite set $\I$ and a probability distribution $P$ on $\X$, the corresponding $P$-weighted multivariate R\'enyi divergence
of $W$ is defined as
\begin{align}\label{eq:classical multirenyi def}
D_P^{\cl}(W):=-\log\req_{P}(W),\ds\ds\ds
\req_{P}(W):=\sum_{i\in\I}\prod_{x\in\supp(P)}\ch{x}(i)^{P(x)}\,.
\end{align}
These quantities feature, e.g., in the characterization of the optimal exponent of the symmetric error probability of classical state exclusion \cite{Antidist2023}, as well as in the characterization 
of the convertibility of one finite collection of probability distributions into another \cite{farooq2023asymptotic}.
For the solution of the quantum generalizations of these problems, one needs a suitable extension of 
\eqref{eq:classical multirenyi def} for non-commuting families of quantum states, motivating the study 
of multivariate quantum R\'enyi divergences.

Due to the non-commutativity of quantum states, \eqref{eq:classical multirenyi def} admits infinitely many different quantum generalizations already in the bivariate case $|\X|=2$, for instance, the 
Petz-type \cite{P86}, the sandwiched \cite{Renyi_new,WWY}, and the more general
R\'enyi $(\alpha,z)$-divergences \cite{AD}, the measured and the maximal \cite{Matsumoto_newfdiv} R\'enyi divergences, and more
\cite{CapelRico_fdiv,FawziFawzi2021,mosonyi2022geometric}. In the multivariate case $|\X|>2$, even writing down explicit quantum extensions  
that satisfy some basic requirements like 
additivity under tensor product and data processing inequality, is a highly non-trivial task. Defining 
$\req_P(W)$ as the trace of some of the well-known notions of multivariate matrix geometric means
derived from the Kubo-Ando geometric means
\cite{AndoLiMathias2004,BMP,Bhatia_Holbrook2006,KA,LawsonLim2014,Lim_Palfia2012,Moakher_matrixmean} yields such quantities, although most of them do not admit an explicit expression. Moreover, 
they are known not to reduce to the operationally relevant quantities in the bivariate case, e.g., 
in state discrimination \cite{ANSzV,Hayashicq,Nagaoka,MO}. 

An alternative approach to defining multivariate quantum R\'enyi divergences was put forward
in \cite{mosonyi2022geometric},
inspired by a variational representation of classical R\'enyi divergences. For any collection 
$\D^{\qv}:=(D^{q_x})_{x\in\X}$ of quantum relative entropies (to be defined in Section \ref{sec:relentr}), the corresponding $P$-weighted barycentric R\'enyi divergence of $W$ was defined as
\begin{align*}
D_P^{\bary,\qv}(W):=\inf_{\omega}\sum_{x\in\X}P(x)D^{q_x}(\omega\|\ch{x}),
\end{align*}
where the optimization is over all quantum states $\omega$.
When $D^{q_x}$ is the Umegaki relative entropy \cite{Umegaki} for every $x\in\X$, the above optimization 
can be solved explicitly, with the optimal $\omega$ being 
\begin{align*}
\omega_P^{\bary,\Um}(W)=
\begin{cases}
\frac{\GM_P^{\LE}(W)}{\Tr \GM_P^{\LE}(W)},&S\ne 0,\\
\text{any state},&S=0,
\end{cases},
\end{align*}
where $S$ is the projection onto the joint support of the $W_x$, $x\in\supp P$, and 
\begin{align*}
\GM_P^{\LE}(W):=S\exp\bz\sum_{x\in\X}P(x)S(\log \ch{x})S\jz
\end{align*}
is the $P$-weighted \ki{log-Euclidean geometric mean} of $W$. This then gives
\begin{align}\label{eq:LE Renyidiv}
D_P^{\bary,\Um}(W)=-\log\Tr\GM_P^{\LE}(W).
\end{align}

For a general collection of monotone and (weakly) tensor subadditive quantum relative entropies
$\D^{\qv}$ and every $P$, the corresponding $P$-weighted R\'enyi divergence $D_P^{\bary,\qv}$
is easily seen to satisfy the data processing inequality (i.e., monotonicity under completely positive 
trace-preserving maps), and (weak) subadditivity under tensor products, while the question of additivity was left open in \cite{mosonyi2022geometric}. This latter property would be easy to verify if $D_P^{\bary,\qv}$ could be expressed in some explicit
form; however, no such explicit expression has been established before, except for the case \eqref{eq:LE Renyidiv}.

In this paper we 
identify one more monotone and additive quantum relative entropy for which   
the corresponding $P$-weighted R\'enyi divergence can be computed explicitly. It is defined as
\begin{align*}
D^{\Um,\#_0}(\rho\|\sigma):=\DU(\rho\|\acc{\sigma}{\rho}),
\end{align*}
where $\acc{\sigma}{\rho}$ is the absolutely continuous part of $\sigma$ with respect to $\rho$
\cite{Anderson1971,Ando_Lebesgue}. We show that 
\begin{align*}
D_P^{\bary,\Um,\#_0}(W)=-\log\Tr\GM_P^{\LE}\bz\acc{W}{S}\jz,
\end{align*}
where $\acc{W}{S}=(\acc{W_x}{S})_{x\in\X}$.

In our main result, given in Theorems \ref{thm:main}--\ref{thm:main2} and Proposition \ref{prop:nonadd}, we 
settle the additivity problem for the barycentric R\'enyi divergences as follows:
\begin{thm}
For every finite set $\X$, every 
$P\in\S(\X)$, every 
$D^{\qv}=(D^{q_x})_{x\in\X}$ such that $D^{q_x}$ is monotone for every 
$x\in\supp P$, and every $W\in\B(\X,\hil)\p$, 
\begin{align}
D_P^{\bary,\Um}(W)
&=
\lim_{n\to+\infty}\frac{1}{n}D_P^{\bary,\meas}\bz W^{\otimes n}\jz\\
&\le \liminf_{n\to+\infty}\frac{1}{n}D_P^{\bary,\qv}\bz W^{\otimes n}\jz\\
&\le \limsup_{n\to+\infty}\frac{1}{n}D_P^{\bary,\qv}\bz W^{\otimes n}\jz\\
&\le
\lim_{n\to+\infty}\frac{1}{n}D_P^{\bary,\max}\bz W^{\otimes n}\jz
= D_P^{\bary,\Um,\#_0}(W),
\end{align}
where $D_P^{\bary,\meas}$ and $D_P^{\bary,\max}$ are the $P$-weighted barycentric R\'enyi divergences corresponding to the measured \cite{Donald1986} and the maximal \cite{BS,Matsumoto_newfdiv}
quantum relative entropies, respectively, and $W^{\otimes n}=(W_x^{\otimes n})_{x\in\X}$.
Moreover, all inequalities 
above hold as
equalities  when $W_x^0=W_y^0$ for all $x,y\in\supp P$.
\end{thm}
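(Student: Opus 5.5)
The plan is to exploit the ordering of monotone quantum relative entropies, $D^{\meas}\le D^{q}\le D^{\max}$, which holds for every monotone $D^q$ on pairs with $\rho\ne0$. Plugging this into the variational definition gives, for every $n$,
\begin{align*}
D_P^{\bary,\meas}(W^{\otimes n})\le D_P^{\bary,\qv}(W^{\otimes n})\le D_P^{\bary,\max}(W^{\otimes n}).
\end{align*}
So the middle two inequalities of the statement are immediate once we know that the two outer limits exist and equal $D_P^{\bary,\Um}(W)$ and $D_P^{\bary,\Um,\#_0}(W)$, respectively. The proof therefore splits into a lower-limit part and an upper-limit part.

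\emph{Lower limit.} Since $D^{\meas}\le\DU$, we get $D_P^{\bary,\meas}(W^{\otimes n})\le D_P^{\bary,\Um}(W^{\otimes n})=nD_P^{\bary,\Um}(W)$. Additivity of $D_P^{\bary,\Um}$ follows from the explicit formula \eqref{eq:LE Renyidiv}, because $\GM_P^{\LE}$ is multiplicative under tensor products. For the reverse inequality I would use that the variational quantity is a minimax over $\omega$ and measurements. I would restrict to permutation-invariant $\omega_n$, which is allowed by convexity and invariance. I would then use a single symmetric measurement, namely the pinching by $\sigma_{u,n}$, the universal symmetric state. This measurement commutes with $\sigma_{u,n}$ and satisfies
\begin{align*}
\DU(\omega_n\|W_x^{\otimes n})\le D^{\meas}(\omega_n\|W_x^{\otimes n})+O(\log n)
\end{align*}
uniformly in $x$. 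This is the standard Hayashi pinching argument: pinch by the spectral projections of $W_x^{\otimes n}$ and compare with the universal state. Dividing by $n$ gives $\liminf\frac1n D_P^{\bary,\meas}\ge D_P^{\bary,\Um}$. The delicate point is that one pinching must work simultaneously for all $x$. For this I would pinch successively by each $W_x^{\otimes n}$, which adds only polynomially many blocks, and use the monotonicity of $\DU$ under pinching applied to the first argument only. Concretely, $\DU(\omega\|\sigma)-\DU(\mathcal{E}(\omega)\|\sigma)\le\log|\mathrm{spec}|$ for pinchings $\mathcal{E}$ commuting with $\sigma$. Since $\omega_n$ is optimized over, it can be replaced by a commuting version.

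\emph{Upper limit.} First I would establish the explicit formula $D_P^{\bary,\Um,\#_0}(W)=-\log\Tr\GM_P^{\LE}(\acc{W}{S})$ by reducing to the Umegaki case on the support of $\omega$ and optimizing over the support projection. Additivity of $D_P^{\bary,\Um,\#_0}$ then follows from $\acc{(W_x^{\otimes n})}{S^{\otimes n}}=(\acc{W_x}{S})^{\otimes n}$. For the upper bound I would compare $D^{\max}$ with $D^{\Um,\#_0}$ asymptotically. Take the optimal $\omega$ for $D^{\Um,\#_0}$, which is supported in $S$, and pinch it to $\omega_n$ commuting with all $(\acc{W_x}{S})^{\otimes n}$ up to polynomial factors. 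On commuting inputs, $D^{\max}$ equals the Umegaki divergence against the absolutely continuous part, and the singular part does not matter, because for commuting-type inputs the maximal divergence only sees $\acc{\sigma}{\rho}$. This yields $\limsup\frac1n D_P^{\bary,\max}\le D_P^{\bary,\Um,\#_0}$. The matching lower bound follows from $D^{\max}\ge D^{\Um,\#_0}$ applied termwise, giving $\frac1n D_P^{\bary,\max}(W^{\otimes n})\ge D_P^{\bary,\Um,\#_0}(W)$. This step, controlling $D^{\max}$ after pinching with non-commuting singular parts, is where I expect the main obstacle. I would first try the bound $D^{\max}(\rho\|\sigma)\le\DU(\rho\|\sigma)$ in the commuting case combined with monotonicity of $D^{\max}$ under the pinching.

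\emph{Equality case.} When all $W_x^0$ coincide, $S=W_x^0$ for all $x$. Then $\acc{W_x}{S}=W_x$, so $D_P^{\bary,\Um,\#_0}(W)=D_P^{\bary,\Um}(W)$ and the chain of inequalities collapses into equalities.
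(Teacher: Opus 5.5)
\textbf{The upper bound on the $D^{\max}$ limit has a genuine gap.} The lower half of your plan is sound, and simpler than you make it. The supremum defining $D^{\meas}$ sits inside the sum over $x$, so the per-$x$ pinching bound $\DU(\omega_n\|W_x^{\otimes n})-d\log(n+1)\le D^{\meas}(\omega_n\|W_x^{\otimes n})$, valid for every $\omega_n$, directly gives $D_P^{\bary,\meas}(W^{\otimes n})\ge nD_P^{\bary,\Um}(W)-d\log(n+1)$. No common pinching is needed, and successive pinchings would not provide one. Your lower bound $\frac1n D_P^{\bary,\max}(W^{\otimes n})\ge D_P^{\bary,\Um,\#_0}(W)$, via $\DBS\ge D^{\Um,\#_0}$ and additivity, is also fine.

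The gap is in $\limsup_n\frac1n D_P^{\bary,\max}(W^{\otimes n})\le D_P^{\bary,\Um,\#_0}(W)$, which is the heart of the theorem. By Lemma~\ref{lemma:variational}, your route needs a single state $\omega_n$ that commutes with all $(\acc{W_x}{S})^{\otimes n}$ and satisfies $\DU(\omega_n\|\bar\omega^{\otimes n})=o(n)$, where $\bar\omega$ is the log-Euclidean optimizer.
\begin{itemize}
\item Successive pinchings do not produce such a state: pinching by $W_{x_2}^{\otimes n}$ destroys commutation with $W_{x_1}^{\otimes n}$.
\item In general no such state exists. Take invertible $W_x$ (so $S=I$) whose tensor powers generate the whole algebra of permutation-invariant operators; this happens, e.g., whenever $\dim\hil=2$ and two of the $W_x$ do not commute. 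Any state commuting with all $W_x^{\otimes n}$ then lies in the span of the permutation operators, so it is $U^{\otimes n}$-invariant. Twirling gives $\Tr\omega_n\log\bar\omega^{\otimes n}=\frac nd\Tr\log\bar\omega$, hence $\DU(\omega_n\|\bar\omega^{\otimes n})\ge n\,\DU(I/d\|\bar\omega)$.
\item This cost is linear in $n$ unless $\bar\omega=I/d$. So on commuting states you cannot beat the trivial single-copy choice $\omega=I/d$.
\item Monotonicity of $\DBS$ under pinching does not rescue the argument. It bounds the pinched quantity by the unpinched one, which is the wrong direction, and the pinching does not fix the other $W_x^{\otimes n}$ anyway.
\end{itemize}

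\textbf{The missing idea.} Commutation is not needed. What is needed is a single state for which $\DBS(\cdot\|W_x^{\otimes n})$ can be controlled for all $x$ at once.
\begin{itemize}
\item The paper first treats the case $W_x^0=S$ for all $x\in\supp P$. It uses the tilted state $\omega_{n,t}\propto(\bar\omega^{\otimes n})^t\Omega_n^{1-t}$, with $\Omega_n$ a universal symmetric state.
\item Because $\Omega_n$ commutes with every symmetric operator, for every $x$ simultaneously
\[
\omega_{n,t}^{1/2}(W_x^{\otimes n})^{-1}\omega_{n,t}^{1/2}=c_{n,t}^{-1}\Omega_n^{1-t}(\bar\omega^{t/2}W_x^{-1}\bar\omega^{t/2})^{\otimes n}.
\]
\item This gives a gap that is linear in $n$ but with a small coefficient (Lemma~\ref{lemma:BS-UM bound}):
\[
\DBS(\omega_{n,t}\|W_x^{\otimes n})-\DU(\omega_{n,t}\|W_x^{\otimes n})\le n\norm{\logn(\bar\omega^{t/2}W_x^{-1}\bar\omega^{t/2})+\logn W_x-t\logn\bar\omega}_{\infty}.
\]
\item The tilting cost is sublinear: $\DU(\omega_{n,t}\|\bar\omega^{\otimes n})\le\frac{1-t}{t}\log g_d(n)$ (Lemma~\ref{lemma:tilted states}). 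Letting $n\to\infty$ and then $t\searrow0$ makes the norm term vanish when all supports equal $S$.
\item The general case follows by replacing $W$ with $\acc{W}{S}$. This uses $\DBS(\rho\|\sigma)=\DBS(\rho\|\acc{\sigma}{S})$ for $\rho^0\le S$ and the tensor multiplicativity of the absolutely continuous part (Lemma~\ref{lemma:sharp restriction equality2}).
\end{itemize}
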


\begin{thm}
Let $\X$ be a finite set with at least two points, $P$ be a strictly positive probability distribution on $\X$, and for every $x$, let $D^{q_x}$ be a weakly subadditive, monotone and lower semi-continuous quantum relative entropy.
Assume that there exist two different $x_1,x_2\in\X$ such that 
for any non-commuting $\rho,\sigma\in\B(\hil)\pne$ with $\rho^0\le\sigma^0$, 
$\DU(\rho\|\sigma)<D^{q_{x_k}}(\rho\|\sigma)$, $k=1,2$. 
Assume that $(W_x)_{x\in\X}$ are invertible quantum states
such that the intersection of the commutants of $W_{x_1}$ and of $W_{x_2}$
is $\bC I_{\hil}$, and that $\sum_xP(x)\jsp{W}{P}(\logn W_x)\jsp{W}{P}$ is not a scalar multiple of the identity operator. Then
\begin{align*}
\limsup_{n\to+\infty}\frac{1}{n}D_P^{\bary,\qv}\bz W^{\otimes n}\jz< D_P^{\bary,\qv}(W).
\end{align*}
In particular, $D_P^{\bary,\qv}$ is not weakly additive.
\end{thm}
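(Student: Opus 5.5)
Since the $W_x$ are invertible, $W_x^0=I=W_y^0$, so by the first theorem (Theorem \ref{thm:main}) the regularized limit exists and equals $D_P^{\bary,\Um}(W)=-\log\Tr\GM_P^{\LE}(W)$. The statement therefore reduces to the single-copy strict inequality
\begin{align*}
D_P^{\bary,\Um}(W)<D_P^{\bary,\qv}(W)=\inf_{\omega}\sum_xP(x)D^{q_x}(\omega\|W_x).
\end{align*}
Since $D^{q_x}\ge\DU$ for every monotone quantum relative entropy (the standard measured/Umegaki lower bound), we always have $\sum_xP(x)D^{q_x}(\omega\|W_x)\ge\sum_xP(x)\DU(\omega\|W_x)$. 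The task is to show that the infimum on the left is strictly larger than the infimum of the right-hand side, which equals $D_P^{\bary,\Um}(W)$.

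\emph{Step 1 (existence of a minimizer).} I would argue that the infimum defining $D_P^{\bary,\qv}(W)$ is attained. The state space is compact, and each $\omega\mapsto D^{q_x}(\omega\|W_x)$ is lower semicontinuous by assumption, so a minimizer $\omega^*$ exists. Since the $W_x$ are invertible, $\omega^{*0}\le W_x^0$ holds automatically, and the value is finite because it is bounded above by $\max_x D^{\max}(\omega\|W_x)<\infty$ at, e.g., $\omega=I/d$.

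\emph{Step 2 (case split on the minimizer).} At $\omega^*$ we have
\[
\sum_xP(x)D^{q_x}(\omega^*\|W_x)\ \ge\ \sum_xP(x)\DU(\omega^*\|W_x)\ \ge\ D_P^{\bary,\Um}(W),
\]
and it suffices to show that one of these two inequalities is strict.

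\emph{Case (a): $\omega^*$ fails to commute with $W_{x_1}$ or with $W_{x_2}$.} The strictness hypothesis $\DU<D^{q_{x_k}}$ for non-commuting pairs then makes the first inequality strict, because $P$ is strictly positive, so $P(x_k)>0$.

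\emph{Case (b): $\omega^*$ commutes with both $W_{x_1}$ and $W_{x_2}$.} Then $\omega^*$ lies in the intersection of their commutants, which is $\bC I_{\hil}$ by assumption, so $\omega^*=I/d$. The unique minimizer of the Umegaki barycentric problem is $\GM_P^{\LE}(W)/\Tr\GM_P^{\LE}(W)$, which is proportional to $\exp\bigl(\sum_xP(x)\log W_x\bigr)$; here I identify $\jsp{W}{P}$ with the identity since the $W_x$ are invertible. By hypothesis $\sum_xP(x)\log W_x$ is not a scalar, so this minimizer is not $I/d$. By uniqueness of the Umegaki minimizer, the second inequality is then strict.

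\emph{Main obstacle.} The delicate point is uniqueness and the explicit form of the Umegaki minimizer. I would establish it through the identity
\[
\sum_xP(x)\DU(\omega\|W_x)=\DU\bigl(\omega\,\big\|\,\GM_P^{\LE}(W)/\Tr\GM_P^{\LE}(W)\bigr)-\log\Tr\GM_P^{\LE}(W),
\]
which follows by expanding $\Tr\omega\log W_x$ and is valid because the $W_x$ are invertible. Strict positivity of $\DU$ off the diagonal then gives uniqueness. A secondary point is the attainment of the infimum in Step 1, which needs lower semicontinuity of every $D^{q_x}$ on the full state space; this is guaranteed by the hypotheses.

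Finally, combining the strict inequality with Theorem \ref{thm:main} gives the displayed claim. For weak additivity, note that weak additivity would force $D_P^{\bary,\qv}(W^{\otimes n})=nD_P^{\bary,\qv}(W)$, which contradicts the strict inequality.
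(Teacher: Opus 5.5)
Your proof is correct and follows the paper's argument essentially step for step. You reduce via Theorem~\ref{thm:main} to the single-copy strict inequality, take a minimizer by lower semicontinuity and compactness, and split according to whether it equals $I/d$ (equivalently, commutes with both $W_{x_1}$ and $W_{x_2}$); you then use the strict gap $\DU<D^{q_{x_k}}$ in one case and uniqueness of the log-Euclidean minimizer in the other, exactly as the paper does. One small correction: $\DU\le D^{q_x}$ does not hold for every monotone relative entropy (e.g., $D^{\meas}$ is monotone but can be strictly below $\DU$); it holds here only because $D^{q_x}$ is also assumed weakly subadditive (Lemma~\ref{lemma:sandwich}).
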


\section{Preliminaries}

By a finite set $\X$, we will always mean a non-empty finite set.
For a finite set $\X$, we will denote by 
$\S(\X)$ the set of probability density functions on $\X$, i.e., 
$\S(\X):=\{P\in[0,1]^{\X}:\,\sum_xP(x)=1\}$.
For any natural number $d\in\bN$, we use the notation $[d]:=\{1,\ldots,d\}$.

Throughout the paper we use the convention
\begin{align*}
0\cdot(\pm\infty):=0.
\end{align*}

By a finite-dimensional Hilbert space we mean a finite-dimensional complex vector space $\hil$ equipped with an inner product. We follow the physics convention where the inner product is conjugate linear in its first variable and linear in the second. We will use the Dirac notation, where for 
$x,y\in\hil$, $\diad{y}{x}$ is a linear operator on $\hil$ acting as 
$\diad{y}{x}:\,z\mapsto y\inner{x}{z}$.

For a finite-dimensional Hilbert space $\hil$, let $\B(\hil)$ denote the set of all linear operators on $\hil$, 
and let $\B(\hil)_{\ge 0}$, $\B(\hil)_{\gneq 0}$, and 
$\B(\hil)_{>0}$ denote the set of 
positive semi-definite (PSD), non-zero positive semi-definite, and positive definite operators, respectively. 
$\S(\hil):=\{\rho\in\B(\hil)\p:\,\Tr\rho=1\}$ will denote the set of density operators, or states, on $\hil$.
For any $n\in\bN$, $\S_{\symm}(\hil^{\otimes n})$ will denote the set of permutation-invariant, or symmetric,
states on $\hil^{\otimes n}$.

For any non-empty set $\X$, let 
\begin{align*}
\B(\X,\hil),\ds\ds
\B(\X,\hil)\p,\ds\ds
\B(\X,\hil)\pne,\ds\ds
\B(\X,\hil)\pp,\ds\ds
\S(\X,\hil),
\end{align*}
denote the set of functions mapping from $\X$ into 
$\B(\hil)$,
$\B(\hil)\p$, $\B(\hil)\pne$, $\B(\hil)\pp$, and
$\S(\hil)$, respectively. 
We will normally use the notation $W=(W_x)_{x\in\X}$ to denote elements of 
$\B(\X,\hil)\p$.
We say that $W\in\B(\X,\hil)\p$ is \ki{classical} if 
there exists an orthonormal basis $(e_i)_{i\in\I}$ in $\hil$ such that 
$W_x=\sum_{i\in\I}\inner{e_i}{W_xe_i}\pr{e_i}$, $x\in\X$.

For a self-adjoint operator $A$, let $P^A_a:=\egy_{\{a\}}(A)$ denote the spectral projection of $A$
corresponding to the singleton $\{a\}\subset\bR$. 
(Here and henceforth $\egy_H$ stands for the characteristic (or indicator) function
of a set $H$.)
The projection onto the support of $A$ is $\sum_{a\ne 0}P^A_a$; in particular, if $A$ is 
positive semi-definite, it is equal to $\lim_{\alpha\searrow 0}A^{\alpha}=:A^0$. In general, 
we follow the convention that real powers of a positive semi-definite operator $A$ are taken 
only on its support, i.e., for any $x\in\bR$, $A^x:=\sum_{a>0}a^x P^A_a$.
In particular, $A\inv:=\sum_{a>0}a\inv P^A_a$ stands for the generalized inverse of $A$,
and $A\inv A=AA\inv=A^0$. 
For a projection $P$ on $\hil$, we will use the notation 
\begin{align*}
\S(P\hil):=\{\rho\in\S(\hil):\,\rho^0\le P\}
\end{align*}
for the set of states supported on $\ran P$.

For an operator $X\in\B(\hil)$, 
$\norm{X}_{\infty}:=\max\{\norm{X\psi}:\,\psi\in\hil,\,\norm{\psi}=1\}$
denotes the operator norm of $X$.

For a finite-dimensional Hilbert space $\hil$ and a natural number $n$, 
we denote by 
\begin{align*}
\povm(\hil,[n]):=\left\{M=(M_i)_{i=1}^{n}\in\B(\hil)\p^{[n]}:\,\sum_{i=1}^{n} M_i=I\right\}
\end{align*}
the set of \ki{$n$-outcome positive operator valued measures (POVMs)} on $\hil$.
Any $M\in\povm(\hil,[n])$ determines a CPTP map $\M:\,\B(\hil)\to\ell^{\infty}([n])$ by 
\begin{align*}
\M(\valt):=\sum_{i=1}^{n}(\Tr M_i(\valt))\egy_{\{i\}}.
\end{align*}

By $\log$ we denote the natural logarithm, and
we use two different extensions of it to $[0,+\infty]$, defined as
\begin{align*}
\log x:=\begin{cases}
-\infty,&x=0,\\
\log x,& x\in(0,+\infty),\\
+\infty,&x=+\infty,
\end{cases}
\ds\ds\ds
\nlog x:=\begin{cases}
0,&x=0,\\
\log x,& x\in(0,+\infty),\\
+\infty,&x=+\infty.
\end{cases}
\end{align*}
It is straightforward to verify that for two commuting PSD operators 
$\rho,\sigma\in\B(\hil)\pne$, 
\begin{align*}
\logn(\rho\sigma)=(\rho^0\wedge\sigma^0)(\logn\rho)(\rho^0\wedge\sigma^0)+(\rho^0\wedge\sigma^0)(\logn\sigma)(\rho^0\wedge\sigma^0),
\end{align*}
and for any two PSD operators $\rho_k\in\B(\hil_k)\pne$, $k=1,2$,
\begin{align}
\logn(\rho_1\otimes\rho_2)
&=
(\logn\rho_1)\otimes\rho_2^0+\rho_1^0\otimes\logn\rho_2\nn\\
&=
(\rho_1^0\otimes\rho_2^0)
\left[\logn\rho_1\otimes I_{\hil_2}+I_{\hil_1}\otimes \logn\rho_2\right](\rho_1^0\otimes\rho_2^0).
\label{eq:logn mult}
\end{align}

It is well known that the logarithm is operator monotone on positive definite operators,
while it is easy to see that $\logn$ is not operator monotone on PSD operators. However, we have the following:
\begin{lemma}\label{lemma:logn monotone}
For any  $A,B\in\B(\hil)\p$,
\begin{align*}
A\le B\ds\ds\imp\ds\ds \logn A\le A^0(\logn B)A^0.
\end{align*} 
\end{lemma}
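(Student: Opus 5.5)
Since $A\le B$ with $A,B\ge 0$, we have $\ran A\subseteq\ran B$, i.e.\ $A^0\le B^0$. The plan is to reduce to the operator monotonicity of $\log$ on positive definite operators by a regularization on the support of $B$, followed by a compression to the support of $A$. Write $Q:=B^0$ and work on $\ran Q$, where $B$ is invertible. For $\ep>0$ set $A_\ep:=A+\ep Q$. Then $A_\ep\le B+\ep Q$, and both are positive definite as operators on $\ran Q$. Operator monotonicity of $\log$ on $\ran Q$ therefore gives
\begin{align*}
Q\log(A+\ep Q)Q\le Q\log(B+\ep Q)Q,
\end{align*}
where the logarithms are taken on $\ran Q$. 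Compressing both sides by $A^0$ (note $A^0\le Q$) yields
\begin{align*}
A^0\log(A+\ep Q)A^0\le A^0\log(B+\ep Q)A^0.
\end{align*}

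Next, let $\ep\searrow 0$. On the right, $\log(B+\ep Q)\to\log B=\logn B$ on $\ran Q$, since $B$ is invertible there. So the right-hand side converges to $A^0(\logn B)A^0$. On the left, $A$ commutes with $Q$ and $A^0$, so $A^0\log(A+\ep Q)A^0=\sum_{a>0}\log(a+\ep)P^A_a$. This converges to $\sum_{a>0}(\log a)P^A_a=\logn A$. The inequality is preserved in the limit, which gives $\logn A\le A^0(\logn B)A^0$.

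The only subtle point is the left-hand side, where $\log(A+\ep Q)$ blows up on $Q-A^0$. This is handled by compressing with $A^0$ \emph{before} taking the limit, so that the divergent part, which commutes with $A^0$, is cut away. The step $A^0\le B^0$ also needs a short justification: if $B\psi=0$, then $\inner{\psi}{A\psi}\le\inner{\psi}{B\psi}=0$, hence $A\psi=0$. Taking orthogonal complements gives $\ran A\subseteq\ran B$.
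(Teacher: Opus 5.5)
Your proof is correct and follows the paper's own argument: regularize by $\varepsilon B^0$, apply operator monotonicity of $\log$ on the support of $B$, compress both sides by $A^0$, and let $\varepsilon\searrow 0$. Your extra details, namely the kernel argument for $A^0\le B^0$ and the explicit spectral form of the compressed left-hand side, fill in steps that the paper leaves implicit.
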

\begin{proof}
Assume that $A\le B$. Then $A^0\le B^0$, and thus $B^0$ acts as the identity on the support of $A$. By the operator monotonicity of $\log$, we have
\begin{align*}
\logn(A+\ep B^0)\le\logn(B+\ep B^0),\ds\ds\ds\ep>0,
\end{align*}
whence
\begin{align*}
A^0\logn(A+\ep B^0)A^0\le A^0\logn(B+\ep B^0)A^0,\ds\ds\ds\ep>0.
\end{align*}
Taking the limit $\ep\searrow 0$ yields $\logn A\le A^0(\logn B)A^0$.
\end{proof}

For two positive semidefinite operators $\rho,\sigma\in\B(\hil)\p$, let 
\begin{align*}
\acc{\sigma}{\rho}:=\max\left\{X\in\B(\hil)\p:\,X\le\sigma,\,X^0\le\rho^0\right\}
\end{align*}
be the \ki{absolutely continuous part} of $\sigma$ with respect to $\rho$ \cite{Anderson1971,Ando_Lebesgue}.
It is a non-trivial fact that the above maximum exists; moreover, it can be given explicitly as 
\begin{align}\label{eq:ac explicit}
\acc{\sigma}{\rho}&=
P\sigma P-P\sigma (P^{\perp}\sigma P^{\perp})\inv\sigma P
=((\rho^0\wedge\sigma^0)\sigma\inv(\rho^0\wedge\sigma^0))\inv,
\end{align}
where $P$ is any projection satisfying $\rho^0\wedge\sigma^0\le P\le\rho^0$.
It is obvious by definition that for $\rho_1,\rho_2,\sigma\in\B(\hil)\p$, 
\begin{align}\label{eq:acc monotone}
\rho_1\le\rho_2\ds\imp\ds \acc{\sigma}{\rho_1}\le \acc{\sigma}{\rho_2}.
\end{align}

For any $\gamma\in(0,1)$, the \ki{$\gamma$-weighted Kubo-Ando geometric mean} of 
two PSD operators $\rho,\sigma\in\B(\hil)\p$ is defined as \cite{KA}
\begin{align*}
\sigma\#_{\gamma}\rho
&:=
\lim_{\ep\searrow 0}(\rho+\ep I)^{1/2}\bz(\rho+\ep I)^{-1/2}(\sigma+\ep I)(\rho+\ep I)^{-1/2}\jz^{1-\gamma}(\rho+\ep I)^{1/2}\\
&=
\rho^{1/2}\bz\rho^{-1/2}\acc{\sigma}{\rho}\rho^{-1/2}\jz^{1-\gamma}\rho^{1/2},
\end{align*}
where the equality is due to \cite{Kosaki_ac}.
We define
\begin{align}\label{eq:KA limit at 0}
\sigma\#_0\rho:=\lim_{\gamma\searrow 0}\sigma\#_{\gamma}\rho=\acc{\sigma}{\rho}.
\end{align}

The absolutely continuous part is tensor multiplicative, i.e., for any $\rho_k,\sigma_k\in\B(\hil^{(k)})\p$, $k=1,2$, 
\begin{align}\label{eq:acc mult}
\acc{\sigma_1\otimes\sigma_2}{\rho_1\otimes\rho_2}
=
\acc{\sigma_1}{\rho_1}\otimes\acc{\sigma_2}{\rho_2};
\end{align}
see, e.g., \cite[Proposition A.7]{mosonyi2022geometric} for a direct proof.
Alternatively, it follows from the easily verifiable tensor multiplicativity 
$(\sigma_1\otimes\sigma_2)\#_{\gamma}(\rho_1\otimes\rho_2)
=(\sigma_1\#_{\gamma}\rho_1)\otimes(\sigma_2\#_{\gamma}\rho_2)$ and \eqref{eq:KA limit at 0}.

We will use the following variant of Fekete's lemma.

\begin{lemma}\label{lemma:subadditive Fekete}
For any sequence $a_n\in\bR\cup\{\pm\infty\}$, $n\in\bN$, 
\begin{align}\label{eq:subadditive Fekete}
\inf_{n\in\bN}\frac{1}{n}a_n\le
\liminf_{n\to+\infty}\frac{1}{n}a_n
\le
\limsup_{n\to+\infty}\frac{1}{n}a_n
\le
\sup_{n\in\bN}\frac{1}{n}a_n, 
\end{align}
and the following hold:
\begin{enumerate}
\item\label{item:Fekete1}
If the sequence is weakly subadditive, i.e., for every $k,n\in\bN$, 
$a_{kn}\le ka_n$, then the first inequality in \eqref{eq:subadditive Fekete} is an equality.
\item\label{item:Fekete2}
If $a_n<+\infty$ for every $n\in\bN$, and the sequence is subadditive, i.e., for every $m,n\in\bN$, 
$a_{m+n}\le a_m+a_n$, then the first and the second inequalities in \eqref{eq:subadditive Fekete} 
are equalities.
\item\label{item:Fekete3}
If the sequence is weakly superadditive, i.e., for every $k,n\in\bN$, 
$a_{kn}\ge ka_n$, then the last inequality in \eqref{eq:subadditive Fekete} is an equality.
\item\label{item:Fekete4}
If $a_n>-\infty$ for every $n\in\bN$, and the sequence is superadditive, i.e., for every $m,n\in\bN$, 
$a_{m+n}\ge a_m+a_n$, then the second and the last inequalities in \eqref{eq:subadditive Fekete} 
are equalities.

\end{enumerate}
\end{lemma}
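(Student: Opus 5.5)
The plan is to prove each item directly from the definitions, handling infinite values separately. The chain \eqref{eq:subadditive Fekete} is elementary. The first and last inequalities hold because $\inf_m a_m/m\le a_n/n\le\sup_m a_m/m$ for every $n$, and taking $\liminf$ and $\limsup$ preserves these bounds. The middle inequality is trivial. Items \ref{item:Fekete3} and \ref{item:Fekete4} follow from items \ref{item:Fekete1} and \ref{item:Fekete2} applied to $b_n:=-a_n$. Indeed, weak superadditivity of $(a_n)$ is weak subadditivity of $(b_n)$, and $a_n>-\infty$ becomes $b_n<+\infty$. Also $\sup_n a_n/n=-\inf_n b_n/n$ and $\limsup_n a_n/n=-\liminf_n b_n/n$. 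So only items \ref{item:Fekete1} and \ref{item:Fekete2} need proofs.

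\emph{Item \ref{item:Fekete1}.} It suffices to show $\liminf_n a_n/n\le a_m/m$ for every fixed $m$. Weak subadditivity with $k$ arbitrary gives $a_{km}/(km)\le a_m/m$ for all $k\in\bN$. The subsequence $(km)_{k\in\bN}$ tends to infinity, so
\begin{align*}
\liminf_{n\to+\infty}\frac{a_n}{n}\le\liminf_{k\to+\infty}\frac{a_{km}}{km}\le\frac{a_m}{m}.
\end{align*}
This bound also holds when $a_m=\pm\infty$. Taking the infimum over $m$ gives $\liminf\le\inf$, and the reverse inequality is already in \eqref{eq:subadditive Fekete}.

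\emph{Item \ref{item:Fekete2}.} By \eqref{eq:subadditive Fekete}, it suffices to show $\limsup_n a_n/n\le a_m/m$ for every $m$. This is the standard Fekete argument. Fix $m$ and write $n=km+r$ with $r\in\{1,\dots,m\}$, so that $k=k(n)\to\infty$. Iterating subadditivity gives
\begin{align*}
a_n\le k\,a_m+a_r\le k\,a_m+\max_{1\le r\le m}a_r .
\end{align*}
The constant $C_m:=\max_{1\le r\le m}a_r$ is $<+\infty$ by the finiteness assumption, and it may equal $-\infty$. If $a_m=-\infty$, then $a_n=-\infty$ for all large $n$, and the claim is trivial. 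If $a_m$ is finite, divide by $n$ and use $k/n\to1/m$ and $C_m/n\to0$ (or $C_m=-\infty$, in which case $a_n=-\infty$). This gives $\limsup_n a_n/n\le a_m/m$.

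The only delicate point is the extended-real arithmetic. The assumption $a_n<+\infty$ is needed so that the sum $k\,a_m+a_r$ is never of the form $+\infty-\infty$, and so that the remainder term $a_r$ cannot be $+\infty$. The plan handles this by separating the case where some $a_m=-\infty$, and by using $r\in\{1,\dots,m\}$ rather than $r=0$, which avoids any need to define $a_0$.
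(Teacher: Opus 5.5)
Your proposal is correct and follows essentially the same route as the paper: for item 1 you bound $\liminf_n a_n/n$ by $a_m/m$ along the subsequence $(km)_{k\in\bN}$, and you obtain item 3 by passing to $-a_n$, exactly as the paper does. For items 2 and 4 the paper simply cites the usual Fekete lemma, while you write out the standard argument with the extended-real bookkeeping for item 2 and then transfer it to item 4 by negation.
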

\begin{proof}
The inequalities in \eqref{eq:subadditive Fekete} are obvious, and the assertions in 
\ref{item:Fekete2} and \ref{item:Fekete4} are the usual Fekete's lemma.

Assume now that the sequence is weakly subadditive, and let
$\kappa:=\inf_{n\in\bN}a_n/n$. If $\kappa=+\infty$ then the first inequality in \eqref{eq:subadditive Fekete} is trivially an equality. Assume therefore that $\kappa<+\infty$.
Then for any $\kappa'>\kappa$, there exists an $n_{\kappa'}\in\bN$ such that 
$a_{n_{\kappa'}}/n_{\kappa'}<\kappa'$, whence
\begin{align*}
\liminf_{n\to+\infty}\frac{1}{n}a_n
\le
\liminf_{k\to+\infty}\frac{1}{kn_{\kappa'}}a_{kn_{\kappa'}}
\le
\frac{1}{n_{\kappa'}}a_{n_{\kappa'}}
<
\kappa',
\end{align*}
where the second inequality follows from the weak subadditivity property, and the rest are trivial.
Since this holds for any $\kappa'>\kappa$, we get that the first inequality in 
\eqref{eq:subadditive Fekete} is an equality.
This proves the assertion in \ref{item:Fekete1}, and the assertion in \ref{item:Fekete3} follows immediately by applying \ref{item:Fekete1} to $\tilde a_n:=-a_n$, $n\in\bN$.
\end{proof}

\section{Regularized barycentric R\'enyi divergences}

\subsection{Quantum divergences}
\label{sec:qdiv}

Let $\X$ be a non-empty finite set. By an \ki{$\X$-variable quantum divergence} $\divv$
we mean a function on collections of PSD matrices
\begin{align*}
\dom(\divv)\subseteq\cup_{d\in\bN}\,\B(\X,\bC^d)\p,
\ds\ds
\divv:\,\dom(\divv)
\to\bR\cup\{\pm\infty\},
\end{align*}
that is \ki{invariant under isometries}, i.e., if $V:\,\bC^{d_1}\to\bC^{d_2}$ is an isometry then 
\begin{align*}
&V\bz\dom(\divv)\cap\B(\X,\bC^{d_1})\jz V^*\subseteq\dom(\divv),\ds\text{and}\ds\\
&\divv\bz V\W V^*\jz
=
\divv\bz \W\jz,\ds\ds\ds \W\in\dom(\divv)\cap\B(\X,\bC^{d_1}),
\end{align*}
where $VWV^*:=(VW_xV^*)_{x\in\X}$.
Due to the isometric invariance, $\divv$ may be extended to collections of operators
on any finite-dimensional Hilbert space $\hil$, by 
\begin{align*}
&\dom_{\hil}(\divv):=\left\{W\in\B(\X,\hil)\p:\,\exists\,V:\,\hil\to\bC^d\text{ isometry: }VWV^*\in\dom(\divv)\right\},\\
&\divv(\W):=\divv(V\W V^*),\ds\ds\ds \W\in\dom_{\hil}(\divv),
\end{align*}
where $V$ in the second line is any isometry $V:\,\hil\to\bC^d$ such that 
$VWV^*\in\dom(\divv)$. The isometric invariance of $\divv$ on $\dom(\divv)$ 
guarantees that this extension is well-defined, in the sense
that the value of $\divv(V\W V^*)$
is independent of the choice of $d$ and $V$. 
Clearly, this extension is again invariant under isometries, i.e.,
for any $\W\in\dom_{\hil}(\divv)$ and 
$V:\,\hil\to\kil$ isometry,
$VWV^*\in\dom_{\kil}(\divv)$, and
$\divv(V\W V^*)=\divv(W)$.

A quantum divergence $\divv$ is called
\begin{itemize}
\item
\ki{monotone}, if for any $W\in\dom_{\hil}(\divv)$ and any CPTP (completely positive and trace-preserving) map
$\map:\,\B(\hil)\to\B(\kil)$, $\map(W):=(\map(W_x))_{x\in\X}\in\dom_{\kil}(\divv)$, and 
$\divv(\map(W))\le\divv(W)$;
\item
\ki{weakly subadditive}, if for any $W\in\dom_{\hil}(\divv)$, 
$W^{\otimes n}:=(W_x^{\otimes n})_{x\in\X}\in\dom_{\hil^{\otimes n}}(\divv)$, and 
for any $W\in\dom_{\hil}(\divv)$ and any $n\in\bN$,
\begin{align}\label{eq:weak subadd def}
\divv(W^{\otimes n})\le n\divv(W);
\end{align}
\item
\ki{weakly additive}, if the above holds, but with equality in \eqref{eq:weak subadd def};
\item
\ki{subadditive}, if for any $W^{(k)}\in\dom_{\hil^{(k)}}(\divv)$, $k=1,2$, 
$W^{(1)}\otimes W^{(2)}:=(W^{(1)}_x\otimes W^{(2)}_x)_{x\in\X}\in\dom_{\hil^{(1)}\otimes\hil^{(2)}}(\divv)$, and 
\begin{align}\label{eq:subadd def}
\divv(W^{(1)}\otimes W^{(2)})\le\divv(W^{(1)})+\divv(W^{(2)}),
\end{align}
whenever the RHS is well-defined;
\item
\ki{additive}, if the above holds, but with equality in \eqref{eq:subadd def}.
\end{itemize}
\ki{Weak superadditivity} and \ki{superadditivity} of $\divv$ are defined analogously to the above, with  opposite inequality signs in \eqref{eq:weak subadd def} and in \eqref{eq:subadd def}, respectively.

For any quantum divergence $\divv$ satisfying 
$W\in\dom_{\hil}(\divv)$ $\imp$ $W^{\otimes n}\in\dom_{\hil^{\otimes n}}(\divv)$, 
we define its \ki{lower and upper regularized versions} 
$\lreg{\divv}$ and $\ureg{\divv}$, respectively, as
\begin{align*}
\lreg{\divv}(W)&:=\liminf_{n\to+\infty}\frac{1}{n}\divv(W^{\otimes n}),\\
\ureg{\divv}(W)&:=\limsup_{n\to+\infty}\frac{1}{n}\divv(W^{\otimes n}),\ds\ds\ds W\in\dom_{\hil}(\divv).
\end{align*}
If $\lreg{\divv}(W)=\ureg{\divv}(W)$ for every $W\in\dom_{\hil}(\divv)$ then we use the notation
\begin{align*}
\breg{\divv}(W)&:=\lim_{n\to+\infty}\frac{1}{n}\divv(W^{\otimes n}),
\ds\ds\ds W\in\dom_{\hil}(\divv).
\end{align*}

\begin{lemma}\label{lemma:regularized div liminf}
(i) For any weakly subadditive quantum divergence $\divv$, its regularized version is weakly additive, and for any 
$W\in\dom_{\hil}(\divv)$,
\begin{align}\label{eq:regularized div liminf}
\lreg{\divv}(W)
=
\inf_{n\in\bN}\frac{1}{n}\divv(W^{\otimes n}).
\end{align}
If, moreover, $\divv$ is subadditive and $\divv(W)<+\infty$ then we further have
$\lreg{\divv}(W)=\ureg{\divv}(W)$.
\smallskip

(ii) For any weakly superadditive quantum divergence $\divv$, its regularized version is weakly additive, and for any 
$W\in\dom_{\hil}(\divv)$,
\begin{align}\label{eq:regularized div limsup}
\ureg{\divv}(W)
=
\sup_{n\in\bN}\frac{1}{n}\divv(W^{\otimes n}).
\end{align}
If, moreover, $\divv$ is superadditive and $\divv(W)>-\infty$ then we further have
$\lreg{\divv}(W)=\ureg{\divv}(W)$.
\end{lemma}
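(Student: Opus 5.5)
The plan is to apply Lemma \ref{lemma:subadditive Fekete} to the sequence $a_n:=\divv(W^{\otimes n})$, $n\in\bN$, for fixed $W\in\dom_{\hil}(\divv)$. By weak subadditivity, $W^{\otimes n}\in\dom_{\hil^{\otimes n}}(\divv)$, so the sequence is well-defined. We must check the index compatibility: $(W^{\otimes n})^{\otimes k}$ is $W^{\otimes kn}$ up to a reordering of tensor factors, which is a unitary, hence an isometry. Invariance under isometries then gives $a_{kn}=\divv((W^{\otimes n})^{\otimes k})\le k\,\divv(W^{\otimes n})=ka_n$. So $(a_n)$ is weakly subadditive, and item \ref{item:Fekete1} yields \eqref{eq:regularized div liminf} at once.

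For weak additivity of $\lreg{\divv}$, fix $m\in\bN$ and apply \eqref{eq:regularized div liminf} to $W^{\otimes m}$, identifying $(W^{\otimes m})^{\otimes n}$ with $W^{\otimes mn}$ as above. This gives
\begin{align*}
\lreg{\divv}(W^{\otimes m})=\inf_{n}\frac{1}{n}a_{mn}=m\inf_n\frac{1}{mn}a_{mn}.
\end{align*}
We then show $\inf_n a_{mn}/(mn)=\inf_k a_k/k$. The inequality ``$\ge$'' is trivial. For ``$\le$'', use $a_{mk}\le m a_k$, so $a_{mk}/(mk)\le a_k/k$ for every $k$. Hence $\lreg{\divv}(W^{\otimes m})=m\lreg{\divv}(W)$. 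The domain condition $W^{\otimes m}\in\dom$ is inherited from $\divv$, and the convention $0\cdot(\pm\infty)=0$ is harmless since $m\ge1$.

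If $\divv$ is moreover subadditive and $\divv(W)<+\infty$, then induction on subadditivity gives $a_{m+n}\le a_m+a_n$, after identifying $W^{\otimes(m+n)}$ with $W^{\otimes m}\otimes W^{\otimes n}$ via a unitary. It also gives $a_n\le n a_1<+\infty$ for all $n$. One small point is that the right-hand side must be well-defined, i.e.\ not $+\infty+(-\infty)$. Since all $a_n<+\infty$, the only possible issue is $-\infty$, in which case the right-hand side is $-\infty$. If some $a_n=-\infty$, then $a_{kn}=-\infty$ for all $k$, and all the regularizations are $-\infty$ anyway. This edge case is the main thing to check carefully; otherwise item \ref{item:Fekete2} applies and gives $\lreg{\divv}(W)=\ureg{\divv}(W)$.

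Part (ii) follows identically, using items \ref{item:Fekete3} and \ref{item:Fekete4}, or by applying (i) to $-\divv$. Isometric invariance is preserved under negation, and weak superadditivity and superadditivity become the corresponding sub-properties. This swaps $\liminf$ with $\limsup$ and $\inf$ with $\sup$.
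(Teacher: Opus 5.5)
Your proposal is correct and is essentially the paper's proof. Both reduce everything to Lemma \ref{lemma:subadditive Fekete} applied to $a_n=\divv(W^{\otimes n})$, and both get weak additivity from the same two ingredients: restricting to the subsequence $(a_{mn})_{n}$, and the bound $a_{mn}\le m\,a_n$. The paper writes this directly with $\liminf$, while you go through the infimum formula.

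There is one small slip in your edge case. Knowing $a_{kn}=-\infty$ only along multiples of $n$ controls the $\liminf$ but not the $\limsup$. Use instead $a_{n+j}\le a_n+a_j=-\infty$ for every $j$, which is legitimate since all $a_j<+\infty$; this gives $a_i=-\infty$ for all $i>n$. Alternatively, simply invoke item \ref{item:Fekete2} of Lemma \ref{lemma:subadditive Fekete}, which already allows the value $-\infty$.
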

\begin{proof}
The equalities in \eqref{eq:regularized div liminf} and \eqref{eq:regularized div limsup} and the assertions 
about subadditive/superadditive divergences are immediate from Lemma \ref{lemma:subadditive Fekete}, and hence we only need to prove the weak additivity claims. 

Assume that $\divv$ is weakly subadditive.
For any $W\in\dom_{\hil}(W)$ and any $k\in\bN$,
\begin{align*}
\lreg{\divv}(W^{\otimes k})
=
\liminf_{n\to+\infty}\frac{1}{n}\divv(W^{\otimes kn})
=k\liminf_{n\to+\infty}\frac{1}{kn}\divv(W^{\otimes kn})\ge k\lreg{\divv}(W),
\end{align*}
simply by definition,
while weak subadditivity yields
\begin{align*}
\lreg{\divv}(W^{\otimes k})
=
\liminf_{n\to+\infty}\frac{1}{n}\divv(W^{\otimes kn})
\le
\liminf_{n\to+\infty}\frac{1}{n}k\divv(W^{\otimes n})
=k\lreg{\divv}(W).
\end{align*}
This proves weak additivity. 
\end{proof}

\begin{definition}
For two divergences $\divv$ and $\divv'$, we use the notation 
$\divv\le\divv'$ if $\dom(\divv)=\dom(\divv')$ and for any $W\in\dom(\divv)$, 
$\divv(W)\le\divv'(W)$. 
\end{definition}

\subsection{Quantum relative entropies}
\label{sec:relentr}

The \ki{classical relative entropy} or \ki{Kullback-Leibler divergence} of two non-zero non-negative 
functions $\rho,\sigma$ on a finite set $\I$ is defined as
\begin{align*}
D^{\cl}(\rho\|\sigma):=
\begin{cases}
\sum_{i\in\I}\rho(i)\left[\logn\rho(i)-\logn\sigma(i)\right],&\rho\ll\sigma,\\
+\infty,&\text{otherwise},
\end{cases}
\end{align*}
where $\rho\ll\sigma$ means $\sigma(i)=0\imp\rho(i)=0$.
Moreover, we define 
$D^{\cl}(\rho\|0):=+\infty$ for all $\rho\gneq 0$.

By a \ki{quantum relative entropy} we mean a function
\begin{align*}
D^q:\,\medcup_{d\in\bN}\bz\B(\bC^d)\pne\times\B(\bC^d)\p\jz\to\bR\cup\{+\infty\}
\end{align*}
satisfying the following four properties:
\begin{itemize}
\item
\ki{Isometric invariance:} For any isometry $V:\,\bC^d\to\bC^{d'}$ and any 
$\rho,\sigma\in\B(\bC^d)\p$ with $\rho\ne 0$, 
$D^q(V\rho V^*\|V\sigma V^*)=D^q(\rho\|\sigma)$.
\item
\ki{Classical reduction:} For any two PSD operators $\rho=\sum_{k=1}^d\tilde\rho(k)\pr{\egy_{\{k\}}}$, 
$\sigma=\sum_{k=1}^d\tilde\sigma(k)\pr{\egy_{\{k\}}}$ with $\rho\ne 0$ that are diagonal in the canonical orthonormal basis 
$(\egy_{\{k\}})_{k=1}^d$ of $\bC^d$,
$D^q(\rho\|\sigma)=D^{\cl}\bz\bz\tilde\rho(k)\jz_{k=1}^d\|\bz\tilde\sigma(k)\jz_{k=1}^d\jz$.
\item
\ki{Positivity:} For any two states $\rho,\sigma\in\S(\bC^d)$, 
$D^q(\rho\|\sigma)\ge 0$ with equality if and only if $\rho=\sigma$. 
\item
\ki{Finiteness:} $D^q(\rho\|\sigma)=+\infty$ $\iff$ $\rho^0\nleq\sigma^0$.
\end{itemize}
Due to the isometric invariance, any quantum relative entropy $D^q$ 
is a $2=\{0,1\}$-variable quantum divergence as defined in Section \ref{sec:qdiv}.
In particular, it can be uniquely extended to pairs of 
non-zero PSD operators on any finite-dimensional Hilbert space $\hil$ by defining
$D^q(\rho\|\sigma):=D^q(V\rho V^*\|V\sigma V^*)$ for any $\rho,\sigma\in\B(\hil)\pne$, where $V$ is an arbitrary unitary from $\hil$ to $\bC^{\dim\hil}$. The resulting quantity is invariant under isometries between arbitrary finite-dimensional Hilbert spaces, it has the reduction property in the sense that if 
$\rho,\sigma\in\B(\hil)\pne$ are both diagonal in some orthonormal basis $(e_i)_{i\in\I}$ then 
$D^q(\rho\|\sigma)=D^{\cl}((\inner{e_i}{\rho e_i})_{i\in\I}\|(\inner{e_i}{\sigma e_i})_{i\in\I})$,
and it satisfies the obvious reformulations of the positivity and the finiteness conditions.

For any quantum relative entropy $D^q$, its \ki{normalized version} $D_1^q$ is defined 
for any $\rho,\sigma\in\B(\hil)\p$ with $\rho\ne 0$ as
\begin{align*}
D_1^q(\rho\|\sigma):=
\frac{1}{\Tr\rho}D^q(\rho\|\sigma).
\end{align*}

The key properties of a quantum relative entropy $D^q$ are defined via the properties of $D_1^q$ as a $2$-variable quantum divergence. In particular,
a quantum relative entropy $D^q$ is called 
\begin{itemize}
\item
\ki{monotone} if 
$D_1^q$ is monotone, or equivalently,
for any $\rho,\sigma\in\B(\hil)\pne$ and CPTP 
(completely positive and trace-preserving) map 
$\map:\,\B(\hil)\to\B(\kil)$,
$D^q(\map(\rho)\|\map(\sigma))\le D^q(\rho\|\sigma)$;
\item
\ki{weakly subadditive} if $D_1^q$ is weakly subadditive, or equivalently,
for any $\rho,\sigma\in\B(\hil)\p$ with $\rho\ne 0$, and any $n\in\bN$, 
$D^q(\rho^{\otimes n}\|\sigma^{\otimes n})\le n(\Tr\rho)^{n-1}D^q(\rho\|\sigma)$;
\item
\ki{weakly additive} if $D_1^q$ is weakly additive, or equivalently,
for any $\rho,\sigma\in\B(\hil)\p$ with $\rho\ne 0$, and any $n\in\bN$, 
$D^q(\rho^{\otimes n}\|\sigma^{\otimes n})=n(\Tr\rho)^{n-1}D^q(\rho\|\sigma)$;
\item
\ki{additive} if $D_1^q$ is additive, or equivalently, for any $\rho_1,\sigma_1\in\B(\hil_1)\pne$ and 
$\rho_2,\sigma_2\in\B(\hil_2)\pne$,
$D^q(\rho_1\otimes\rho_2\|\sigma_1\otimes\sigma_2)=
(\Tr\rho_2)D^q(\rho_1\|\sigma_1)+(\Tr\rho_1)D^q(\rho_2\|\sigma_2)$;
\end{itemize}

Among the monotone and weakly additive quantum relative entropies there is a smallest one
\cite{HP}, the 
\ki{Umegaki relative entropy} \cite{Umegaki}, defined for $\rho,\sigma\in\B(\hil)\p$,
$\rho\ne 0$, as
\begin{align*}
\DU(\rho\|\sigma):=\begin{cases}
\Tr\rho(\logn\rho-\logn\sigma),&\rho^0\le\sigma^0,\\
+\infty,&\text{otherwise},
\end{cases}
\end{align*}
and a largest one \cite{Matsumoto_newfdiv}, the \ki{Belavkin-Staszewski relative entropy} \cite{BS}, 
defined for $\rho,\sigma\in\B(\hil)\p$ with $\rho\ne 0$ as
\begin{align*}
\DBS(\rho\|\sigma):=\begin{cases}
\Tr\rho\logn\bz\rho^{1/2}\sigma\inv\rho^{1/2}\jz,&\rho^0\le\sigma^0,\\
+\infty,&\text{otherwise}.
\end{cases}
\end{align*}
We will also use the \ki{measured relative entropy} \cite{Donald1986}, defined 
for any $\rho,\sigma\in\B(\hil)\p$ with $\rho\ne 0$ as
\begin{align*}
D^{\meas}(\rho\|\sigma):=\sup\{D^{\cl}\bz\M(\rho)\|\M(\sigma)\jz:\,M\in\povm(\hil,[n]),\,n\in\bN\},
\end{align*}
which is minimal among all monotone quantum relative entropies. 

We summarize the above extremality properties of the above quantum relative entropies with a slightly stronger statement on the extremality of $\DU$ and $\DBS$, as follows:

\begin{lemma}\label{lemma:sandwich}
$D^{\meas},\DU$ and $\DBS$ are monotone quantum relative entropies, and 
$\DU$ and $\DBS$ are also additive.  
Any monotone quantum relative entropy $D^q$ satisfies
\begin{align}\label{eq:monotone relentr sandwich}
D^{\meas}\le D^q\le D^{\max},
\end{align}
and if $D^q$ is also weakly subadditive then 
\begin{align*}
\DU\le D^q\le D^{\max}.
\end{align*}
\end{lemma}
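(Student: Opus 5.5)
This lemma is essentially a summary of known results, so the plan is to assemble it from the literature and from the definitions, checking the few non-trivial steps.

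\emph{Quantum relative entropies and monotonicity.} First I would verify that $D^{\meas}$, $\DU$ and $\DBS$ satisfy the four defining properties.
\begin{itemize}
\item Isometric invariance and classical reduction are immediate from the defining formulas. For $D^{\meas}$, classical reduction follows from monotonicity of $D^{\cl}$ under stochastic maps together with the measurement in the common eigenbasis.
\item Finiteness follows from the support conditions in the formulas; for $D^{\meas}$, one measures with $\{\sigma^0,I-\sigma^0\}$.
\item Positivity follows from Klein's inequality for $\DU$. For $D^{\meas}$, a suitable two-outcome measurement distinguishes $\rho\ne\sigma$. For $\DBS$ it follows from $\DU\le\DBS$.
\end{itemize}
Monotonicity of $D^{\meas}$ holds because a POVM composed with a CPTP map is again a POVM. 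Monotonicity of $\DU$ is Lindblad--Uhlmann. Monotonicity of $\DBS$ follows from its representation as the maximal $f$-divergence with $f(t)=t\log t$ \cite{Matsumoto_newfdiv}, or from joint operator convexity of the corresponding perspective function. Additivity of $\DU$ and $\DBS$ is a direct computation using \eqref{eq:logn mult} and $(\rho_1\otimes\rho_2)^{1/2}(\sigma_1\otimes\sigma_2)\inv(\rho_1\otimes\rho_2)^{1/2}=(\rho_1^{1/2}\sigma_1\inv\rho_1^{1/2})\otimes(\rho_2^{1/2}\sigma_2\inv\rho_2^{1/2})$.

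\emph{Sandwich \eqref{eq:monotone relentr sandwich}.}
\begin{itemize}
\item Lower bound: for any POVM $M$, monotonicity and classical reduction give $D^{\cl}(\M(\rho)\|\M(\sigma))=D^q(\M(\rho)\|\M(\sigma))\le D^q(\rho\|\sigma)$. Taking the supremum over $M$ yields $D^{\meas}\le D^q$.
\item Upper bound: this is Matsumoto's argument. For $\rho^0\le\sigma^0$, one constructs a classical pair $(p,q)$ and a CPTP map $\Gamma$ with $\Gamma(p)=\rho$, $\Gamma(q)=\sigma$ and $D^{\cl}(p\|q)=\DBS(\rho\|\sigma)$. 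Take $q$ as the spectral weights of $T=\sigma^{-1/2}\rho\sigma^{-1/2}$ relative to $\sigma$, and $p(i)=t_iq(i)$. Then monotonicity and classical reduction give $D^q(\rho\|\sigma)\le D^{\cl}(p\|q)=\DBS(\rho\|\sigma)$. The infinite case is trivial by finiteness.
\end{itemize}

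\emph{Weakly subadditive case.} The lower bound $\DU\le D^q$ is the step requiring asymptotics. Weak subadditivity gives
\[
D^q(\rho\|\sigma)\ge\frac1n D^q(\rho^{\otimes n}\|\sigma^{\otimes n})\ge\frac1n D^{\meas}(\rho^{\otimes n}\|\sigma^{\otimes n}),
\]
assuming states by normalization. By Hiai--Petz \cite{HP}, $\frac1nD^{\meas}(\rho^{\otimes n}\|\sigma^{\otimes n})\to\DU(\rho\|\sigma)$. The cleanest route to that limit is the pinching by the spectral projections of $\sigma^{\otimes n}$: the pinching costs at most $\log(\text{number of distinct eigenvalues})=O(\log n)$, and the pinched state commutes with $\sigma^{\otimes n}$, so it can be measured classically.

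This Hiai--Petz asymptotic step is the main obstacle, and I would cite it rather than reprove it. For non-normalized inputs, the same argument applies after scaling, using $D^q(c\rho\|\sigma)$ homogeneity via classical reduction on the trace part, or simply by stating everything for $D_1^q$.
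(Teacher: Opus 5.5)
Your proposal is correct and is essentially the paper's argument. The paper cites the literature for the properties of $D^{\meas}$, $\DU$, $\DBS$ and for $D^{\meas}\le D^q\le\DBS$; your sketches of these, including Matsumoto's construction, check out. It then proves $\DU\le D^q$ exactly as you do, from $D_1^{\meas}\le D_1^q$ on $(\rho^{\otimes n},\sigma^{\otimes n})$, weak subadditivity and the Hiai--Petz limit; your direct use of $\frac1n D_1^q(\rho^{\otimes n}\|\sigma^{\otimes n})\le D_1^q(\rho\|\sigma)$ even skips the paper's detour through $\liminf=\inf$.

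One caveat concerns non-normalized pairs. The axioms give no scaling homogeneity for a general monotone $D^q$. For example, setting $D^q:=\DU$ on pairs with $\Tr\rho=\Tr\sigma$ and $D^q:=\DBS$ otherwise gives a monotone, weakly additive quantum relative entropy that is not homogeneous. So you must use your second option: work with $D_1^q$ throughout and transfer the Hiai--Petz limit via the exact scaling $D(c\rho\|d\sigma)=cD(\rho\|\sigma)+c\Tr\rho\log(c/d)$. This scaling is needed only for $D=D^{\meas}$ and $D=\DU$, which is what the paper does.
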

\begin{proof}
The properties stated for $D^{\meas}$, $\DU$ and $\DBS$ are well known, as is 
\eqref{eq:monotone relentr sandwich}; we refer to 
\cite{BS,HP,Matsumoto_newfdiv} for a detailed discussion.
Assume now that $D^q$ is monotone and weakly subadditive. Then for any 
$\rho,\sigma\in\B(\hil)\p$ with $\rho\ne 0$ and any $n\in\bN$,
\begin{align}\label{eq:regrelentr proof1}
D_1^{\meas}\bz\rho^{\otimes n}\|\sigma^{\otimes n}\jz
\le 
D_1^q\bz\rho^{\otimes n}\|\sigma^{\otimes n}\jz
\le
D^{\max}_1\bz\rho^{\otimes n}\|\sigma^{\otimes n}\jz,
\end{align}
according to \eqref{eq:monotone relentr sandwich}.
Thus,
\begin{align*}
D^{\Um}_1(\rho\|\sigma)
&=
\lim_{n\to+\infty}\frac{1}{n}D^{\meas}_1(\rho^{\otimes n}\|\sigma^{\otimes n})
=
\liminf_{n\to+\infty}\frac{1}{n}D^{\meas}_1(\rho^{\otimes n}\|\sigma^{\otimes n})\\
&\le
\liminf_{n\to+\infty}\frac{1}{n}D^q_1(\rho^{\otimes n}\|\sigma^{\otimes n})
=
\inf_{n\in\bN}\frac{1}{n}D^q_1(\rho^{\otimes n}\|\sigma^{\otimes n})
\le
D^q_1(\rho\|\sigma)\\
&\le
D^{\max}_1(\rho\|\sigma),
\end{align*}
where the first equality is due to \cite{HP}, the second equality is trivial, 
the first inequality follows from \eqref{eq:regrelentr proof1},
the third equality follows from Lemma \ref{lemma:regularized div liminf}
due to the assumption that $D^q$ is weakly subadditive,
the second inequality is trivial, and the last inequality follows from 
\eqref{eq:monotone relentr sandwich}.
Multiplying over by $\Tr\rho$ yields $\DU(\rho\|\sigma)\le D^q(\rho\|\sigma)\le\DBS(\rho\|\sigma)$, as required.
\end{proof}


Various other notions of additive and monotone quantum relative entropies were defined e.g., in 
\cite{mosonyi2022geometric} and \cite{CapelRico_fdiv} by considering different interpolations between the 
Umegaki and the Belavkin-Staszewski relative entropies.
In particular, for any $\gamma\in(0,1)$, the \ki{$\gamma$-weighted geometric Umegaki relative entropy}
$D^{\Um,\#_{\gamma}}$ was defined in \cite{mosonyi2022geometric} for arbitrary 
$\rho,\sigma\in\B(\hil)\p$ with $\rho\ne 0$ as
\begin{align*}
D^{\Um,\#_{\gamma}}(\rho\|\sigma)
&:=\frac{1}{1-\gamma}\DU(\rho\|\sigma\#_{\gamma}\rho)\\
&=
\begin{cases}
\frac{1}{1-\gamma}\Tr\rho\left[\logn\rho-\logn\bz\rho^{1/2}\bz\rho^{-1/2}\acc{\sigma}{\rho}\rho^{-1/2}\jz^{1-\gamma}\rho^{1/2}\jz\right],
&\rho^0\le\sigma^0,\\
+\infty,&\text{otherwise}.
\end{cases}
\end{align*}
The above definition can be extended to $\gamma=0$, yielding
\begin{align*}
D^{\Um,\#_{0}}(\rho\|\sigma):=\DU(\rho\|\sigma\#_{0}\rho)
=
\left\{
\begin{array}{ll}
\Tr\rho\left[\logn\rho-\logn\bz\acc{\sigma}{\rho}\jz\right],
&\rho^0\le\sigma^0,\\
+\infty,&\text{otherwise}.
\end{array}\right\}
=
\lim_{\gamma\searrow 0}D^{\Um,\#_{\gamma}}(\rho\|\sigma).
\end{align*}
As it was shown in \cite{mosonyi2022geometric}, $D^{\Um,\#_{\gamma}}$ is monotone and additive for every 
$\gamma\in(0,1)$, and hence so is $D^{\Um,\#_{0}}$, too.
As a consequence, 
\begin{align*}
\DU\le D^{\Um,\#_{\gamma}}\le\DBS,\ds\ds\ds\gamma\in[0,1).
\end{align*}
We define
\begin{align*}
D^{\Um,\#_1}:=\DBS,
\end{align*}
justified by Proposition \ref{prop:sharp limit 1} below.

We will need the following lemma in the next section.

\begin{lemma}\label{lemma:sharp0 bound}
Let $\rho,\sigma,S\in\B(\hil)\pne$. Then
\begin{align*}
\rho^0\le S^0\ds\imp\ds 
D^{\Um,\#_\gamma}\bz\rho\|\sigma\jz
=
D^{\Um,\#_\gamma}\bz\rho\|\acc{\sigma}{S}\jz
\ge
\DU\bz\rho\|\acc{\sigma}{S}\jz,\ds\ds\ds\gamma\in[0,1].
\end{align*}
\end{lemma}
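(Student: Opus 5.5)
The plan has two parts: the equality, and then the inequality.

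\emph{Step 1: the equality.} Assume $\rho^0\le S^0$. First I would show that $\acc{\sigma}{\rho}=\acc{\bz\acc{\sigma}{S}\jz}{\rho}$. The operator $X:=\acc{(\acc{\sigma}{S})}{\rho}$ satisfies $X\le\acc{\sigma}{S}\le\sigma$ and $X^0\le\rho^0$, so the maximality in the definition gives $X\le\acc{\sigma}{\rho}$. Conversely, $Y:=\acc{\sigma}{\rho}$ satisfies $Y\le\sigma$ and $Y^0\le\rho^0\le S^0$. Hence $Y\le\acc{\sigma}{S}$ by maximality. Since also $Y^0\le\rho^0$, a second application of maximality gives $Y\le X$.

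Next I would check that the support conditions match: $\rho^0\le\sigma^0$ holds if and only if $\rho^0\le(\acc{\sigma}{S})^0$. The backward direction follows from $\acc{\sigma}{S}\le\sigma$. For the forward direction, \eqref{eq:ac explicit} gives $(\acc{\sigma}{\rho})^0=\rho^0\wedge\sigma^0=\rho^0$. Combining this with $\acc{\sigma}{\rho}\le\acc{\sigma}{S}$ (by \eqref{eq:acc monotone}, or the maximality argument above) gives $\rho^0\le(\acc{\sigma}{S})^0$.

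For $\gamma\in[0,1)$, the explicit formula for $D^{\Um,\#_\gamma}(\rho\|\cdot)$ depends on its second argument only through the support condition and through $\acc{\cdot}{\rho}$. The equality therefore follows.

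For $\gamma=1$, i.e.\ $\DBS$, I would use $\rho^{1/2}\sigma\inv\rho^{1/2}$. When $\rho^0\le\sigma^0$, this should equal $\rho^{1/2}(\acc{\sigma}{\rho})\inv\rho^{1/2}$. The reason is that \eqref{eq:ac explicit} with $P=\rho^0$ gives $(\acc{\sigma}{\rho})\inv=\rho^0\sigma\inv\rho^0$, using $\rho^0\wedge\sigma^0=\rho^0$. With the identity $\acc{\sigma}{\rho}=\acc{(\acc{\sigma}{S})}{\rho}$, the equality then holds for $\gamma=1$ as well. Alternatively, for $\gamma=1$ one can invoke the limit in Proposition \ref{prop:sharp limit 1}. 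I would prefer the direct formula, since it avoids a forward reference.

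\emph{Step 2: the inequality.} This follows from $D^{\Um,\#_\gamma}\ge\DU$, which was stated above for all $\gamma\in[0,1]$ (by Lemma \ref{lemma:sandwich} for $\gamma=1$). I apply it to the pair $(\rho,\acc{\sigma}{S})$. If $\acc{\sigma}{S}=0$, both sides are $+\infty$ by the finiteness convention, and the inequality holds trivially.

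The main obstacle is the bookkeeping of supports, in particular showing $(\acc{\sigma}{\rho})^0=\rho^0\wedge\sigma^0$ from \eqref{eq:ac explicit}. This is immediate from the second expression in \eqref{eq:ac explicit}, since the generalized inverse of an operator supported on $\rho^0\wedge\sigma^0$ and invertible there has the same support.
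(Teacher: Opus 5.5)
Your proof is correct and follows essentially the paper's route: the identity $\acc{\sigma}{\rho}=\acc{(\acc{\sigma}{S})}{\rho}$ settles $\gamma\in[0,1)$, the explicit formula \eqref{eq:ac explicit} settles $\gamma=1$, and minimality of $\DU$ gives the inequality. For $\gamma=1$ the paper applies \eqref{eq:ac explicit} directly to $\acc{\sigma}{S}$, while you route through $\acc{\sigma}{\rho}$, which is equivalent. Your maximality argument and the forward support direction fill in details the paper leaves implicit.

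Keep the direct formula for $\gamma=1$. The alternative via Proposition~\ref{prop:sharp limit 1} is not merely a forward reference but circular, since that proposition's proof uses this lemma.
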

\begin{proof}
If $\rho^0\nleq\sigma^0$ then $\rho^0\nleq(\acc{\sigma}{S})^0$, whence
$D^{\Um,\#_\gamma}\bz\rho\|\sigma\jz=+\infty=D^{\Um,\#_\gamma}\bz\rho\|\acc{\sigma}{S}\jz$, 
$\gamma\in[0,1]$,
and hence the statement holds trivially.
Hence, for the rest we assume that $\rho^0\le\sigma^0$.

Note that 
\begin{align*}
\acc{\sigma}{\rho}=\acc{\bz\acc{\sigma}{S}\jz}{\rho},
\end{align*}
whence for every $\gamma\in[0,1)$,
\begin{align*}
\sigma\#_{\gamma}\rho
=
\rho^{1/2}\bz\rho^{-1/2}\acc{\sigma}{\rho}\rho^{-1/2}\jz^{1-\gamma}\rho^{1/2}
=
\rho^{1/2}\bz\rho^{-1/2}\acc{\bz\acc{\sigma}{S}\jz}{\rho}\rho^{-1/2}\jz^{1-\gamma}\rho^{1/2}
=
(\acc{\sigma}{S})\#_{\gamma}\rho.
\end{align*}
This proves
$D^{\Um,\#_\gamma}\bz\rho\|\sigma\jz
=
D^{\Um,\#_\gamma}\bz\rho\|\acc{\sigma}{S}\jz$
for every $\gamma\in[0,1)$. 
(This argument also works without assuming $\rho^0\le\sigma^0$.)

For $\gamma=1$, note that by \eqref{eq:ac explicit}, $\acc{\sigma}{S}=((\sigma^0\wedge S^0)\sigma\inv (\sigma^0\wedge S^0))\inv$, whence
\begin{align*}
\rho^{1/2}(\acc{\sigma}{S})\inv\rho^{1/2}
=
\underbrace{\rho^{1/2}(\sigma^0\wedge S^0)}_{=\rho^{1/2}}\sigma\inv \underbrace{(\sigma^0\wedge S^0)\rho^{1/2}}_{=\rho^{1/2}}
=
\rho^{1/2}\sigma\inv\rho^{1/2}.
\end{align*}
Hence, 
\begin{align*}
\DBS(\rho\|\acc{\sigma}{S})
=
\Tr\rho\logn\bz\rho^{1/2}(\acc{\sigma}{S})\inv\rho^{1/2}\jz
=
\Tr\rho\logn\bz\rho^{1/2}\sigma\inv\rho^{1/2}\jz
=
\DBS(\rho\|\sigma).
\end{align*}

Finally, the inequalities
$D^{\Um,\#_\gamma}\bz\rho\|\acc{\sigma}{S}\jz
\ge
\DU\bz\rho\|\acc{\sigma}{S}\jz$, $\gamma\in[0,1]$,
follow by the minimality of $\DU$ and the fact that all $D^{\Um,\#_{\gamma}}$ are monotone and additive.
\end{proof}

\begin{rem}
The inequality 
$D^{\Um,\#_0}\bz\rho\|\sigma\jz\ge\DU\bz\rho\|\acc{\sigma}{S}\jz$ can be proved in an alternative way as 
follows. When $\rho^0\nleq\sigma^0$, the inequality holds trivially, and hence for the rest we assume that 
$\rho^0\le\sigma^0$.
According to \eqref{eq:acc monotone}, $\rho^0\le S^0$ implies that $\acc{\sigma}{\rho}\le\acc{\sigma}{S}$, and hence, by Lemma \ref{lemma:logn monotone}, 
\begin{align*}
\rho^0(\logn\acc{\sigma}{\rho})\rho^0
\le
\rho^0(\logn\acc{\sigma}{S})\rho^0.
\end{align*}
Thus,
\begin{align*}
D^{\Um,\#_0}\bz\rho\|\sigma\jz
=
\Tr\rho\logn\rho-\Tr\rho\logn\acc{\sigma}{\rho}
\ge
\Tr\rho\logn\rho-\Tr\rho\logn\acc{\sigma}{S}
=
\DU\bz\rho\|\acc{\sigma}{S}\jz.
\end{align*}
\end{rem}

\begin{prop}\label{prop:sharp limit 1}
For any $\rho,\sigma\in\B(\hil)\p$ with $\rho\ne 0$, 
\begin{align*}
D^{\Um,\#_{\gamma}}(\rho\|\sigma)\nearrow \DBS(\rho\|\sigma)\ds\ds\text{as}\ds\ds\gamma\nearrow 1.
\end{align*}
\end{prop}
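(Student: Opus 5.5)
The plan is to handle the trivial case first, then prove monotonicity in $\gamma$ via the reiteration property of Kubo--Ando weighted geometric means, and finally identify the limit by a first-order expansion. If $\rho^0\nleq\sigma^0$, then every $D^{\Um,\#_\gamma}(\rho\|\sigma)$ and $\DBS(\rho\|\sigma)$ equals $+\infty$, and there is nothing to prove. So assume $\rho^0\le\sigma^0$. By Lemma \ref{lemma:sharp0 bound} with $S=\rho$, we may replace $\sigma$ by $\tau:=\acc{\sigma}{\rho}$ without changing any of the quantities involved, including $\DBS$. Since $\rho^0\wedge\sigma^0=\rho^0$, \eqref{eq:ac explicit} gives $\tau=(\rho^0\sigma\inv\rho^0)\inv$, which has support exactly $\rho^0$. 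By isometric invariance we then restrict to $\ran\rho^0$, where $\rho$ and $\tau$ are both invertible.

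\emph{Monotonicity.} I would use the reiteration property of weighted geometric means. For $0\le\gamma<\gamma'<1$, choose $s\in(0,1)$ with $1-\gamma'=(1-s)(1-\gamma)$. Then, for invertible arguments,
\begin{align*}
\sigma\#_{\gamma'}\rho=(\sigma\#_{\gamma}\rho)\#_{s}\rho .
\end{align*}
This follows directly from the formula $\rho^{1/2}(\rho^{-1/2}\tau\rho^{-1/2})^{1-\gamma}\rho^{1/2}$: conjugating $\sigma\#_\gamma\rho$ by $\rho^{-1/2}$ gives $X^{1-\gamma}$ with $X=\rho^{-1/2}\tau\rho^{-1/2}$, and raising to the power $1-s$ gives $X^{1-\gamma'}$. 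Put $\tau_\gamma:=\sigma\#_\gamma\rho$. Then
\begin{align*}
D^{\Um,\#_{\gamma'}}(\rho\|\sigma)=\frac{1}{1-\gamma}\cdot\frac{1}{1-s}\DU(\rho\|\tau_\gamma\#_s\rho)
=\frac{1}{1-\gamma}D^{\Um,\#_s}(\rho\|\tau_\gamma)\ge\frac{1}{1-\gamma}\DU(\rho\|\tau_\gamma)=D^{\Um,\#_\gamma}(\rho\|\sigma),
\end{align*}
where the inequality is the minimality of $\DU$ among monotone additive relative entropies, i.e.\ the last part of Lemma \ref{lemma:sharp0 bound}. The case $\gamma=0$ is included, because $\sigma\#_0\rho=\tau$ on $\ran\rho^0$. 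Together with the already established bound $D^{\Um,\#_\gamma}\le\DBS$, this shows that the limit exists and is at most $\DBS(\rho\|\sigma)$.

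\emph{Identification of the limit.} Set $t:=1-\gamma$ and $f(t):=\Tr\rho\log(\rho^{1/2}X^{t}\rho^{1/2})$. Then $f(0)=\Tr\rho\log\rho$ and $D^{\Um,\#_\gamma}(\rho\|\sigma)=(f(0)-f(t))/t$. Since everything is invertible, $f$ is differentiable at $0$, so the limit as $t\searrow0$ is $-f'(0)$. Let $A(t)=\rho^{1/2}X^t\rho^{1/2}$, so that $A'(0)=\rho^{1/2}(\log X)\rho^{1/2}$. The Fr\'echet derivative $\mathrm{D}\log[\rho]$ is self-adjoint with respect to the Hilbert--Schmidt inner product and maps $\rho$ to $I$, since it acts as the divided difference $(\log a-\log b)/(a-b)$ in the eigenbasis of $\rho$. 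Hence
\begin{align*}
f'(0)=\Tr\rho\,\mathrm{D}\log[\rho](A'(0))=\Tr A'(0)=\Tr\rho\log X=-\Tr\rho\log\bz\rho^{1/2}\tau\inv\rho^{1/2}\jz .
\end{align*}
The last equality uses $X\inv=\rho^{1/2}\tau\inv\rho^{1/2}$ and $\Tr\rho\log X\inv=\Tr\rho\log(\rho^{1/2}\tau\inv\rho^{1/2})$, which holds because both are unitarily related through the polar decomposition. Thus the limit equals $\DBS(\rho\|\tau)=\DBS(\rho\|\sigma)$.

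The main obstacle is this derivative computation, in particular justifying $\Tr\rho\log X=-\Tr\rho\log(\rho^{1/2}\tau\inv\rho^{1/2})$ carefully. I would write $\rho^{1/2}X\inv\rho^{1/2}$ versus $X\inv$ and use the identity $\Tr\rho\log(\rho^{1/2}Y\rho^{1/2})$-type manipulations only where all operators are invertible. Monotonicity itself is essentially free once reiteration is observed.
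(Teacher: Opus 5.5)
Your proof is correct, and it takes a more self-contained route than the paper's. Your first step matches the paper exactly: the trivial case $\rho^0\nleq\sigma^0$, then the reduction $\sigma\to\acc{\sigma}{\rho}$ via Lemma \ref{lemma:sharp0 bound} with $S=\rho$, using $(\acc{\sigma}{\rho})^0=\rho^0$. After that the paper simply cites \cite[Proposition 4.14]{mosonyi2022geometric} for the equal-support case, which supplies both the monotonicity in $\gamma$ and the value of the limit. You prove the equal-support case yourself. For monotonicity you use the reiteration identity $\sigma\#_{\gamma'}\rho=(\sigma\#_\gamma\rho)\#_s\rho$ with $1-\gamma'=(1-s)(1-\gamma)$, combined with $\DU\le D^{\Um,\#_s}$. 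For the limit you compute it as $-f'(0)$, using that $\mathrm{D}\log[\rho]$ is self-adjoint in the Hilbert--Schmidt inner product and that $\mathrm{D}\log[\rho](\rho)=I$. Both steps check out, including $\gamma=0$. Your monotonicity argument is not fully independent of the earlier paper, because $\DU\le D^{\Um,\#_s}$ rests on the monotonicity and additivity of $D^{\Um,\#_s}$ proved there. Still, it neatly reduces monotonicity in $\gamma$ to that single fact. The derivative computation also identifies the limit directly, so you do not need the a priori bound $D^{\Um,\#_\gamma}\le\DBS$; it follows. The paper's route buys brevity, while yours exposes the mechanism. Finally, the step you flagged as the main obstacle is not one. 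On $\ran\rho^0$ you literally have $X\inv=(\rho^{-1/2}\tau\rho^{-1/2})\inv=\rho^{1/2}\tau\inv\rho^{1/2}$. Hence $-\Tr\rho\log X=\Tr\rho\log(\rho^{1/2}\tau\inv\rho^{1/2})=\DBS(\rho\|\tau)$ holds as an identity, with no polar-decomposition argument needed.
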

\begin{proof}
The case where $\rho^0=\sigma^0$ follows from \cite[Proposition 4.14]{mosonyi2022geometric},
The case $\rho^0\le\sigma^0$ follows from this and Lemma \ref{lemma:sharp0 bound} by noting that 
$\rho^0\le\sigma^0$ implies
$\rho^0=(\acc{\sigma}{\rho})^0$.
The case $\rho^0\nleq\sigma^0$ is trivial.
\end{proof}

It was shown in \cite[Proposition 4.14]{mosonyi2022geometric} that for any $\rho,\sigma\in\B(\hil)\pne$, 
$(0,1]\ni\gamma\mapsto D^{\Um,\#_{\gamma}}(\rho\|\sigma)$ is monotone increasing, which we already used in Proposition \ref{prop:sharp limit 1} above. In fact, a stronger strict monotonicity also holds as follows.

\begin{prop}\label{prop:gamma-relentr mon}
Let $\rho,\sigma\in\B(\hil)\pne$ be such that $\rho^0\le\sigma^0$, and let $f(\gamma):=D^{\Um,\#_{\gamma}}(\rho\|\sigma)$, $\gamma\in[0,1]$.
We have the following.
\begin{enumerate}
\item\label{item:gamma-relentr mon1}
Either $f$ is strictly increasing on $[0,1]$ or it is constant.
\item\label{item:gamma-relentr mon2}
If $\rho^0=\sigma^0$ then $f$ is constant if and only if $\rho$ and $\sigma$ commute.
\item\label{item:gamma-relentr mon3}
If $\rho$ and $\sigma$ do not commute then $\DU(\rho\|\sigma)<D^{\Um,\#_{\gamma}}(\rho\|\sigma)$ for every 
$\gamma\in(0,1]$.
\end{enumerate}
\end{prop}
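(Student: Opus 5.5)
The plan is to reduce everything to the case $\rho^0=\sigma^0$, where the Kubo–Ando means have an explicit analytic form, and then get the dichotomy in \ref{item:gamma-relentr mon1} from analyticity combined with the monotonicity already known from \cite[Proposition 4.14]{mosonyi2022geometric}. For the reduction I would take $S:=\rho$ in Lemma \ref{lemma:sharp0 bound}. This gives $f(\gamma)=D^{\Um,\#_\gamma}(\rho\|\acc{\sigma}{\rho})$ for all $\gamma\in[0,1]$. Since $\rho^0\le\sigma^0$, we have $(\acc{\sigma}{\rho})^0=\rho^0$. Restricting to $\ran\rho^0$ by isometric invariance, we may therefore assume that $\rho$ and $\sigma$ are invertible. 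Then, with $T:=\rho^{-1/2}\sigma\rho^{-1/2}$ and $t:=1-\gamma$,
\begin{align*}
f(\gamma)=\frac{h(t)}{t},\ds\ds h(t):=\Tr\rho\log\rho-\Tr\rho\log\bz\rho^{1/2}T^{t}\rho^{1/2}\jz,\ds\ds t\in(0,1],
\end{align*}
and $f(1)=\DBS(\rho\|\sigma)$ by Proposition \ref{prop:sharp limit 1}.

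Because $\rho^{1/2}T^t\rho^{1/2}$ is positive definite and real-analytic in $t\in\bR$, the function $h$ is real-analytic on a neighbourhood of $[0,1]$. Since $h(0)=0$, the quotient $h(t)/t$ extends to a real-analytic function on a neighbourhood of $[0,1]$. Its value at $t=0$ is $-\frac{d}{dt}\Tr\rho\log(\rho^{1/2}T^t\rho^{1/2})\big|_{t=0}$, and this should match $\DBS(\rho\|\sigma)$, consistently with Proposition \ref{prop:sharp limit 1}. Hence $f$ is real-analytic on $[0,1]$, and by \cite[Proposition 4.14]{mosonyi2022geometric} it is monotone increasing. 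Suppose $f$ is not strictly increasing. Then $f(\gamma_1)=f(\gamma_2)$ for some $\gamma_1<\gamma_2$, so $f$ is constant on $[\gamma_1,\gamma_2]$. By the identity theorem $f$ is then constant on all of $[0,1]$. This proves \ref{item:gamma-relentr mon1}.

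For \ref{item:gamma-relentr mon2}, with $\rho^0=\sigma^0$ no reduction is needed. If $\rho$ and $\sigma$ commute, then $\sigma\#_\gamma\rho=\rho^{\gamma}\sigma^{1-\gamma}$, and a direct computation gives $f(\gamma)=\DU(\rho\|\sigma)$ for every $\gamma$. Conversely, if $f$ is constant then $f(0)=f(1)$. Here $f(0)=\DU(\rho\|\acc{\sigma}{\rho})=\DU(\rho\|\sigma)$, because $\acc{\sigma}{\rho}=\sigma$ when the supports coincide. So $\DU(\rho\|\sigma)=\DBS(\rho\|\sigma)$, and I would invoke the known equality case of the Hiai–Petz inequality $\DU\le\DBS$, namely equality holds only for commuting $\rho,\sigma$. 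As a fallback, one can compare the first derivatives of $h$ at $t=0$ and $t=1$, which gives the same conclusion via strict concavity of the logarithm on noncommuting pairs.

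For \ref{item:gamma-relentr mon3}, write $\tilde\sigma:=\acc{\sigma}{\rho}$, so that $f(\gamma)=D^{\Um,\#_\gamma}(\rho\|\tilde\sigma)$ with $\tilde\sigma^0=\rho^0$. There are two cases.
\begin{itemize}
\item If $\tilde\sigma$ does not commute with $\rho$, then by \ref{item:gamma-relentr mon1}--\ref{item:gamma-relentr mon2} $f$ is strictly increasing. Hence for $\gamma>0$ we get $f(\gamma)>f(0)=\DU(\rho\|\tilde\sigma)\ge\DU(\rho\|\sigma)$. The last inequality follows from $\tilde\sigma\le\sigma$ and Lemma \ref{lemma:logn monotone}, which give $\rho^0(\logn\tilde\sigma)\rho^0\le\rho^0(\logn\sigma)\rho^0$.
\item If $\tilde\sigma$ commutes with $\rho$, then $f\equiv\DU(\rho\|\tilde\sigma)$, and I must show $\DU(\rho\|\tilde\sigma)>\DU(\rho\|\sigma)$ using only that $\rho$ and $\sigma$ do not commute.
\end{itemize}
The second case is the main obstacle. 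I would use the Schur-complement formula \eqref{eq:ac explicit} with $P=\rho^0$:
\begin{align*}
\tilde\sigma=P\sigma P-B,\ds\ds B:=P\sigma P^\perp(P^\perp\sigma P^\perp)\inv P^\perp\sigma P\ge 0.
\end{align*}
Then I would chain $\logn\tilde\sigma\le\logn(P\sigma P)$, which holds by operator monotonicity, with the operator Jensen inequality $P(\logn\sigma)P\le\logn(P\sigma P)$. On $\ran P$ the state $\rho$ is positive definite, so equality of the traces $\Tr\rho\logn\tilde\sigma=\Tr\rho\logn\sigma$ forces equality of the compressions, $\logn\tilde\sigma=P(\logn\sigma)P$ on $\ran P$. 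I would then argue that this forces $B=0$, that is, $P\sigma P^\perp=0$. In that case $\sigma=\tilde\sigma\oplus P^\perp\sigma P^\perp$, and $\sigma$ commutes with $\rho$, a contradiction. Making this equality-case argument rigorous is the delicate point, because the Jensen inequality and the Schur-complement correction pull in opposite directions. If the direct route fails, I would perturb: replace $\sigma$ by $\sigma+\ep P^\perp$, where the situation is strictly invertible, and track the first-order term in $\ep$.
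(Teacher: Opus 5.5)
Your parts 1 and 2 follow the paper's argument: real analyticity plus the monotonicity from \cite[Proposition 4.14]{mosonyi2022geometric} for part 1, and the equality case of $\DU\le\DBS$ for part 2. Your reduction via $\acc{\sigma}{\rho}$ even gives analyticity up to the endpoints.

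\textbf{The gap is in part 3}, in the case where $\tilde\sigma:=\acc{\sigma}{\rho}$ commutes with $\rho$ but $\sigma$ does not. There everything hinges on the strict inequality $\Tr\rho\logn\tilde\sigma<\Tr\rho\logn\sigma$, and the proposal does not prove it. The chain you suggest cannot work. The inequalities $\logn\tilde\sigma\le\logn(P\sigma P)$ and $P(\logn\sigma)P\le\logn(P\sigma P)$ bound both operators from above by the same operator, so they give no comparison between $\logn\tilde\sigma$ and $P(\logn\sigma)P$. The non-strict inequality $\logn\tilde\sigma\le P(\logn\sigma)P$ on $\ran P$ does hold by Lemma \ref{lemma:logn monotone}. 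So your reduction to the implication ``$\logn\tilde\sigma=P(\logn\sigma)P$ on $\ran P$ forces $P\sigma P^\perp=0$'' is sound. But that implication is exactly the content of this case, and it is left unproved. The perturbation $\sigma+\ep P^\perp$ does not help either, since strict inequalities are not preserved in the limit $\ep\searrow 0$.

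\textbf{Two ways to close it.} The paper avoids the case split altogether. It uses $\DU(\rho\|\sigma)\le f(0)\le f(\gamma)$. If $\DU(\rho\|\sigma)=f(\gamma)$ for some $\gamma>0$, then $f(0)=f(\gamma)$, so $f$ is constant by part 1 and $\DU(\rho\|\sigma)=f(1)=\DBS(\rho\|\sigma)$. This is excluded for non-commuting $\rho,\sigma$ with $\rho^0\le\sigma^0$ by \cite[Example 4.4]{HM2016}. That is the same fact you used in part 2, only without assuming equal supports.

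If you want to keep your route, apply the Jensen argument to $\sigma\inv$ instead of $\sigma$; this amounts to deriving the general equality case from the equal-support one. Restrict to $\ran\sigma^0$, and set $A:=\sigma\inv$, $P:=\rho^0$, $Q:=\sigma^0-P$. Then $\tilde\sigma=(PAP)\inv$ by \eqref{eq:ac explicit}. Using $\log x=\int_0^\infty[(1+s)\inv-(x+s)\inv]\,ds$, on $\ran P$ you get
\begin{align*}
P(\logn\sigma)P-\logn\tilde\sigma=\log(PAP)-P(\log A)P=\int_0^\infty\Bigl[P(A+s)\inv P-\bigl(P(A+s)P\bigr)\inv\Bigr]\,ds .
\end{align*}
By the block-inverse formula, $\bigl(P(A+s)\inv P\bigr)\inv=P(A+s)P-PAQ\bigl(Q(A+s)Q\bigr)\inv QAP$. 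Hence each integrand is PSD, continuous in $s$, and vanishes if and only if $PAQ=0$.

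Now suppose $\Tr\rho\,[P(\logn\sigma)P-\logn\tilde\sigma]=0$. Since $\rho$ is invertible on $\ran P$, the integral vanishes, and by continuity so does every integrand. This gives $PAQ=0$, so $P$ commutes with $\sigma$. Then $\sigma=\tilde\sigma+Q\sigma Q$ commutes with $\rho$, which contradicts the assumption that $\rho$ and $\sigma$ do not commute.
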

\begin{proof}
By definition, 
\begin{align*}
f(\gamma)=\frac{1}{1-\gamma}\Tr\rho\logn\rho-\frac{1}{1-\gamma}\Tr\rho\logn\bz
\rho^{1/2}\exp\bz(1-\gamma)\logn\bz\rho^{-1/2}\acc{\sigma}{\rho}\rho^{-1/2}\jz\jz\rho^{1/2}\jz,
\ds\ds\ds\gamma\in[0,1).
\end{align*}
It is easy to see that $f$ is continuous on $[0,1]$ and real analytic on $(0,1)$, and we know from 
\cite[Proposition 4.14]{mosonyi2022geometric} that it is monotone increasing. Thus, if there 
exist $0\le\gamma_1<\gamma_2\le 1$ such that $f(\gamma_1)=f(\gamma_2)$ then, by the identity theorem for real 
analytic functions, $f$ is constant on $(0,1)$, and therefore also on $[0,1]$. 
This proves \ref{item:gamma-relentr mon1}.

It is clear that if $\rho$ and $\sigma$ commute then $f$ is constant on $[0,1]$. Assume that 
$\rho$ and $\sigma$ do not commute, and $\rho^0=\sigma^0$; then 
\begin{align*}
f(0)=\DU(\rho\|\sigma)<\DBS(\rho\|\sigma)=f(1),
\end{align*}
where the strict inequality is due to \cite[Example 4.4]{HM2016}. Hence, $f$ is not constant on $[0,1]$.
This proves \ref{item:gamma-relentr mon2}.

Finally, \ref{item:gamma-relentr mon3} follows as 
$\DU(\rho\|\sigma)\le f(0)$; hence, if $\DU(\rho\|\sigma)=f(\gamma)$ for some $\gamma\in(0,1]$, then 
$f$ is constant $\DU(\rho\|\sigma)$ on $(0,\gamma]$, and therefore on the whole of $[0,1]$, according to 
\ref{item:gamma-relentr mon1}. As a consequence, $\DU(\rho\|\sigma)=f(1)=\DBS(\rho\|\sigma)$, which cannot hold
if $\rho$ and $\sigma$ do not commute, according to \cite[Example 4.4]{HM2016}.
\end{proof}

\subsection{Barycentric R\'enyi divergences}
\label{sec:barycentric}

In \cite{mosonyi2022geometric}, a general method was given to define multivariate quantum R\'enyi divergences from quantum relative entropies, which we briefly recall here. 
For a finite set $\X$, 
fix quantum relative entropies $\D^{\qv}:=(D^{q_x})_{x\in\X}$, where we use the notation
$\qv:=(q_x)_{x\in\X}$.
For any probability distribution $P\in\S(\X)$, and any collection 
of PSD operators $W=(\ch{x})_{x\in\X}\in\B(\X,\hil)\p$, define
\begin{align}\label{eq:multirenyi def}
\DP{W}:=
\inf_{\hell\in\S(\hil)}\sum_{x\in\X}P(x)\D^{q_x}(\hell\|\ch{x}).
\end{align}
We call $\DP{W}$ the \ki{$P$-weighted barycentric R\'enyi-divergence of $W$ corresponding to $D^{\qv}$}.
When $q_x=q$ for some fixed $q$ and every $x\in\X$, we will use the notation 
$D_P^{\bary,q}=D_P^{\bary,\qv}$.

\begin{rem}
Since we only consider quantum relative entropies with the property that 
$D^q(\rho\|\sigma)<+\infty$ $\iff$ $\rho^0\le\sigma^0$, the optimization in 
\eqref{eq:multirenyi def} can be restricted to states $\omega$ such that $\omega^0\le \jsp{P}{W}$, 
whenever $\jsp{P}{W}\ne 0$,
where
\begin{align*}
\jsp{P}{W}:=\medwedge_{x\in\supp P}\ch{x}^0
\end{align*}
is the joint support of the $\ch{x}$, $x\in\supp P$. That is,
\begin{align*}
\DP{W}=
\begin{cases}
\inf_{\hell\in\S(\jsp{P}{W}\hil)}\sum_{x\in\X}P(x)\D^{q_x}(\hell\|\ch{x}),&\jsp{P}{W}\ne 0,\\
+\infty,&\jsp{P}{W}=0.
\end{cases}
\end{align*}
\end{rem}

According to \cite[Lemma 5.17]{mosonyi2022geometric}, $D^{\bary,\qv}$ is invariant under isometries, i.e., for any $W\in\B(\X,\hil)\p$ and any isometry $V:\,\hil\to\kil$, 
$\DP{(V\ch{x}V^*)_{x\in\X}}=\DP{W}$.
By \cite[Lemma 5.18]{mosonyi2022geometric}, if all $D^{q_x}$ are monotone then 
$D^{\bary,\qv}$ is a quantum extension of the multivariate classical R\'enyi divergence
\eqref{eq:classical multirenyi def} in the sense that 
if all $W_x$ are diagonal in some orthonormal basis $(e_i)_{i\in\I}$ then 
\begin{align*}
\DP{W}=-\log\Tr\prod_{x\in\supp(P)}\ch{x}^{P(x)}
=
-\log\sum_{i\in\I}\prod_{x\in\supp(P)}\inner{e_i}{\ch{x}e_i}^{P(x)}.
\end{align*}

\begin{rem}
It is customary to introduce some normalization in the definition of the quantum R\'enyi divergences, e.g., as 
$\wtilde{D}_P^{\bary,\qv}(W):=(1-\max_x P(x))\inv\DP{W}$, or 
$\what{D}_P^{\bary,\qv}(W):=(1-\max_x P_x)\inv\DP{\bz\frac{W_x}{\Tr W_x}\jz_{x\in\X}}$, 
when $P(x)<1$ and $\Tr W_x> 0$ for every $x\in\supp P$.
These play a role when considering limits where one or more $P(x)$ goes to $0$, which we do not consider here.
Regarding the additivity properties considered in Section \ref{sec:additivity},
the choice of normalization is irrelevant for (weak) additivity, while
the considerations regarding (strong) cq-additivity only work 
with the definition \eqref{eq:multirenyi def} and not with the above alternative normalizations.
\end{rem}

Lemma \ref{lemma:sandwich} yields immediately the following:
\begin{lemma}\label{lemma:barycentric sandwich}
For any finite set $\X$, any $P\in\S(\X)$, and 
any weakly subadditive and monotone quantum relative entropies $D^{q_x}$, $x\in\supp P$, 
\begin{align}\label{eq:minmax bary}
D_P^{\bary,\Um}(W)\le\DP{W}\le D_P^{\bary,\max}(W)
\end{align}
for any $W\in\B(\X,\hil)\p$. 
\end{lemma}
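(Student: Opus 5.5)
This is a pointwise comparison inside the infimum, so the plan is to bound each term of the objective in \eqref{eq:multirenyi def} for a fixed state and then take the infimum over $\omega$.

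\textbf{Reduction.} First dispose of the case $\jsp{P}{W}=0$. By the Remark following \eqref{eq:multirenyi def}, all three quantities in \eqref{eq:minmax bary} then equal $+\infty$. This uses only the finiteness property, which $\DU$, $\DBS$ and every $D^{q_x}$ share. So we may assume $\jsp{P}{W}\neq 0$.

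\textbf{Termwise bound.} Fix an arbitrary state $\omega\in\S(\hil)$ and $x\in\supp P$. Since $D^{q_x}$ is monotone and weakly subadditive, Lemma \ref{lemma:sandwich} gives
\begin{align*}
\DU(\omega\|\ch{x})\le D^{q_x}(\omega\|\ch{x})\le\DBS(\omega\|\ch{x}).
\end{align*}
These are inequalities in $\bR\cup\{+\infty\}$. Finiteness makes all three terms infinite exactly when $\omega^0\nleq\ch{x}^0$, so no sign issues arise.

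\textbf{Summation and infimum.} Multiply by $P(x)\ge 0$ and sum over $x$. Terms with $P(x)=0$ vanish by the convention $0\cdot(+\infty)=0$, which is why the hypotheses are needed only for $x\in\supp P$. This yields
\begin{align*}
\sum_{x}P(x)\DU(\omega\|\ch{x})\le\sum_xP(x)D^{q_x}(\omega\|\ch{x})\le\sum_xP(x)\DBS(\omega\|\ch{x})
\end{align*}
for every $\omega\in\S(\hil)$. Taking the infimum over $\omega$ on all three sides preserves the inequalities and gives \eqref{eq:minmax bary}.

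There is no real obstacle here. The only care needed is the bookkeeping of $+\infty$ values and the convention for terms with $P(x)=0$; the substance is already contained in Lemma \ref{lemma:sandwich}.
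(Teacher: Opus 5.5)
Your proposal is correct and matches the paper, which derives this lemma directly from Lemma \ref{lemma:sandwich} by the same termwise comparison, summation with weights $P(x)$, and infimum over $\omega$. Your separate treatment of the case $\jsp{P}{W}=0$ is harmless but not needed, since the pointwise inequalities in $\bR\cup\{+\infty\}$ already cover it.
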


In general, there is no known explicit formula for $\DP{W}$ or the optimal $\omega$ in 
\eqref{eq:multirenyi def}. The only exception that has been known so far is when $D^{q_x}=\DU$ for every $x\in\X$, 
which was discussed in \cite[Proposition 5.29]{mosonyi2022geometric}. In this case 
there exists a unique optimal $\omega$, given by 
\begin{align*}
\hell_P^{\bary,\Um}(W):=\frac{\GM_P^{\LE}(W)}{\Tr \GM_P^{\LE}(W)},\ds\ds\text{where}\ds\ds
\GM_P^{\LE}(W):=\jsp{P}{W}\exp\bz\sum_{x\in\X}P(x)\jsp{P}{W}(\logn \ch{x})\jsp{P}{W}\jz
\end{align*}
is the $P$-weighted log-Euclidean geometric mean of $W$. 
Hence,
\begin{align}\label{eq:minimal explicit}
D_P^{\bary,\Um}(W)=-\log\Tr\GM_P^{\LE}(W).
\end{align}
The proof follows immediately from Lemma \ref{lemma:variational} below with $n=1$
by taking into account that $\DU(\omega\|\oll\hell)\ge 0$ with equality if and only if $\omega=\oll\hell$.
More precisely, the above are valid when $\jsp{P}{W}\ne 0$; otherwise any state $\omega$ is optimal and gives the value
$+\infty=D_P^{\bary,\Um}(W)$. Note that this argument is different from the one given for the 
same statement in 
\cite[Proposition 5.29]{mosonyi2022geometric}.

In Proposition \ref{prop:sharp0 explicit} below we present a new example where the barycentric R\'enyi divergence
admits an explicit formula.

\begin{lemma}
\label{lemma:variational}
Let $P\in\S(\X)$ and $W\in\B(\X,\hil)\p$ be such that $\jsp{P}{W}\ne 0$, and let 
$\oll\hell:=\hell_P^{\bary,\Um}(W)=\GM_P^{\LE}(W)/\Tr \GM_P^{\LE}(W)$.
For any state $\omega_n\in\S(\hil^{\otimes n})$ with 
$\omega_n^0\le (\jsp{P}{W})^{\otimes n}$,
\begin{align}\label{eq:variational}
\sum_{x\in\X}P(x)\DU(\omega_n\|\ch{x}^{\otimes n})
=
-n\log\Tr\GM_P^{\LE}(W)+\DU(\omega_n\|\oll\hell^{\otimes n}).
\end{align}
\end{lemma}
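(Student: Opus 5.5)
The identity is a direct computation: expand each Umegaki relative entropy using the support condition, then collect the logarithms into the logarithm of $\oll\hell^{\otimes n}$. Write $S:=\jsp{P}{W}$ and $L:=\sum_{x}P(x)S(\logn\ch{x})S$, so that $\GM_P^{\LE}(W)=S e^{L}$ (exponential on $\ran S$) and $\oll\hell=Se^{L}S/c$ with $c:=\Tr\GM_P^{\LE}(W)$. Note that $\oll\hell^0=S$, since $e^L$ is invertible on $\ran S$. Hence on $\ran S$ we have
\begin{align*}
\logn\oll\hell = L-(\log c)S .
\end{align*}

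Since $\omega_n^0\le S^{\otimes n}\le (\ch{x}^0)^{\otimes n}=(\ch{x}^{\otimes n})^0$ for every $x\in\supp P$, each term $\DU(\omega_n\|\ch{x}^{\otimes n})$ is finite for $x\in\supp P$. (Terms with $P(x)=0$ vanish by the convention $0\cdot(\pm\infty)=0$.) Each finite term equals
\begin{align*}
\Tr\omega_n\logn\omega_n-\Tr\omega_n\logn(\ch{x}^{\otimes n}).
\end{align*}
By \eqref{eq:logn mult}, iterated,
\begin{align*}
\logn(\ch{x}^{\otimes n})=\sum_{k=1}^n (\ch{x}^0)^{\otimes (k-1)}\otimes\logn\ch{x}\otimes(\ch{x}^0)^{\otimes(n-k)}.
\end{align*}
Because $\omega_n=S^{\otimes n}\omega_nS^{\otimes n}$ and $S\ch{x}^0=S$, compressing by $S^{\otimes n}$ gives
\begin{align*}
\Tr\omega_n\logn(\ch{x}^{\otimes n})=\Tr\omega_n\sum_k S^{\otimes(k-1)}\otimes S(\logn\ch{x})S\otimes S^{\otimes(n-k)}.
\end{align*}
Summing with weights $P(x)$ then yields
\begin{align*}
\sum_xP(x)\DU(\omega_n\|\ch{x}^{\otimes n})=\Tr\omega_n\logn\omega_n-\Tr\omega_n L_n,\qquad L_n:=\sum_k S^{\otimes(k-1)}\otimes L\otimes S^{\otimes(n-k)} .
\end{align*}

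On the other side, $(\oll\hell^{\otimes n})^0=S^{\otimes n}\ge\omega_n^0$, so $\DU(\omega_n\|\oll\hell^{\otimes n})$ is finite. Applying \eqref{eq:logn mult} again gives
\begin{align*}
\logn\oll\hell^{\otimes n}=\sum_k S^{\otimes(k-1)}\otimes\logn\oll\hell\otimes S^{\otimes(n-k)}=L_n-n(\log c)S^{\otimes n}.
\end{align*}
Using $\Tr\omega_nS^{\otimes n}=1$, we get
\begin{align*}
\DU(\omega_n\|\oll\hell^{\otimes n})=\Tr\omega_n\logn\omega_n-\Tr\omega_nL_n+n\log c .
\end{align*}
Comparing this with the previous display gives \eqref{eq:variational}.

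The only delicate step is the support bookkeeping: we must justify that compressing $\logn\ch{x}$ by $S$ is legitimate inside the trace against $\omega_n$, and that $\logn$ of the compressed exponential is exactly $L-(\log c)S$. Both follow from $\omega_n^0\le S^{\otimes n}$ and the fact that $L$ lives on $\ran S$, so I expect no real obstacle beyond careful handling of projections.
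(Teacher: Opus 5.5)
Your proof is correct and follows essentially the same route as the paper. Both arguments are direct computations: expand $\logn$ of the tensor powers via \eqref{eq:logn mult}, use $\omega_n^0\le S^{\otimes n}$ to compress $\logn\ch{x}$ to $S(\logn\ch{x})S$, and use $\logn\oll\hell=L-(\log c)S$. The only cosmetic difference is that the paper adds and subtracts $\Tr\omega_n\logn\oll\hell^{\otimes n}$ and works with the one-site marginals $\omega_n^{(k)}$, whereas you compute both sides directly in terms of the $n$-site operator $L_n$.
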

\begin{proof}
Let $\omega_n^{(k)}$ denote the $k$-th marginal of $\omega_n$, i.e., 
$\omega_n^{(k)}:=\Tr_{n\setminus\{k\}}\omega_n$.
Then 
\begin{align}
\sum_{x\in\X}P(x)\DU(\omega_n\|\ch{x}^{\otimes n})
&=
\sum_{x\in\X}P(x)\left[ \Tr\omega_n\logn\omega_n-\Tr\omega_n\logn\oll\hell^{\otimes n}+\Tr\omega_n\logn\oll\hell^{\otimes n}-\Tr\omega_n\logn\ch{x}^{\otimes n}
\right]\nn\\
&=
\DU(\omega_n\|\oll\hell^{\otimes n})+
\sum_{x\in\X}P(x)\left[\Tr\omega_n\logn\oll\hell^{\otimes n}-\Tr\omega_n\logn\ch{x}^{\otimes n}
\right]\nn\\
&=
\DU(\omega_n\|\oll\hell^{\otimes n})+
\sum_{x\in\X}P(x)\sum_{k=1}^n\left[\Tr\omega_n^{(k)}\logn\oll\hell-\Tr\omega_n^{(k)}\logn\ch{x}\right],
\label{eq:variational proof1}
\end{align}
where we used \eqref{eq:logn mult} and that 
$(\oll\hell^0)^{\otimes n}\omega_n(\oll\hell^0)^{\otimes n}=\omega_n$
and
$(\ch{x}^0)^{\otimes n}\omega_n(\ch{x}^0)^{\otimes n}=\omega_n$, $x\in\supp(P)$.

Using that $\logn\oll\hell=(-\log\Tr\GM_P^{\LE}(W))\jsp{P}{W}+\sum_{y\in\X}P(y)\jsp{P}{W}(\logn \ch{y})\jsp{P}{W}$ and that 
$(\omega_n^{(k)})^0\le \jsp{P}{W}$,  we get that 
\begin{align}
&\sum_{x\in\X}P(x)\left[\Tr\omega_n^{(k)}\logn\oll\hell-\Tr\omega_n^{(k)}\logn\ch{x}\right]\nn\\
&\ds=
-\log\Tr\GM_P^{\LE}(W)+
\underbrace{\sum_{x\in\X}P(x)\left[\Tr\omega_n^{(k)}\sum_{y\in\X}P(y)\jsp{P}{W}(\logn \ch{y})\jsp{P}{W}-\Tr\omega_n^{(k)}\logn\ch{x}\right]}_{=0}\,.
\label{eq:variational proof2}
\end{align}
Combining \eqref{eq:variational proof1} and \eqref{eq:variational proof2} yields
\eqref{eq:variational}.
\end{proof}

\begin{prop}\label{prop:sharp0 explicit}
If $D^{q_x}=D^{\Um,\#_0}$ for every $x\in\X$ then for any $P\in\S(\X)$ and 
$W\in\B(\X,\hil)\p$ with $S:=\jsp{P}{W}\ne 0$, there exists a unique optimal state in 
\eqref{eq:multirenyi def}, given by 
\begin{align*}
\omega_{P}^{\bary,\Um,\#_0}(W):=\frac{\GM_P^{\LE}(\acc{W}{S})}{\Tr \GM_P^{\LE}(\acc{W}{S})},\ds\ds\text{where}\ds\ds
\GM_P^{\LE}\bz\acc{W}{S}\jz=S\exp\bz\sum_{x\in\X}P(x)\logn (\acc{W_x}{S})\jz,
\end{align*}
and $\acc{W}{S}:=(\acc{\ch{x}}{S})_{x\in\X}$,
and therefore
\begin{align}\label{eq:shorted minimal explicit}
D_P^{\bary,\Um,\#_0}(W)=-\log\Tr\GM_P^{\LE}(\acc{W}{S})=D_P^{\bary,\Um}\bz\acc{W}{S}\jz.
\end{align}
When $\jsp{P}{W}=0$, we have $D_P^{\bary,\Um,\#_0}(W)=+\infty$, and every state $\omega$ is optimal for
 \eqref{eq:multirenyi def}.
\end{prop}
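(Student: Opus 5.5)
The plan is to reduce the statement to the Umegaki case, where Lemma \ref{lemma:variational} already gives an explicit solution. The reduction rests on Lemma \ref{lemma:sharp0 bound} with $\gamma=0$ and on the fact that for states supported in $S$, the quantity $D^{\Um,\#_0}(\omega\|W_x)$ is nearly the same as $\DU(\omega\|\acc{W_x}{S})$.

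The case $S=0$ needs no work. By the Remark after \eqref{eq:multirenyi def}, every $\omega$ has $\omega^0\nleq S$, so some $x\in\supp P$ has $D^{\Um,\#_0}(\omega\|W_x)=+\infty$. Hence the infimum is $+\infty$ and every state is optimal.

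Now assume $S\ne 0$. By the same Remark we may restrict to $\omega\in\S(S\hil)$. For such $\omega$ we have $\omega^0\le S=S^0$, so Lemma \ref{lemma:sharp0 bound} with $\gamma=0$ gives
\begin{align*}
\sum_x P(x)D^{\Um,\#_0}(\omega\|W_x)\ge\sum_xP(x)\DU(\omega\|\acc{W_x}{S}).
\end{align*}
Set $V:=\acc{W}{S}$. Each $\acc{W_x}{S}$ has support exactly $S$: it is $\le S$ by definition, and it is $\ge S$ by the explicit formula \eqref{eq:ac explicit}, since $S^0\wedge W_x^0=S$ for $x\in\supp P$. Therefore $\jsp{P}{V}=S$.

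Next apply Lemma \ref{lemma:variational} with $n=1$ to $V$. The right-hand side above equals $-\log\Tr\GM_P^{\LE}(V)+\DU(\omega\|\oll\omega)$, where $\oll\omega=\GM_P^{\LE}(V)/\Tr\GM_P^{\LE}(V)$. Also $\GM_P^{\LE}(V)=S\exp(\sum_xP(x)\logn\acc{W_x}{S})$ on $S\hil$, because each $\logn\acc{W_x}{S}$ already lives on $S$. This yields the lower bound $D_P^{\bary,\Um,\#_0}(W)\ge -\log\Tr\GM_P^{\LE}(V)=D_P^{\bary,\Um}(V)$.

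For the matching upper bound I evaluate at $\omega=\oll\omega$. The key step, and the main obstacle, is to show that the inequality in Lemma \ref{lemma:sharp0 bound} is an equality whenever $\omega^0=S$. In that case $\acc{W_x}{\omega}=\acc{W_x}{S}$, because the absolutely continuous part depends on $\rho$ only through $\rho^0$; this is clear from the defining maximum. Hence
\begin{align*}
D^{\Um,\#_0}(\omega\|W_x)=\Tr\omega[\logn\omega-\logn\acc{W_x}{\omega}]=\DU(\omega\|\acc{W_x}{S}).
\end{align*}
Since $\oll\omega$ is an exponential restricted to $S$, it has full support $S$. So its objective value is $-\log\Tr\GM_P^{\LE}(V)$, and the infimum is attained there, which gives \eqref{eq:shorted minimal explicit}.

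Uniqueness follows from the chain of inequalities. Any optimal $\omega$ must satisfy $\DU(\omega\|\oll\omega)=0$, so $\omega=\oll\omega$ by positivity of $\DU$ on states. Throughout, I need the objective to be finite, which holds because $\oll\omega^0=S\le W_x^0$ for every $x\in\supp P$; terms with $P(x)=0$ are handled by the convention $0\cdot(+\infty)=0$.
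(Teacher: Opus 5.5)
Your proof is correct and follows the same route as the paper. The lower bound comes from Lemma \ref{lemma:sharp0 bound} with $\gamma=0$ combined with the Umegaki solution for $\acc{W}{S}$, and the matching upper bound comes from evaluating at $\oll\omega$, where $\oll\omega^0=S$ forces $\acc{W_x}{\oll\omega}=\acc{W_x}{S}$; working directly with Lemma \ref{lemma:variational} instead of citing \eqref{eq:minimal explicit} also makes uniqueness explicit via $\DU(\omega\|\oll\omega)=0$, which the paper's displayed chain leaves implicit.
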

\begin{proof}
The case $S=0$ is obvious, and hence for the rest we assume that $S\ne 0$. 
Let $\oll\hell:=\omega_{P}^{\bary,\Um,\#_0}(W)$. 
Then we have
\begin{align*}
-\log\Tr\GM_P^{\LE}(\acc{W}{S})
&=
D_P^{\bary,\Um}\bz \acc{W}{S}\jz\\
&=
\inf_{\omega\in\S(S\hil)}\sum_{x\in\X}P(x)\underbrace{\DU\bz\omega\|\acc{\ch{x}}{S}\jz}_{\le D^{\Um,\#_0}(\omega\|\ch{x})}\\
&\le
\inf_{\omega\in\S(S\hil)}\sum_{x\in\X}P(x)D^{\Um,\#_0}(\omega\|\ch{x})\\
&\le
\sum_{x\in\X}P(x)D^{\Um,\#_0}(\oll\omega\|\ch{x})\\
&=
\sum_{x\in\X}P(x)D^{\Um}\bz\oll\omega\|\acc{\ch{x}}{S}\jz\\
&=
-\log\Tr\GM_P^{\LE}(\acc{W}{S})\,.
\end{align*}
Here,
the first equality is due to \eqref{eq:minimal explicit}, the second equality is by definition, 
the first inequality follows from Lemma \ref{lemma:sharp0 bound}, 
the second inequality is trivial, 
the third equality follows by definition and the fact that 
$\oll\hell^0=S$, and the last equality follows by a straightforward computation.
\end{proof}

No explicit expression is known for $D_P^{\bary,\max}(W)$ in general. The following observation
will be useful in deriving an explicit expression for the regularized version of $D_P^{\bary,\max}$
in Section \ref{sec:regularized}.

\begin{lemma}\label{lemma:sharp restriction equality}
For any finite set $\X$, and any $P\in\S(\X)$ and $W\in\B(\X,\hil)\p$ with 
$S:=\jsp{P}{W}\ne 0$,
\begin{align*}
D_P^{\bary,\Um,\#_\gamma}(W)=D_P^{\bary,\Um,\#_\gamma}\bz\acc{W}{S}\jz,\ds\ds\ds\gamma\in[0,1],
\end{align*}
where $\acc{W}{S}:=(\acc{\ch{x}}{S})_{x\in\X}$.
\end{lemma}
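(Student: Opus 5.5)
The plan is to compare the two variational problems in \eqref{eq:multirenyi def} term by term, using Lemma \ref{lemma:sharp0 bound} with $S:=\jsp{P}{W}$. That lemma states that for every $\rho\in\B(\hil)\pne$ with $\rho^0\le S^0=S$,
\begin{align*}
D^{\Um,\#_\gamma}(\rho\|W_x)=D^{\Um,\#_\gamma}\bz\rho\|\acc{W_x}{S}\jz,\ds\ds\ds\gamma\in[0,1].
\end{align*}

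First, I check that both sides of the claimed identity reduce to optimizations over the same set $\S(S\hil)$. For the left-hand side this is the Remark following \eqref{eq:multirenyi def}, since $S\ne 0$. For the right-hand side I need the joint support of $\acc{W}{S}$, which is $\medwedge_{x\in\supp P}(\acc{W_x}{S})^0$. By \eqref{eq:ac explicit}, $\acc{W_x}{S}=((S\wedge W_x^0)W_x\inv(S\wedge W_x^0))\inv$. Since $S\le W_x^0$ for $x\in\supp P$, this support equals $S\wedge W_x^0=S$. Alternatively, $\acc{W_x}{S}\le W_x$ has support inside $S$, and $\acc{W_x}{S}\ge \lambda S$ for small $\lambda>0$ because $\lambda S\le W_x$ and $(\lambda S)^0\le S$. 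Hence $\jsp{P}{\acc{W}{S}}=S\ne0$, and the Remark applies to $\acc{W}{S}$ as well, again giving an infimum over $\S(S\hil)$.

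Second, for every $\omega\in\S(S\hil)$ and every $x\in\supp P$, Lemma \ref{lemma:sharp0 bound} applied with $\rho=\omega$, $\sigma=W_x$ and the given $S$ yields $D^{\Um,\#_\gamma}(\omega\|W_x)=D^{\Um,\#_\gamma}(\omega\|\acc{W_x}{S})$. For $x\notin\supp P$ the terms are weighted by $P(x)=0$, and under the convention $0\cdot(+\infty)=0$ they vanish on both sides. The two objective functions therefore coincide on $\S(S\hil)$, and taking the infimum gives the claim for each $\gamma\in[0,1]$, including $\gamma=1$ where $D^{\Um,\#_1}=\DBS$.

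The main obstacle is modest: it is the support computation in the first step, which ensures the restriction to $\S(S\hil)$ is legitimate for $\acc{W}{S}$. I would handle it through the lower bound $\acc{W_x}{S}\ge\lambda S$, which follows directly from the maximality in the definition of the absolutely continuous part.
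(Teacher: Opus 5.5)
Your proposal is correct and is the paper's argument: both infima are taken over $\S(S\hil)$, and the objectives agree term by term by Lemma \ref{lemma:sharp0 bound}, for every $\gamma\in[0,1]$. Your check that $\jsp{P}{\acc{W}{S}}=S$ makes explicit what the paper folds into ``by definition''; in fact only $(\acc{W_x}{S})^0\le S$ is needed, because for $\omega\notin\S(S\hil)$ both objectives equal $+\infty$.
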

\begin{proof}
We have
\begin{align*}
D_P^{\bary,\Um,\#_\gamma}\bz\acc{W}{S}\jz
&=
\inf_{\omega\in\S(S\hil)}\sum_xP(x)D^{\Um,\#_\gamma}(\omega\|\acc{W_x}{S})\\
&=
\inf_{\omega\in\S(S\hil)}\sum_xP(x)D^{\Um,\#_\gamma}(\omega\|\ch{x})
=
D_P^{\bary,\Um,\#_\gamma}(W),
\end{align*}
where the first and the last equalities are by definition, and the second equality is due to 
Lemma \ref{lemma:sharp0 bound}.
\end{proof}

\subsection{Additivity}
\label{sec:additivity}

For any finite set $\X$, quantum relative entropies $D^{\qv}=(D^{q_x})_{x\in\X}$, and probability distribution $P\in\S(\X)$, $D_P^{\bary,\qv}$ is an $\X$-variable quantum divergence as defined in Section \ref{sec:qdiv},
and hence its key properties like monotonicity and (weak) (sub)additivity can be defined as special cases of those discussed in 
Section \ref{sec:qdiv}. 
In particular, a barycentric quantum R\'enyi divergence $D_P^{\bary,\qv}$ is
\begin{itemize}
\item
\ki{weakly subadditive} if for any $W\in\B(\X,\hil)\p$ and any $n\in\bN$,  
$\DP{W^{\otimes n}}\le n\DP{W}$;
\item
\ki{weakly additive} if for any $W\in\B(\X,\hil)\p$ and any $n\in\bN$,  
$\DP{W^{\otimes n}}=n\DP{W}$;
\item
\ki{subadditive} if for any $W^{(1)}\in\B(\X,\hil^{(1)})\p$ and $W^{(2)}\in\B(\X,\hil^{(2)})\p$,
$\DP{W^{(1)}\otimes W^{(2)}}\le\DP{W^{(1)}}+\DP{W^{(2)}}$;
\item
\ki{additive} if for any $W^{(1)}\in\B(\X,\hil^{(1)})\p$ and $W^{(2)}\in\B(\X,\hil^{(2)})\p$,
$\DP{W^{(1)}\otimes W^{(2)}}=\DP{W^{(1)}}+\DP{W^{(2)}}$.
\end{itemize}

Note that for any fixed quantum relative entropy $D^q$, 
$D_P^{\bary,q}$ is defined for every finite set $\X$ and probability distribution $P\in\S(\X)$,
and we may define
\begin{align*}
D^{\bary,q}:=(D^{\bary,q}_P)_{P\in\S([d]),\,d\in\bN}.
\end{align*}
This encompasses $D^{\bary,q}_P$ for any finite set $\X$ and $P\in\S(\X)$, since for any 
bijection $\beta:\,\X\to[|\X|]$ and any $W\in\B(\X,\hil)\p$,  
$D_P^{\bary,q}(W)=D_{P\circ\beta\inv}^{\bary,q}(W\circ\beta\inv)$.

In this case, we may consider further types of additivity properties that are natural when we think of the arguments $W$ as generalized (non-normalized) classical-quantum channels. 
Indeed, motivated by the parallel use of classical-quantum channels, one may consider
\begin{align*}
W^{(1)}\chtimes W^{(2)}:=\bz\ch{x_1}^{(1)}\otimes \ch{x_2}^{(2)}\jz_{x_1\in\X^{(1)},x_2\in\X^{(2)}}
\end{align*}
for any $W^{(1)}\in\B(\X^{(1)},\hil^{(1)})\p$ and 
$W^{(2)}\in\B(\X^{(2)},\hil^{(2)})\p$. 
We say that 
$D^{\bary,q}$ is 
\begin{itemize}
\item
\ki{cq-additive} if for any finite sets $\X^{(k)}$, any 
probability distributions $P^{(k)}\in\S(\X^{(k)})$, 
and any $W^{(k)}\in\B(\X^{(k)},\hil^{(k)})\p$, $k=1,2$, 
\begin{align*}
D_{P^{(1)}\otimes P^{(2)}}^{\bary,q}(W^{(1)}\chtimes W^{(2)})
=
D_{P^{(1)}}^{\bary,q}(W^{(1)})+D_{P^{(2)}}^{\bary,q}(W^{(2)}),
\end{align*}
where $(P^{(1)}\otimes P^{(2)})(x_1,x_2):=P^{(1)}(x_1)P^{(2)}(x_2)$, $(x_1,x_2)\in\X^{(1)}\times\X^{(2)}$;
\item
\ki{strongly cq-additive} if for any finite sets $\X^{(k)}$, any 
$W^{(k)}\in\B(\X^{(k)},\hil^{(k)})\p$, $k=1,2$, and any probability distribution 
$P\in\S(\X^{(1)}\times\X^{(2)})$,
\begin{align*}
D_{P}^{\bary,q}(W^{(1)}\chtimes W^{(2)})
=
D_{P^{(1)}}^{\bary,q}(W^{(1)})+D_{P^{(2)}}^{\bary,q}(W^{(2)}),
\end{align*}
where $P^{(1)}$ and $P^{(2)}$ are the first and the second marginals of $P$, respectively.
\end{itemize}
In this setting we also say that $D^{\bary,q}$ is (weakly) additive if 
$D_P^{\bary,q}$ is (weakly) additive for any finite set $\X$ and any $P\in\S(\X)$. 
Obviously, strong cq-additivity of $D^{\bary,q}$ implies both 
cq-additivity and additivity; the latter can be seen by choosing
$P(x_1,x_2):=\delta_{x_1,x_2}P(x_1)$, $(x_1,x_2)\in\X\times\X$
for any $P\in\S(\X)$.

\begin{rem}
Note that, unlike 
in the usual notion of additivity, the sets
$\X^{(1)}$ and $\X^{(2)}$ are not necessarily the same in the definition of (strong) cq-additivity, 
and the probability distributions are not fixed, either.
In this sense, the notion of (strong) cq-additivity is a relation among different quantum divergences.
\end{rem}

\begin{prop}\label{prop:strong cq add}
$D^{\bary,\Um}$ and 
$D^{\bary,\Um,\#_0}$ are strongly cq-additive.
\end{prop}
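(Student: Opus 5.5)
We would prove strong cq-additivity of $D^{\bary,\Um}$ directly from the explicit formula \eqref{eq:minimal explicit}, and then deduce the statement for $D^{\bary,\Um,\#_0}$ from Proposition \ref{prop:sharp0 explicit} together with the tensor multiplicativity \eqref{eq:acc mult} of the absolutely continuous part.

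\emph{Step 1 (supports).} Let $W^{(k)}\in\B(\X^{(k)},\hil^{(k)})\p$, $P\in\S(\X^{(1)}\times\X^{(2)})$ with marginals $P^{(1)},P^{(2)}$, and write $W:=W^{(1)}\chtimes W^{(2)}$. The support of $P$ projects onto $\supp P^{(1)}$ in the first coordinate and onto $\supp P^{(2)}$ in the second. Since $(A\otimes B)^0=A^0\otimes B^0$, and intersections of projections of the form $A_i\otimes I$ and $I\otimes B_j$ factorize, we get
\begin{align*}
\jsp{P}{W}=\medwedge_{(x_1,x_2)\in\supp P}(W^{(1)}_{x_1})^0\otimes(W^{(2)}_{x_2})^0=S_1\otimes S_2,
\end{align*}
where $S_k:=\jsp{P^{(k)}}{W^{(k)}}$. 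Indeed, $\jsp{P}{W}\le S_1\otimes S_2$ because each factor dominates it, and conversely $S_1\otimes S_2\le(W^{(1)}_{x_1})^0\otimes(W^{(2)}_{x_2})^0$ for every $(x_1,x_2)\in\supp P$. If $S_1\otimes S_2=0$, then some $S_k=0$, and both sides of the claimed identity equal $+\infty$. We therefore assume $S_1,S_2\ne0$.

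\emph{Step 2 (log-Euclidean mean factorizes).} By \eqref{eq:logn mult}, compressed by $S:=S_1\otimes S_2$,
\begin{align*}
S\,\logn\bigl(W^{(1)}_{x_1}\otimes W^{(2)}_{x_2}\bigr)\,S=\bigl(S_1(\logn W^{(1)}_{x_1})S_1\bigr)\otimes S_2+S_1\otimes\bigl(S_2(\logn W^{(2)}_{x_2})S_2\bigr).
\end{align*}
This uses that $S_k$ lies under each relevant $(W^{(k)}_{x_k})^0$. Summing with weights $P(x_1,x_2)$ and using that only marginals enter each term gives $L_1\otimes S_2+S_1\otimes L_2$, where $L_k:=\sum_{x}P^{(k)}(x)S_k(\logn W^{(k)}_x)S_k$. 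The two summands commute, so on $\ran S$ the exponential factorizes, and
\begin{align*}
\GM_P^{\LE}(W)=\GM_{P^{(1)}}^{\LE}(W^{(1)})\otimes\GM_{P^{(2)}}^{\LE}(W^{(2)}).
\end{align*}
Taking $-\log\Tr$ and applying \eqref{eq:minimal explicit} yields strong cq-additivity of $D^{\bary,\Um}$.

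\emph{Step 3 (the $\#_0$ case).} By \eqref{eq:shorted minimal explicit}, $D_P^{\bary,\Um,\#_0}(W)=D_P^{\bary,\Um}(\acc{W}{S})$. By \eqref{eq:acc mult},
\begin{align*}
\acc{(W^{(1)}_{x_1}\otimes W^{(2)}_{x_2})}{S_1\otimes S_2}=\acc{W^{(1)}_{x_1}}{S_1}\otimes\acc{W^{(2)}_{x_2}}{S_2},
\end{align*}
so $\acc{W}{S}=\acc{W^{(1)}}{S_1}\chtimes\acc{W^{(2)}}{S_2}$. Step 2 applied to these operators, followed by \eqref{eq:shorted minimal explicit} on each factor, finishes the proof. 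In Step 2 we need the joint support of the shorted families; it is still $S_1\otimes S_2$, since $(\acc{W_x}{S})^0=S$ whenever $S\le W_x^0$.

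The main obstacle is the support bookkeeping in Step 1 and the compression in Step 2 for non-invertible operators. Once the joint support is shown to factorize, everything else is routine.
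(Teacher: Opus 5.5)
Your proposal is correct and follows essentially the same route as the paper. Both arguments factorize the joint support as $\jsp{P^{(1)}}{W^{(1)}}\otimes\jsp{P^{(2)}}{W^{(2)}}$, use \eqref{eq:logn mult} to factorize $\GM_P^{\LE}$, and reduce the $\#_0$ case to the Umegaki case via \eqref{eq:shorted minimal explicit} and \eqref{eq:acc mult}; the only cosmetic difference is that you prove the support factorization by two opposite inequalities, where the paper uses a chain of wedge identities.
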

\begin{proof}
Let $W^{(k)}\in\B(\X^{(k)},\hil^{(k)})\p$, $k=1,2$, and $P\in\S(\X^{(1)}\times\X^{(2)})$.
Let us define $P^{(1)}_x:=(W^{(1)}_{x})^0\otimes I_{\hil^{(2)}}$, $x\in\X^{(1)}$, and 
$P^{(2)}_y:=I_{\hil^{(1)}}\otimes (W^{(2)}_{y})^0$, $y\in\X^{(2)}$.
Then 
\begin{align}
\jsp{P}{W^{(1)}\chtimes W^{(2)}}
&=
\medwedge_{(x_1,x_2)\in\supp(P)}(W^{(1)}_{x_1}\otimes W^{(2)}_{x_2})^0\nn\\
&=
\medwedge_{(x_1,x_2)\in\supp(P)}P_{x_1}^{(1)}P_{x_2}^{(2)}\nn\\
&=
\medwedge_{(x_1,x_2)\in\supp(P)} \bz P_{x_1}^{(1)}\medwedge P_{x_2}^{(2)}\jz\nn\\
&=
\bz\medwedge_{x_1\in\supp(P^{(1)})} P_{x_1}^{(1)}\jz\medwedge
\bz\medwedge_{x_2\in\supp(P^{(2)})} P_{x_2}^{(2)}\jz\label{eq:strong cq add proof1}\\
&=
\bz\medwedge_{x_1\in\supp(P^{(1)})} (W_{x_1}^{(1)})^0\otimes I_{\hil^{(2)}}\jz
\bz\medwedge_{x_2\in\supp(P^{(2)})} I_{\hil^{(1)}}\otimes (W_{x_2}^{(2)})^0\jz\nn\\
&=
\bz\jsp{P^{(1)}}{W^{(1)}}\otimes  I_{\hil^{(2)}}\jz
\bz I_{\hil^{(1)}}\otimes\jsp{P^{(2)}}{W^{(2)}}\jz\nn\\
&=
\jsp{P^{(1)}}{W^{(1)}}\otimes\jsp{P^{(2)}}{W^{(2)}}.\nn
\end{align}
In the above, \eqref{eq:strong cq add proof1}
follows from the fact that the wedge of projections is the projection onto the intersection of the ranges, and
the commutativity and associativity of intersection, while the rest of the steps are fairly obvious.
Thus,
\begin{align*}
D_P^{\bary,\Um}(W^{(1)}\chtimes W^{(2)})=+\infty
&\ds\iff\ds
\jsp{P}{W^{(1)}\chtimes W^{(2)}}=0\\
&\ds\iff\ds
\jsp{P^{(1)}}{W^{(1)}}=0\ds\ds\text{or}\ds\ds\jsp{P^{(2)}}{W^{(2)}}=0\\
&\ds\iff\ds
D^{\bary,\Um}_{P^{(1)}}(W^{(1)})=+\infty\ds\ds\text{or}\ds\ds
D^{\bary,\Um}_{P^{(2)}}(W^{(2)})=+\infty,
\end{align*}
proving the desired additivity for $D_P^{\bary,\Um}$ when 
$\jsp{P}{W^{(1)}\chtimes W^{(2)}}=0$.

Assume therefore that $\jsp{P}{W^{(1)}\chtimes W^{(2)}}\ne 0$.
Then by $\jsp{P}{W^{(1)}\chtimes W^{(2)}}=\jsp{P^{(1)}}{W^{(1)}}\otimes\jsp{P^{(2)}}{W^{(2)}}$
established above, and using \eqref{eq:logn mult}, it is straightforward to see that  
\begin{align*}
\GM_P^{\LE}\bz W^{(1)}\chtimes W^{(2)}\jz
=\GM_{P^{(1)}}^{\LE}\bz W^{(1)}\jz\otimes \GM_{P^{(2)}}^{\LE}\bz W^{(2)}\jz,
\end{align*}
whence
\begin{align*}
\hell_P^{\bary,\Um}\bz W^{(1)}\chtimes W^{(2)}\jz
=
\hell_{P^{(1)}}^{\bary,\Um}\bz W^{(1)}\jz
\otimes
\hell_{P^{(2)}}^{\bary,\Um}\bz W^{(2)}\jz,
\end{align*}
and 
\begin{align}\label{eq:minimal additivity}
D_P^{\bary,\Um}\bz W^{(1)}\chtimes W^{(2)}\jz=
D^{\bary,\Um}_{P^{(1)}}\bz W^{(1)}\jz
+
D^{\bary,\Um}_{P^{(2)}}\bz W^{(2)}\jz.
\end{align}

This proves strong cq-additivity for $D^{\bary,\Um}$, and 
from this, the strong cq-additivity for $D^{\bary,\Um,\#_0}$ follows immediately by 
\eqref{eq:shorted minimal explicit} using \eqref{eq:acc mult}.
\end{proof}

\subsection{Regularized barycentric R\'enyi divergences}
\label{sec:regularized}

Apart from the two examples discussed in Proposition \ref{prop:strong cq add}, no general additivity property is known for any barycentric R\'enyi divergence, not even weak additivity of
$D_P^{\bary,\qv}$ for a specific $D^{\qv}$ and non-Dirac $P$. 
One of the main obstacles to deciding whether a barycentric R\'enyi divergence is additive is the lack of an explicit expression for it, apart from the special cases with generating relative entropies
$\DU$ and $D^{\Um,\#_0}$. In this case, it is natural to look at regularized quantities, which is what we are going to do in this section. 

We start with the following simple observation:

\begin{lemma}\label{lemma:regularized meas barycentric}
Let $\X$ be a finite set, $P\in\S(\X)$, and $W\in\B(\X,\hil)\p$. Then 
\begin{align*}
D_P^{\bary,\Um}(W)-\frac{d\log(n+1)}{n}
\le
\frac{1}{n}D_P^{\bary,\meas}(W^{\otimes n})
\le
D_P^{\bary,\Um}(W),\ds\ds\ds n\in\bN,
\end{align*}
where $d:=\dim\hil$, and
\begin{align*}
D_P^{\bary,\Um}(W)
=\lim_{n\to+\infty}\frac{1}{n}D_P^{\bary,\meas}(W^{\otimes n})
=\breg{D}_P^{\bary,\meas}(W).
\end{align*}
\end{lemma}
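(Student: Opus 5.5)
The plan is to prove the two inequalities separately and then read off the limit.

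\emph{Upper bound.} Since $D^{\meas}\le\DU$, we have $\frac1n D_P^{\bary,\meas}(W^{\otimes n})\le \frac1n D_P^{\bary,\Um}(W^{\otimes n})$. By Proposition \ref{prop:strong cq add}, $D^{\bary,\Um}$ is additive, so the right-hand side equals $D_P^{\bary,\Um}(W)$. The case $\jsp{P}{W}=0$ is trivial: then $\jsp{P}{W^{\otimes n}}=0$ as well, and both sides are $+\infty$.

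\emph{Lower bound.} Assume $S:=\jsp{P}{W}\ne 0$. By the Remark after \eqref{eq:multirenyi def}, it suffices to consider states $\omega_n$ with $\omega_n^0\le S^{\otimes n}$. The plan is to reduce first to permutation-invariant states and then pinch.

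\emph{Step 1 (symmetrization).} Each $\omega\mapsto D^{\meas}(\omega\|W_x^{\otimes n})$ is convex and invariant under conjugation by permutation unitaries, which fix $W_x^{\otimes n}$. Convexity holds because $D^{\meas}$ is a supremum of classical relative entropies, each jointly convex. Hence replacing $\omega_n$ by its symmetrization does not increase the objective, and we may take $\omega_n\in\S_{\symm}(\hil^{\otimes n})$.

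\emph{Step 2 (pinching).} Pinch with respect to a symmetric, universal-type operator. Choose $\sigma_{u,n}$, a convex combination of $\sigma^{\otimes n}$ over types, or the projection onto Schur--Weyl blocks. It commutes with every symmetric state and satisfies $\omega\le v_n\,\mathcal{E}(\omega)$ with $v_n\le (n+1)^{d}$ or similar.

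The cleaner route is to use the standard bound $D^{\meas}(\omega_n\|\sigma)\ge \DU(\omega_n\|\sigma)-\log|\mathrm{spec}(\sigma)|$, obtained by measuring in the eigenbasis of the pinched $\omega_n$ relative to $\sigma$. For $\sigma=W_x^{\otimes n}$ the number of distinct eigenvalues is at most $(n+1)^{d}$. Pinching $\omega_n$ with respect to $W_x^{\otimes n}$ gives $\DU(\mathcal{E}_x(\omega_n)\|W_x^{\otimes n})\ge \DU(\omega_n\|W_x^{\otimes n})-\log(n+1)^d$, by Hayashi's pinching inequality. Since $\mathcal{E}_x(\omega_n)$ commutes with $W_x^{\otimes n}$, we get $D^{\meas}(\omega_n\|W_x^{\otimes n})\ge D^{\meas}(\mathcal{E}_x(\omega_n)\|W_x^{\otimes n})=\DU(\mathcal{E}_x(\omega_n)\|W_x^{\otimes n})$ by monotonicity under the pinching channel. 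For this identification to be exact, the measurement should be the projective one in the common eigenbasis.

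So for each $x$,
\begin{align*}
D^{\meas}(\omega_n\|W_x^{\otimes n})\ge \DU(\omega_n\|W_x^{\otimes n})-d\log(n+1).
\end{align*}
Symmetrization is then not even needed. Averaging with weights $P(x)$ and applying Lemma \ref{lemma:variational} gives
\begin{align*}
\sum_xP(x)D^{\meas}(\omega_n\|W_x^{\otimes n})\ge -n\log\Tr\GM_P^{\LE}(W)+\DU(\omega_n\|\oll\hell^{\otimes n})-d\log(n+1)\ge nD_P^{\bary,\Um}(W)-d\log(n+1).
\end{align*}
Taking the infimum over $\omega_n$ and dividing by $n$ yields the lower bound.

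\emph{Main obstacle.} The delicate point is the pinching estimate $\DU(\mathcal{E}_x(\omega)\|\sigma)\ge\DU(\omega\|\sigma)-\log v$ together with $D^{\meas}(\omega\|\sigma)\ge \DU(\mathcal{E}_\sigma(\omega)\|\sigma)$, when $\sigma=W_x^{\otimes n}$ may be non-invertible. One must check that $\omega^0\le (W_x^0)^{\otimes n}$ is preserved under pinching, which it is, since the spectral projections of $\sigma$ include its kernel. One must also verify that the number of distinct eigenvalues of $W_x^{\otimes n}$ is at most $(n+1)^{d}$, by counting types. The limit statement then follows by the squeeze, and it equals $\breg{D}_P^{\bary,\meas}(W)$ by definition.
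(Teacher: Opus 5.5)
Your argument is correct and is essentially the paper's proof. The paper combines the per-$x$ pinching bound $\DU(\omega_n\|W_x^{\otimes n})-d\log(n+1)\le D^{\meas}(\omega_n\|W_x^{\otimes n})\le\DU(\omega_n\|W_x^{\otimes n})$, which it cites from the literature rather than deriving as you do, with the additivity of $D_P^{\bary,\Um}$, which your use of Lemma \ref{lemma:variational} reproduces. The symmetrization and universal-state detour in Steps 1--2 is unnecessary, as you note yourself, and can be dropped.
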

\begin{proof}
For any $x\in\supp P$, any $n\in\bN$, and any $\omega_n\in\S(\hil^{\otimes n})$, we have 
\begin{align}\label{eq:pinching for Dmeas}
\frac{1}{n}\DU(\omega_n\|W_x^{\otimes n})-\frac{d\log(n+1)}{n}
\le
\frac{1}{n}D^{\meas}(\omega_n\|W_x^{\otimes n})
\le
\frac{1}{n}\DU(\omega_n\|W_x^{\otimes n}),\ds\ds\ds n\in\bN;
\end{align}
see, e.g., \cite{HT14,HP}.
From this and the additivity of $D_P^{\bary,\Um}$, all assertions follow immediately.
\end{proof}

We also have the following general weak additivity property for regularized barycentric R\'enyi divergences:
\begin{lemma}\label{lemma:barycentric weak subadd}
Let $\X$ be a finite set, $P\in\S(\X)$, and $D^{\qv}=(D^{q_x})_{x\in\X}$ be such that 
$D^{q_x}$ is weakly subadditive for every $x\in\supp P$. Then 
$D_P^{\bary,\qv}$ is weakly subadditive, and its regularized version 
\begin{align}\label{eq:barycentric weak subadd}
\lreg{D}_P^{\bary,\qv}(W)
=\inf_{n\in\bN}\frac{1}{n}D_P^{\bary,\qv}(W^{\otimes n})
=\liminf_{n\to+\infty}\frac{1}{n}D_P^{\bary,\qv}(W^{\otimes n}),\ds\ds\ds
W\in\B(\X,\hil)\p,
\end{align}
is weakly additive. 
Moreover, 
if $D^{q_x}$ is subadditive for every $x\in\supp P$ then the liminf in \eqref{eq:barycentric weak subadd} is 
actually a limit for every $W\in\B(\X,\hil)\p$ with $D_P^{\bary,\qv}(W)<+\infty$.
\end{lemma}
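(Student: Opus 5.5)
The plan is to prove weak subadditivity of $D_P^{\bary,\qv}$ directly from the variational definition \eqref{eq:multirenyi def}, and then to read off everything else from Lemma \ref{lemma:regularized div liminf} and Lemma \ref{lemma:subadditive Fekete}. Fix $W\in\B(\X,\hil)\p$ and $n\in\bN$. If $\DP{W}=+\infty$, there is nothing to prove. Otherwise $\jsp{P}{W}\ne 0$, and for any state $\omega\in\S(\hil)$ we take $\omega^{\otimes n}$ as a candidate in the optimization defining $\DP{W^{\otimes n}}$. Since $\omega$ is a state, $\Tr\omega=1$, so weak subadditivity of each $D^{q_x}$, $x\in\supp P$, gives
\begin{align*}
D^{q_x}(\omega^{\otimes n}\|\ch{x}^{\otimes n})\le n\,D^{q_x}(\omega\|\ch{x}).
\end{align*}
Terms with $P(x)=0$ contribute $0$ by the convention $0\cdot(\pm\infty)=0$, so the behaviour of $D^{q_x}$ for $x\notin\supp P$ does not matter. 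Multiplying by $P(x)$, summing over $x$, and using $\DP{W^{\otimes n}}\le\sum_xP(x)D^{q_x}(\omega^{\otimes n}\|\ch{x}^{\otimes n})$, we get $\DP{W^{\otimes n}}\le n\sum_xP(x)D^{q_x}(\omega\|\ch{x})$. Taking the infimum over $\omega$ yields $\DP{W^{\otimes n}}\le n\DP{W}$. Applying this to $W^{\otimes m}$ in place of $W$ gives the condition $a_{kn}\le ka_n$ for $a_n:=\DP{W^{\otimes n}}$.

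With weak subadditivity of $D_P^{\bary,\qv}$ in hand, part (i) of Lemma \ref{lemma:regularized div liminf} gives both the identity $\liminf_n\frac1n\DP{W^{\otimes n}}=\inf_n\frac1n\DP{W^{\otimes n}}$ and the weak additivity of $\lreg{D}_P^{\bary,\qv}$. Formally that lemma is stated for divergences on $\cup_d\B(\X,\bC^d)\p$, but the isometric invariance of $D_P^{\bary,\qv}$ quoted from \cite[Lemma 5.17]{mosonyi2022geometric} places us in that framework.

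For the final claim, assume every $D^{q_x}$, $x\in\supp P$, is subadditive, and that $\DP{W}<+\infty$. We show $a_{m+n}\le a_m+a_n$ by the same product-state argument, now with a product $\omega_m\otimes\omega_n$ where $\omega_m\in\S(\hil^{\otimes m})$ and $\omega_n\in\S(\hil^{\otimes n})$. For states, subadditivity of $D^{q_x}$ reads
\begin{align*}
D^{q_x}(\omega_m\otimes\omega_n\|\ch{x}^{\otimes m}\otimes\ch{x}^{\otimes n})\le D^{q_x}(\omega_m\|\ch{x}^{\otimes m})+D^{q_x}(\omega_n\|\ch{x}^{\otimes n}),
\end{align*}
whenever the right-hand side is well defined, which it is, since the values lie in $\bR\cup\{+\infty\}$. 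Summing with weights $P(x)$ and infimizing separately over $\omega_m$ and $\omega_n$ gives $a_{m+n}\le a_m+a_n$. Finiteness of every $a_n$ comes from weak subadditivity: $a_n\le na_1<+\infty$. Item \ref{item:Fekete2} of Lemma \ref{lemma:subadditive Fekete} then shows that the liminf is a genuine limit, possibly equal to $-\infty$ a priori; in fact each $a_n\ge0$, which is harmless either way.

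The main point needing care is bookkeeping rather than substance. We must check that the infinite values and the $0\cdot\infty$ convention are handled correctly in the case $\jsp{P}{W}=0$, and that restricting to product candidates is legitimate because the infimum is over all states. No other step is expected to be an obstacle.
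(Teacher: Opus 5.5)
Your proposal is correct and follows the paper's own route. It spells out the step the paper calls ``straightforward to verify by definition'': the product candidates $\omega^{\otimes n}$ (resp.\ $\omega_m\otimes\omega_n$) give weak subadditivity (resp.\ subadditivity) of $a_n:=D_P^{\bary,\qv}(W^{\otimes n})$, after which Lemma \ref{lemma:regularized div liminf} and Lemma \ref{lemma:subadditive Fekete} finish the argument.

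One small slip: your aside that each $a_n\ge 0$ is false for non-normalized $W$. For example, $W_x=cI$ with $c>1$ gives $D_P^{\bary,\qv}(W)\le -\log c<0$ by testing a pure state and using classical reduction. As you note, though, item \ref{item:Fekete2} only needs $a_n<+\infty$, so nothing depends on that remark.
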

\begin{proof}
The weak subadditivity of $D_{P}^{\bary,\qv}$ is straightforward to verify by definition, 
and thus the equality in \eqref{eq:barycentric weak subadd}, 
the assertion about weak additivity, and the 
assertion about subadditive generating relative entropies,
follow immediately from 
Lemma \ref{lemma:regularized div liminf}. 
\end{proof}

\begin{cor}
For any $\gamma\in(0,1]$, any finite set $\X$, any $P\in\S(\X)$, and any 
$W\in\B(\X,\hil)\p$,
\begin{align}
\breg{D}_P^{\bary,\Um,\#_\gamma}(W)
=\lim_{n\to+\infty}\frac{1}{n}D_P^{\bary,\Um,\#_\gamma}(W^{\otimes n}),\ds\ds\ds
W\in\B(\X,\hil)\p,
\end{align}
exists, and $\breg{D}_P^{\bary,\Um,\#_\gamma}$ is weakly additive.
\end{cor}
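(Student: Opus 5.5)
The corollary follows by combining Lemma \ref{lemma:barycentric weak subadd} with the additivity of $D^{\Um,\#_\gamma}$. For $\gamma\in(0,1)$, \cite{mosonyi2022geometric} shows that $D^{\Um,\#_\gamma}$ is monotone and additive. For $\gamma=1$ we have $D^{\Um,\#_1}=\DBS$, which is additive by Lemma \ref{lemma:sandwich}. Additivity of a relative entropy in the sense of $D_1^q$ implies both subadditivity and weak subadditivity. So in every case the choice $q_x=q$ with $D^q=D^{\Um,\#_\gamma}$ satisfies the hypotheses of Lemma \ref{lemma:barycentric weak subadd}, including those of its "moreover" part.

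From the lemma, $D_P^{\bary,\Um,\#_\gamma}$ is weakly subadditive and its lower regularization $\lreg{D}_P^{\bary,\Um,\#_\gamma}$ is weakly additive. It also gives that for every $W$ with $D_P^{\bary,\Um,\#_\gamma}(W)<+\infty$, the liminf $\liminf_n\frac1n D_P^{\bary,\Um,\#_\gamma}(W^{\otimes n})$ is a genuine limit. Concretely, the sequence $a_n:=D_P^{\bary,\Um,\#_\gamma}(W^{\otimes n})$ is subadditive and finite, so Lemma \ref{lemma:subadditive Fekete}\ref{item:Fekete2} applies. To see subadditivity, take near-optimal $\omega_m,\omega_n$ and use $\omega_m\otimes\omega_n$ as a test state together with additivity of each $D^{q_x}$.

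The remaining case is $D_P^{\bary,\Um,\#_\gamma}(W)=+\infty$. By the finiteness property of quantum relative entropies and the Remark after \eqref{eq:multirenyi def}, this happens precisely when $\jsp{P}{W}=0$. In that case $\jsp{P}{W^{\otimes n}}=(\jsp{P}{W})^{\otimes n}=0$, where the first equality comes from the tensor computation of joint supports in the proof of Proposition \ref{prop:strong cq add}. Hence $D_P^{\bary,\Um,\#_\gamma}(W^{\otimes n})=+\infty$ for all $n$, and the limit exists trivially and equals $+\infty$. This establishes that $\lreg{D}=\ureg{D}$ everywhere, so $\breg{D}_P^{\bary,\Um,\#_\gamma}$ is well defined, and it is weakly additive because it coincides with $\lreg{D}$.

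The only step requiring any care is checking that the hypotheses of Lemma \ref{lemma:barycentric weak subadd} hold, which is the additivity of $D^{\Um,\#_\gamma}$ for every $\gamma\in(0,1]$. For $\gamma\in(0,1)$ this is cited from \cite{mosonyi2022geometric}, and for $\gamma=1$ it is the additivity of $\DBS$ in Lemma \ref{lemma:sandwich}. Beyond that, we only need the observation about vanishing joint supports.
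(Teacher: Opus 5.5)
Your proposal is correct and takes the same route as the paper, which proves the corollary in one line from Lemma~\ref{lemma:barycentric weak subadd} together with the additivity of $D^{\Um,\#_\gamma}$ for all $\gamma\in(0,1]$ (with $D^{\Um,\#_1}=\DBS$). Your separate treatment of the case $\jsp{P}{W}=0$ fills in a detail the paper leaves implicit, since the \emph{moreover} part of that lemma only covers finite values; your argument via $\jsp{P}{W^{\otimes n}}=(\jsp{P}{W})^{\otimes n}$ handles it correctly.
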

\begin{proof}
Follows immediately from Lemma \ref{lemma:barycentric weak subadd} due to the fact that all
$D^{\Um,\#_{\gamma}}$, $\gamma\in(0,1]$, are additive.
\end{proof}

In what follows, we prove that the regularized versions of any barycentric R\'enyi divergence
generated by monotone quantum relative entropies are between $D_P^{\bary,\Um}$ and $D_P^{\bary,\Um,\#_0}$.
We start with the following:

\begin{lemma}\label{lemma:sharp restriction equality2}
For any finite set $\X$, and any $P\in\S(\X)$ and $W\in\B(\X,\hil)\p$ with 
$S:=\jsp{P}{W}\ne 0$,
\begin{align*}
\breg{D}_P^{\bary,\Um,\#_\gamma}(W)=\breg{D}_P^{\bary,\Um,\#_\gamma}\bz\acc{W}{S}\jz,\ds\ds\ds\gamma\in[0,1].
\end{align*}
\end{lemma}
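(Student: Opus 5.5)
The plan is to apply Lemma \ref{lemma:sharp restriction equality} to $W^{\otimes n}$ for every $n$, and then divide by $n$ and pass to the limit. Note first that the regularized quantity on the left exists by the Corollary above (for $\gamma\in(0,1]$), while for $\gamma=0$ it exists trivially by additivity (Proposition \ref{prop:strong cq add}). The same applies to $\acc{W}{S}$, since it is just another element of $\B(\X,\hil)\p$. So it suffices to show the unregularized identity
\begin{align*}
D_P^{\bary,\Um,\#_\gamma}\bz W^{\otimes n}\jz=D_P^{\bary,\Um,\#_\gamma}\bz (\acc{W}{S})^{\otimes n}\jz,\ds\ds\ds n\in\bN.
\end{align*}

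The first key step is to identify the joint support of $W^{\otimes n}$. By the same wedge computation as in the proof of Proposition \ref{prop:strong cq add}, applied iteratively or directly using $(W_x^{\otimes n})^0=(W_x^0)^{\otimes n}$, we get $\jsp{P}{W^{\otimes n}}=S^{\otimes n}$, which is nonzero since $S\ne 0$. Lemma \ref{lemma:sharp restriction equality} applied to $W^{\otimes n}$ on $\hil^{\otimes n}$ with this joint support then gives
\begin{align*}
D_P^{\bary,\Um,\#_\gamma}\bz W^{\otimes n}\jz=D_P^{\bary,\Um,\#_\gamma}\bz \bz\acc{W_x^{\otimes n}}{S^{\otimes n}}\jz_{x\in\X}\jz .
\end{align*}

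The second key step is the tensor multiplicativity \eqref{eq:acc mult} of the absolutely continuous part, iterated $n-1$ times. It gives $\acc{W_x^{\otimes n}}{S^{\otimes n}}=(\acc{W_x}{S})^{\otimes n}$, so the right-hand side above equals $D_P^{\bary,\Um,\#_\gamma}((\acc{W}{S})^{\otimes n})$. Dividing by $n$ and letting $n\to\infty$ concludes the proof.

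There is no real obstacle; the only point needing care is the joint-support identity $\jsp{P}{W^{\otimes n}}=S^{\otimes n}$. This follows because $\medwedge_x (W_x^0)^{\otimes n}=(\medwedge_x W_x^0)^{\otimes n}$. That in turn holds because the range of a tensor product of projections is the tensor product of the ranges, and intersections of such product subspaces factor: $(A_1\otimes B_1)\cap(A_2\otimes B_2)=(A_1\cap A_2)\otimes(B_1\cap B_2)$. This is exactly the identity used in \eqref{eq:strong cq add proof1}.
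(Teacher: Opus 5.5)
Your proposal is correct and is essentially the paper's proof: the paper also notes $\jsp{P}{W^{\otimes n}}=S^{\otimes n}$, uses \eqref{eq:acc mult} to get $\acc{W_x^{\otimes n}}{S^{\otimes n}}=(\acc{W_x}{S})^{\otimes n}$, and then applies Lemma \ref{lemma:sharp restriction equality} to $W^{\otimes n}$ for each $n$. Your remarks on the existence of the limits are harmless but not needed: the two sequences agree term by term, so their $\liminf$ and $\limsup$ coincide automatically.
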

\begin{proof}
It is easy to see that 
$\jsp{W^{\otimes n}}{P}=\jsp{P}{W}^{\otimes n}$, and that for $S:=\jsp{P}{W}$, 
\begin{align*}
\acc{\ch{x}^{\otimes n}}{S^{\otimes n}}=
(\acc{\ch{x}}{S})^{\otimes n}, \ds\ds\ds n\in\bN;
\end{align*}
according to \eqref{eq:acc mult}. Hence, the assertion follows from 
Lemma \ref{lemma:sharp restriction equality}.
\end{proof}

\begin{definition}\label{def:uss}
We say that a sequence of states $\univ_n\in\S(\hil^{\otimes n})$, $n\in\bN$, is a sequence of 
\ki{universal symmetric states} on a finite-dimensional Hilbert space $\hil$ if there exists a polynomial
$g_d$, where $d=\dim\hil$, such that 
\begin{itemize}
\item
for every $n\in\bN$, $\univ_n$ is symmetric and it commutes with every symmetric state on $\hil^{\otimes n}$;
\item
for every symmetric state $\omega\in\S_{\symm}(\hil^{\otimes n})$, 
$\omega\le g_d(n)\univ_n$.
\end{itemize}
\end{definition}

It is well known that for every finite-dimensional Hilbert space $\hil$, there exists a sequence of universal symmetric states with $g_d(n):=(n+1)^{\frac{d(d+1)-2}{2}}$; see, e.g., \cite{universalcq} or
\cite[Appendix A]{MO-cqconv}.
Note that the definition implies that $\Omega_n$ commutes with every symmetric operator on $\hil^{\otimes n}$; in particular,
\begin{align}\label{eq:uss unitary inv}
U^{\otimes n}\Omega_n(U^{\otimes n})^*=\Omega_n
\end{align}
for any unitary $U$ on $\hil$.
Moreover, since the maximally mixed state on $\hil^{\otimes n}$ is symmetric, the domination property implies that $\Omega_n$ is invertible.

\begin{lemma}\label{lemma:tilted states}
Let $(\univ_n)_{n\in\bN}$ be a sequence of universal symmetric states on a finite-dimensional Hilbert space $\hil$, and let $\omega_n\in\S_{\symm}(\hil^{\otimes n})$, $n\in\bN$, be a sequence of symmetric states. 
For every $n\in\bN$ and $t\in(0,1]$, define
\begin{align*}
\omega_{n,t}:=\frac{G_{(t,1-t)}^{\LE}(\omega_n,\univ_n)}{\Tr G_{(t,1-t)}^{\LE}(\omega_n,\univ_n)},
\ds\ds\ds\text{where}\ds\ds\ds
G_{(t,1-t)}^{\LE}(\omega_n,\univ_n):=\omega_n^t\univ_n^{1-t}
\end{align*}
is the (log-Euclidean) $t$-weighted geometric mean of $\omega_n$ and $\univ_n$. Then 
\begin{align}\label{eq:tilted states1}
\DU(\omega_{n,t}\|\omega_n)\le\frac{1-t}{t}\log g_d(n),\ds\ds\ds t\in(0,1], \,n\in\bN.
\end{align}
In particular, 
\begin{align}\label{eq:tilted states2}
\lim_{n\to+\infty}\frac{1}{n}\DU(\omega_{n,t}\|\omega_n)=0,\ds\ds\ds t\in(0,1].
\end{align}
\end{lemma}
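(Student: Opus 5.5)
The plan is to exploit commutativity and reduce everything to a one-parameter convexity argument. Write $\omega:=\omega_n$, $\univ:=\univ_n$, $g:=g_d(n)$. Since $\omega$ is symmetric, Definition \ref{def:uss} gives that $\univ$ commutes with $\omega$, and $\omega\le g\univ$. We also know $\univ$ is invertible. Hence $\omega^t\univ^{1-t}$ is PSD with support $\omega^0$, and $\omega_{n,t}$ is a well-defined state with $\omega_{n,t}^0=\omega^0$, so $\DU(\omega_{n,t}\|\omega)<+\infty$. For $t=1$ we have $\omega_{n,1}=\omega$, and the claim is trivial; so assume $t\in(0,1)$. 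All operators below commute and live on $\ran\omega^0$, so the computation is effectively classical, done in a joint eigenbasis.

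Set $s:=1-t$ and define
\begin{align*}
\varphi(r):=\log\Tr\omega^{1-r}\univ^{r}\omega^0,\qquad r\in[0,1].
\end{align*}
This is a log-moment-generating function, namely $\log\sum_i p_i e^{r c_i}$ with $p_i$ the eigenvalues of $\omega$ and $c_i=\log(\univ_i/p_i)$. Hence $\varphi$ is convex and differentiable, and $\varphi(0)=\log\Tr\omega=0$. Moreover $\varphi(1)=\log\Tr\omega^0\univ\le\log\Tr\univ=0$. The derivative is $\varphi'(r)=\Tr\omega_{n,1-r}\bigl(\logn\univ-\logn\omega\bigr)$, computed on the support. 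A direct computation then gives
\begin{align*}
\DU(\omega_{n,t}\|\omega)=\Tr\omega_{n,t}\bigl[s(\logn\univ-\logn\omega)\bigr]-\log\Tr\omega^t\univ^{s}=s\varphi'(s)-\varphi(s).
\end{align*}

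Convexity gives $\varphi(1)\ge\varphi(s)+(1-s)\varphi'(s)$. Combined with $\varphi(1)\le 0$, this yields $\varphi'(s)\le-\varphi(s)/(1-s)$. Therefore
\begin{align*}
\DU(\omega_{n,t}\|\omega)\le-\varphi(s)\Bigl(\frac{s}{1-s}+1\Bigr)=-\frac{1}{t}\log\Tr\omega^t\univ^{1-t}.
\end{align*}

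It remains to lower bound the normalization $\Tr\omega^t\univ^{1-t}$. Since $\omega\le g\univ$ and the two operators commute, on the support of $\omega$ we have $\univ^{1-t}\omega^0\ge g^{-(1-t)}\omega^{1-t}$. Hence $\Tr\omega^t\univ^{1-t}\ge g^{-(1-t)}\Tr\omega=g^{-(1-t)}$. This gives \eqref{eq:tilted states1}. Then \eqref{eq:tilted states2} follows because $g_d$ is a polynomial, so $\frac1n\log g_d(n)\to 0$.

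The only delicate step is identifying the relative entropy with $s\varphi'(s)-\varphi(s)$ and using convexity in the right direction. The key observation is that $\varphi(1)\le 0$, which holds because $\univ$ is a state. Without this, the term $\Tr\omega_{n,t}\log(\univ/\omega)$ has no obvious upper bound.
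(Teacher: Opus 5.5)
Your proof is correct and is essentially the paper's argument. The paper rearranges to $-\log c_{n,t}=(1-t)\DU(\omega_{n,t}\|\univ_n)+t\DU(\omega_{n,t}\|\omega_n)$ and drops the first term by positivity of relative entropy. Your tangent-line inequality combined with $\varphi(1)\le 0$ is exactly the statement $\DU(\omega_{n,t}\|\univ_n)=-t\varphi'(s)-\varphi(s)\ge 0$, so you reach the same intermediate bound $\DU(\omega_{n,t}\|\omega_n)\le-\frac{1}{t}\log c_{n,t}$ and then use the same lower bound $c_{n,t}\ge g_d(n)^{t-1}$.
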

\begin{proof}
Note that $\omega_{n,t}^0=\omega_n^0$. 
Let $c_{n,t}:=\Tr\omega_n^t\univ_n^{1-t}$. We have
\begin{align*}
\logn\omega_{n,t}
&=
t\logn\omega_n+(1-t)\omega_n^0(\logn\univ_n)\omega_n^0-\omega_n^0\log c_{n,t},
\end{align*}
whence
\begin{align}\label{eq:tilted states proof2}
\Tr\omega_{n,t}\logn\omega_{n,t}
&=
t\Tr\omega_{n,t}\logn\omega_n+(1-t)\Tr\omega_{n,t}\logn\univ_n-\log c_{n,t}.
\end{align}
A simple rearrangement gives
\begin{align*}
-\log c_{n,t}=(1-t)\DU(\omega_{n,t}\|\univ_n)+t\DU(\omega_{n,t}\|\omega_n)\ge t\DU(\omega_{n,t}\|\omega_n),
\end{align*}
whence
\begin{align}\label{eq:tilted states proof1}
\DU(\omega_{n,t}\|\omega_n)\le-\frac{1}{t}\log c_{n,t}.
\end{align}
Since $\omega_n\le g_d(n)\univ_n$, we have $\omega_n^{1-t}\le g_d(n)^{1-t}\univ_n^{1-t}$. Thus,
\begin{align*}
1\ge c_{n,t}\ge\Tr\omega_n^t\bz\omega_n^{1-t}g_d(n)^{t-1}\jz=g_d(n)^{t-1}, 
\end{align*}
where the first inequality follows by the H\"older inequality.
Substituting this back into \eqref{eq:tilted states proof1} gives \eqref{eq:tilted states1}, and 
\eqref{eq:tilted states2} follows immediately.
\end{proof}

\begin{lemma}\label{lemma:BS-UM bound}
Let $\omega\in\S(\hil)$, let $n\in\bN$, $t\in(0,1]$, and let $\Omega_n\in\S(\hil^{\otimes n})$ be 
a positive definite state that commutes with every symmetric state. 
Define
\begin{align}\label{eq:BS-UM bound state}
\omega_{n,t}:=\frac{G_{(t,1-t)}^{\LE}(\omega^{\otimes n},\univ_n)}{\Tr G_{(t,1-t)}^{\LE}(\omega^{\otimes n},\univ_n)},
\ds\ds\ds\text{where}\ds\ds\ds
G_{(t,1-t)}^{\LE}(\omega^{\otimes n},\univ_n):=(\omega^{\otimes n})^t\univ_n^{1-t}.
\end{align}
For any $A\in\B(\hil)\p$ with $\omega^0\le A^0$, 
\begin{align}\label{eq:BS-UM bound}
0\le\DBS(\omega_{n,t}\|A^{\otimes n})-\DU(\omega_{n,t}\|A^{\otimes n})
\le
n\norm{\logn(\omega^{t/2} A\inv\omega^{t/2})+\logn A-t\logn\omega}_{\infty}.
\end{align}
\end{lemma}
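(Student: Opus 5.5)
The plan is to compute both relative entropies explicitly, exploiting that $\univ_n$ commutes with all the tensor-power operators that appear. The lower bound $0\le\DBS-\DU$ is immediate from Lemma \ref{lemma:sandwich}: $\DBS$ is monotone and additive, hence weakly subadditive, so $\DU\le\DBS$. Set $\rho:=\omega_{n,t}$, $\sigma:=A^{\otimes n}$ and $c:=\Tr(\omega^{\otimes n})^t\univ_n^{1-t}$. The key structural fact is that $\univ_n$ commutes with every symmetric operator on $\hil^{\otimes n}$. This holds because every permutation-invariant self-adjoint operator is a real combination of (multiples of) symmetric states, namely its positive and negative parts. Hence $\univ_n$ commutes with $\omega^{\otimes n}$, with $(\omega^0)^{\otimes n}$, and with $B^{\otimes n}$, where $B:=\omega^{t/2}A\inv\omega^{t/2}$. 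In particular $\rho^{1/2}=c^{-1/2}(\omega^{\otimes n})^{t/2}\univ_n^{(1-t)/2}$ is a product of commuting factors, and $\rho^0=(\omega^0)^{\otimes n}$ because $\univ_n$ is invertible.

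First I compute the $\DBS$ term. Since $\omega^0\le A^0$, we have $\rho^0\le\sigma^0$, so both quantities are finite. Then
\begin{align*}
\rho^{1/2}\sigma\inv\rho^{1/2}
=c\inv\,\univ_n^{(1-t)/2}\bz\omega^{t/2}A\inv\omega^{t/2}\jz^{\otimes n}\univ_n^{(1-t)/2}
=c\inv\,\univ_n^{1-t}B^{\otimes n},
\end{align*}
where the last step uses the commutation of $\univ_n$ with $B^{\otimes n}$. The two factors commute, and $B^0=\omega^0$ (because $\omega^0\le A^0$, the operator $A\inv$ is invertible on $\ran\omega^0$). Therefore
\begin{align*}
\logn(\rho^{1/2}\sigma\inv\rho^{1/2})=-\log c\,\rho^0+(1-t)\rho^0(\logn\univ_n)\rho^0+\logn B^{\otimes n}.
\end{align*}
By \eqref{eq:logn mult}, $\logn B^{\otimes n}$ is the sum over $k$ of $\logn B$ acting on the $k$-th factor, compressed by $\rho^0$.

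Second I compute the $\DU$ term. As in the proof of Lemma \ref{lemma:tilted states},
\begin{align*}
\logn\rho=t\logn\omega^{\otimes n}+(1-t)\rho^0(\logn\univ_n)\rho^0-\log c\,\rho^0,
\end{align*}
and $\logn A^{\otimes n}$ is again expanded by \eqref{eq:logn mult}. Subtracting, the terms $-\log c$ and $(1-t)\logn\univ_n$ cancel exactly, which is the point of the construction. Since every marginal $\rho^{(k)}$ is supported in $\omega^0$, the projections can be dropped, and we obtain
\begin{align*}
\DBS(\rho\|\sigma)-\DU(\rho\|\sigma)=\sum_{k=1}^n\Tr\rho^{(k)}\left[\logn B+\logn A-t\logn\omega\right].
\end{align*}
Each summand is at most $\norm{\logn B+\logn A-t\logn\omega}_\infty$, since $\rho^{(k)}$ is a state. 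This gives \eqref{eq:BS-UM bound}.

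The main obstacle is the support bookkeeping when $\omega$ is not invertible. One has to verify that $B^0=\omega^0$, that $\logn$ of the product of the commuting PSD operators $\univ_n^{1-t}$ and $B^{\otimes n}$ splits as the compressed sum of the individual logarithms (using the commuting-product formula for $\logn$ from the preliminaries), and that the compressions by $\rho^0=(\omega^0)^{\otimes n}$ are harmless inside $\Tr\rho(\cdot)$.
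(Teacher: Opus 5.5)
Your proposal is correct and follows essentially the same route as the paper. Both arguments compute $\DBS$ and $\DU$ explicitly, using that $\univ_n$ commutes with the symmetric operators $\omega^{\otimes n}$ and $(\omega^{t/2}A\inv\omega^{t/2})^{\otimes n}$. The $-\log c$ and $(1-t)\logn\univ_n$ terms then cancel, the difference reduces to a sum over the one-site marginals, and a H\"older bound finishes the proof. The only difference is minor: you derive $\DU\le\DBS$ from Lemma \ref{lemma:sandwich}, whereas the paper just cites it as well known.
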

\begin{proof}
The first inequality in \eqref{eq:BS-UM bound} is well known; see, e.g., \cite{HP}.
Next, note that 
\begin{align*}
\omega_{n,t}^{1/2}(A^{\otimes n})\inv\omega_{n,t}^{1/2}
=
\frac{1}{c_{n,t}}\Omega_n^{1-t}(\omega^{t/2} A\inv\omega^{t/2})^{\otimes n},
\end{align*}
where $c_{n,t}:=\Tr (\omega^{\otimes n})^t\Omega_n^{1-t}$, and we used that 
$\Omega_n$ commutes with any symmetric state. The support projection of this operator is 
$S_n:=(\omega^0)^{\otimes n}=(\omega^{\otimes n})^0=\omega_{n,t}^0$, and 
\begin{align*}
\logn\bz\omega_{n,t}^{1/2}(A\inv)^{\otimes n}\omega_{n,t}^{1/2}\jz
=
-S_n\log c_{n,t}+(1-t)S_n(\log\Omega_n)S_n
+\sum_{k=1}^nS_n\iota_k\bz\logn\bz\omega^{t/2} A\inv\omega^{t/2}\jz\jz,
\end{align*}
where $\iota_k:\,X\mapsto X\otimes(\otimes_{j\in[n]\setminus\{k\}}I)$, $X\in\B(\hil)$, is the canonical 
embedding of $\B(\hil)$ into the $k$-th tensor component in $\otimes_{j\in[n]}\B(\hil)$.
Thus,
\begin{align}
\DBS(\omega_{n,t}\|A^{\otimes n})
&=
\Tr\omega_{n,t}\logn\bz\omega_{n,t}^{1/2}(A\inv)^{\otimes n}\omega_{n,t}^{1/2}\jz\nn\\
&=
-\log c_{n,t}+(1-t)\Tr\omega_{n,t}\logn\Omega_n
+\sum_{k=1}^n\Tr\omega_{n,t}\iota_k\bz\logn(\omega^{t/2} A\inv\omega^{t/2})\jz,
\label{eq:BS-UM bound proof1}
\end{align}
where we used \eqref{eq:logn mult} and that $(\omega^{t/2} A\inv\omega^{t/2})^0=\omega^0$.
On the other hand, 
\begin{align}
\DU(\omega_{n,t}\|A^{\otimes n})
&=
\Tr\omega_{n,t}\logn\omega_{n,t}-
\Tr\omega_{n,t}\logn A^{\otimes n}\nn\\
&=-\log c_{n,t}+(1-t)\Tr\omega_{n,t}\logn\Omega_n+t\Tr\omega_{n,t}\logn\omega^{\otimes n}
-\Tr\omega_{n,t}\logn A^{\otimes n}\nn\\
&=
-\log c_{n,t}+(1-t)\Tr\omega_{n,t}\logn\Omega_n+\sum_{k=1}^n\Tr\omega_{n,t}\bz\iota_k(t\logn\omega-\logn A)\jz,
\label{eq:BS-UM bound proof2}
\end{align}
where we used the identity 
\eqref{eq:tilted states proof2}, and that $\logn\omega^{\otimes n}=S_n\sum_{k=1}^n\iota_k(\logn\omega)$, 
$\logn A^{\otimes n}=(A^0)^{\otimes n}\sum_{k=1}^n\iota_k(\logn A)$.
Subtracting \eqref{eq:BS-UM bound proof2} from \eqref{eq:BS-UM bound proof1} yields
\begin{align}
\DBS(\omega_{n,t}\|A^{\otimes n})
-\DU(\omega_{n,t}\|A^{\otimes n})
&=\sum_{k=1}^n\Tr\omega_{n,t}^{(k)}\left[\logn(\omega^{t/2} A\inv\omega^{t/2})+\logn A-t\logn\omega\right],
\end{align}
where $\omega_{n,t}^{(k)}:=\Tr_{[n]\setminus\{k\}}\omega_{n,t}$ is the $k$-th marginal of $\omega_{n,t}$.
A simple H\"older inequality then yields \eqref{eq:BS-UM bound}.
\end{proof}

\begin{lemma}\label{lemma:regularized bounds}
Let $\X$ be a finite set, let $P\in\S(\X)$, let $W\in\B(\X,\hil)\p$, and let 
\begin{align*}
\oll\hell:=\hell_P^{\bary,\Um}(W):=\frac{\GM_P^{\LE}(W)}{\Tr \GM_P^{\LE}(W)},\ds\ds\ds\ds
\GM_P^{\LE}(W):=\jsp{P}{W}\exp\bz\sum_{x\in\X}P(x)\jsp{P}{W}(\logn \ch{x})\jsp{P}{W}\jz\,,
\end{align*}
whenever $\jsp{P}{W}\ne 0$, and otherwise let $\oll\omega$ be an arbitrary state.
Let $D^{\qv}=(D^{q_x})_{x\in\X}$ be such that 
$D^{q_x}$ is a monotone relative entropy for every $x\in\supp P$, and 
let $g_d$ be a polynomial corresponding to some
sequence of universal symmetric states $(\Omega_n)_{n\in\bN}$ as in Definition \ref{def:uss},
where $d:=\dim\hil$.
Then
\begin{align}
&D_P^{\bary,\Um}(W)-\frac{d\log(n+1)}{n}\label{eq:finiten bound-2}\\
&\ds\le
\frac{1}{n}D_P^{\bary,\meas}(W^{\otimes n})\label{eq:finiten bound-1}\\
&\ds\le
\frac{1}{n}D_P^{\bary,\qv}(W^{\otimes n})\label{eq:finiten bound0}\\
&\ds\le
\frac{1}{n}D_P^{\bary,\max}(W^{\otimes n})\label{eq:finiten bound1}\\
&\ds\le
D_P^{\bary,\Um}(W)
+\frac{1-t}{t}\frac{\log g_d(n)}{n}
+\sum_{x\in\X}P(x)\norm{\logn(\oll\omega^{t/2} \ch{x}\inv\oll\omega^{t/2})+\logn \ch{x}-t\logn\oll\omega}_{\infty}\label{eq:finiten bound2}
\end{align}
for every $n\in\bN$ and every $t\in(0,1]$, and 
\begin{align}
D_P^{\bary,\Um}(W)&=\breg{D}_P^{\bary,\meas}(W)\label{eq:finiten bound2-2}\\
&\le
\lreg{D}_P^{\bary,\qv}(W)
\le
\ureg{D}_P^{\bary,\qv}(W)
\label{eq:finiten bound3}\\
&\le\breg{D}_P^{\bary,\max}(W)\label{eq:finiten bound4}\\
&\le
D_P^{\bary,\Um}(W)
+\sum_{x\in\X}P(x)\norm{\logn(\jsp{P}{W} \ch{x}\inv \jsp{P}{W})+\logn \ch{x}}_{\infty}.
\label{eq:finiten bound5}
\end{align}
\end{lemma}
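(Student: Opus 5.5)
The chain of inequalities splits into a finite-$n$ part, \eqref{eq:finiten bound-2}--\eqref{eq:finiten bound2}, and an asymptotic part, \eqref{eq:finiten bound2-2}--\eqref{eq:finiten bound5}; I would prove the finite-$n$ bounds first and then pass to the limit.

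\emph{Finite-$n$ bounds.} The first inequality is exactly Lemma \ref{lemma:regularized meas barycentric}. The second and third follow termwise from \eqref{eq:monotone relentr sandwich} in Lemma \ref{lemma:sandwich}, namely $D^{\meas}\le D^{q_x}\le\DBS$ for every $x\in\supp P$, after taking the infimum over $\omega$. If $\jsp{P}{W}=0$, every quantity equals $+\infty$ and there is nothing to prove, so assume $\jsp{P}{W}\ne0$.

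For the fourth inequality I would use a single test state in the definition of $D_P^{\bary,\max}(W^{\otimes n})$: the tilted state $\omega_{n,t}$ of \eqref{eq:BS-UM bound state} built from $\oll\omega^{\otimes n}$ and $\Omega_n$. Its support is $(\oll\omega^0)^{\otimes n}=\jsp{P}{W}^{\otimes n}\le (W_x^0)^{\otimes n}$, so it is admissible. Lemma \ref{lemma:BS-UM bound}, applied with $A=W_x$ and $\omega=\oll\omega$, gives
\[
\DBS(\omega_{n,t}\|W_x^{\otimes n})\le\DU(\omega_{n,t}\|W_x^{\otimes n})+n\norm{\logn(\oll\omega^{t/2}W_x\inv\oll\omega^{t/2})+\logn W_x-t\logn\oll\omega}_\infty .
\]
Averaging over $x$ with weights $P(x)$ and applying Lemma \ref{lemma:variational} to $\omega_{n,t}$ turns $\sum_xP(x)\DU(\omega_{n,t}\|W_x^{\otimes n})$ into
\[
nD_P^{\bary,\Um}(W)+\DU(\omega_{n,t}\|\oll\omega^{\otimes n}).
\]
Since $\oll\omega^{\otimes n}$ is symmetric, Lemma \ref{lemma:tilted states} bounds the last term by $\frac{1-t}{t}\log g_d(n)$. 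Dividing by $n$ gives \eqref{eq:finiten bound2}.

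\emph{Asymptotic part.} Equation \eqref{eq:finiten bound2-2} is again Lemma \ref{lemma:regularized meas barycentric}. For the regularized version of $D^{\bary,\max}$, note that $\DBS$ is additive, so the Corollary after Lemma \ref{lemma:barycentric weak subadd} guarantees that the limit $\breg{D}_P^{\bary,\max}(W)$ exists. The middle inequalities \eqref{eq:finiten bound3}--\eqref{eq:finiten bound4} then follow by taking $\liminf$ and $\limsup$ in the finite-$n$ chain. For \eqref{eq:finiten bound5}, let $n\to\infty$ in \eqref{eq:finiten bound2} with $t$ fixed; the $g_d$ term vanishes because $g_d$ is a polynomial. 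This leaves the bound with the $t$-dependent norm, and I would then let $t\searrow0$.

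\emph{Main obstacle.} The key point is the $t\searrow 0$ limit inside the norm. As $t\searrow0$ we have $\oll\omega^{t/2}\to\oll\omega^0=\jsp{P}{W}$, and $t\logn\oll\omega\to0$ because everything is finite-dimensional and $\logn$ vanishes off the support. The operators $\oll\omega^{t/2}W_x\inv\oll\omega^{t/2}$ all have the fixed support $\jsp{P}{W}$ and converge to $\jsp{P}{W}W_x\inv\jsp{P}{W}$, which has that same support, since $W_x\inv$ is invertible on $W_x^0\ge\jsp{P}{W}$. Their nonzero eigenvalues therefore stay bounded away from $0$, and $\logn$ is continuous along this family. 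Hence the norm converges to $\norm{\logn(\jsp{P}{W}W_x\inv\jsp{P}{W})+\logn W_x}_\infty$, which yields \eqref{eq:finiten bound5}.
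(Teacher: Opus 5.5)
Your proposal is correct and follows the paper's proof essentially step for step. Both use the single test state $\omega_{n,t}$ built from $\oll\omega^{\otimes n}$ and $\Omega_n$, combine Lemma \ref{lemma:BS-UM bound}, Lemma \ref{lemma:variational} and Lemma \ref{lemma:tilted states} to get \eqref{eq:finiten bound2}, and then let $n\to\infty$ followed by $t\searrow 0$. Your justification of the $t\searrow 0$ limit in the norm (the support stays fixed at $\jsp{P}{W}$ and the eigenvalues stay bounded away from zero), and your appeal to the corollary for the existence of $\breg{D}_P^{\bary,\max}(W)$, supply details the paper leaves implicit.
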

\begin{proof}
The inequalities in \eqref{eq:finiten bound-2}--\eqref{eq:finiten bound1}
are immediate from \eqref{eq:monotone relentr sandwich} and \eqref{eq:pinching for Dmeas}.
The equality in \eqref{eq:finiten bound2-2} is immediate from 
Lemma \ref{lemma:regularized meas barycentric}, and the 
inequalities in \eqref{eq:finiten bound3}--\eqref{eq:finiten bound5}
follow immediately from \eqref{eq:finiten bound-2}--\eqref{eq:finiten bound2}
by first taking liminf in $n$ and then $t\searrow 0$.
Hence, we only need to prove the inequality in 
\eqref{eq:finiten bound2}.

Note that if $\jsp{P}{W}=0$ then 
$D_P^{\bary,\Um}(W)=D_P^{\bary,\max}(W^{\otimes n})=+\infty$
for every $n\in\bN$, and hence 
\eqref{eq:finiten bound2} holds trivially. 
Thus, for the rest we assume that $\jsp{P}{W}\ne 0$.

For any fixed $t\in(0,1]$ and $n\in\bN$, define $\omega_{n,t}$ as in \eqref{eq:BS-UM bound state}, with 
$\omega:=\oll\omega$. Then 
\begin{align*}
&D_P^{\bary,\max}(W^{\otimes n})\\
&\ds\le
\sum_{x\in\X}P(x)\DBS(\omega_{n,t}\|\ch{x}^{\otimes n})\\
&\ds\le
\underbrace{\sum_{x\in\X}P(x)\DU(\omega_{n,t}\|\ch{x}^{\otimes n})}_{=
nD_P^{\bary,\Um}(W)+\DU(\omega_{n,t}\|\oll\omega^{\otimes n})}
+n\sum_{x\in\X}P(x)\norm{\logn(\oll\omega^{t/2} \ch{x}\inv\oll\omega^{t/2})+\logn \ch{x}-t\logn\oll\omega}_{\infty}\\
&\ds=
nD_P^{\bary,\Um}(W)+\underbrace{\DU(\omega_{n,t}\|\oll\omega^{\otimes n})}_{\le\frac{1-t}{t}\log g_d(n)}+
n\sum_{x\in\X}P(x)\norm{\logn(\oll\omega^{t/2} \ch{x}\inv\oll\omega^{t/2})+\logn \ch{x}-t\logn\oll\omega}_{\infty}\\
&\ds\le
n\left[D_P^{\bary,\Um}(W)+\frac{1-t}{t}\frac{\log g_d(n)}{n}+
\sum_{x\in\X}P(x)\norm{\logn(\oll\omega^{t/2} \ch{x}\inv\oll\omega^{t/2})+\logn \ch{x}-t\logn\oll\omega}_{\infty}\right],
\end{align*}
where the first inequality is due to \eqref{eq:multirenyi def}, 
the second inequality follows from Lemma \ref{lemma:BS-UM bound},
the equality is due to Lemma \ref{lemma:variational},
and the last inequality is due to Lemma \ref{lemma:tilted states}.
This proves \eqref{eq:finiten bound2}.
\end{proof}

\begin{rem}\label{rem:Umegaki lower bound}
In the setting of Lemma \ref{lemma:regularized bounds}, if $D^{q_x}$ is also weakly subadditive for every 
$x\in\supp P$ then the inequality between \eqref{eq:finiten bound-2} and \eqref{eq:finiten bound2}
can be improved to $D^{\bary,\Um}_P(W)\le(1/n) D_P^{\bary,\qv}(W^{\otimes n})$, according to 
Lemma \ref{lemma:barycentric sandwich} and Proposition \ref{prop:strong cq add}.
\end{rem}

\begin{thm}\label{thm:main}
Let $\X$ be a finite set, let $P\in\S(\X)$, and 
$D^{\qv}=(D^{q_x})_{x\in\X}$ be such that 
$D^{q_x}$ is a monotone quantum relative entropy for every $x\in\supp P$.
Then for any $W\in\B(\X,\hil)\p$,
\begin{align}
-\log\Tr G_P^{\LE}(W)
=D_P^{\bary,\Um}(W)
&=
\breg{D}_P^{\bary,\meas}(W)\label{eq:main0}\\
&\le
\lreg{D}_P^{\bary,\qv}(W)
\le
\ureg{D}_P^{\bary,\qv}(W)\label{eq:main1}\\
&\le
\breg{D}_P^{\bary,\max}(W)
=
D_P^{\bary,\Um,\#_0}(W)
=
-\log\Tr G_P^{\LE}\bz\acc{W}{S}\jz,\label{eq:main2}
\end{align}
where $S:=\jsp{P}{W}$. If $\ch{x}^0=\ch{y}^0$ for every $x,y\in\supp (P)$ or if $S=0$ then 
\begin{align}
-\log\Tr G_P^{\LE}(W)=D_P^{\bary,\Um}(W)
&=\breg{D}_P^{\bary,\meas}(W)\label{eq:main3}\\
&=\lreg{D}_P^{\bary,\qv}(W)
=
\ureg{D}_P^{\bary,\qv}(W)
=\breg{D}_P^{\bary,\max}(W)=D_P^{\bary,\Um,\#_0}(W).\label{eq:main4}
\end{align}
\end{thm}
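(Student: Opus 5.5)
Most of the chain in \eqref{eq:main0}--\eqref{eq:main2} is already contained in Lemma \ref{lemma:regularized bounds}. The first equality is \eqref{eq:minimal explicit}, the equality with $\breg{D}_P^{\bary,\meas}$ is Lemma \ref{lemma:regularized meas barycentric}, and the inequalities in \eqref{eq:main1} together with the first inequality in \eqref{eq:main2} are \eqref{eq:finiten bound3}--\eqref{eq:finiten bound4}. The last equality in \eqref{eq:main2} is Proposition \ref{prop:sharp0 explicit}. The only new content is therefore the identity $\breg{D}_P^{\bary,\max}(W)=D_P^{\bary,\Um,\#_0}(W)$ (with the existence of the limit coming from the Corollary, $\gamma=1$). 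The case $S=0$ is trivial: every quantity is $+\infty$, since $\jsp{P}{W^{\otimes n}}=S^{\otimes n}=0$.

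For $S\ne 0$, I would prove the two inequalities separately.

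\emph{Lower bound.} Since $\DBS\ge D^{\Um,\#_0}$ (monotone and additive, hence below the maximal one), we get $D_P^{\bary,\max}(W^{\otimes n})\ge D_P^{\bary,\Um,\#_0}(W^{\otimes n})$. The latter equals $nD_P^{\bary,\Um,\#_0}(W)$ by the additivity in Proposition \ref{prop:strong cq add}. Dividing by $n$ gives $\breg{D}_P^{\bary,\max}(W)\ge D_P^{\bary,\Um,\#_0}(W)$.

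\emph{Upper bound.} By Lemma \ref{lemma:sharp restriction equality2} with $\gamma=1$, $\breg{D}_P^{\bary,\max}(W)=\breg{D}_P^{\bary,\max}(\acc{W}{S})$. Every $\acc{W_x}{S}$ has support exactly $S$, since $S\le W_x^0$. Restricting to $S\hil$ via isometric invariance, the family $\acc{W}{S}$ becomes invertible. I then apply \eqref{eq:finiten bound2} to $\acc{W}{S}$ and let $n\to\infty$ with $t$ fixed:
\begin{align*}
\breg{D}_P^{\bary,\max}(W)\le D_P^{\bary,\Um}(\acc{W}{S})+\sum_xP(x)\norm{\logn(\oll\omega^{t/2}(\acc{W_x}{S})\inv\oll\omega^{t/2})+\logn\acc{W_x}{S}-t\logn\oll\omega}_{\infty},
\end{align*}
where $\oll\omega=\omega_P^{\bary,\Um}(\acc{W}{S})$ is invertible on $S\hil$. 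As $t\searrow0$, $\oll\omega^{t/2}\to S$ and $t\logn\oll\omega\to0$, and everything lives on $S\hil$ with invertible operators, so the norm term tends to $\norm{\logn((\acc{W_x}{S})\inv)+\logn\acc{W_x}{S}}_\infty=0$. This is exactly the bound \eqref{eq:finiten bound5} specialized to $\acc{W}{S}$, for which the correction term vanishes. Combined with \eqref{eq:shorted minimal explicit}, $D_P^{\bary,\Um}(\acc{W}{S})=D_P^{\bary,\Um,\#_0}(W)$, this gives the upper bound. The main point to check is the continuity in $t\searrow0$ of the norm term, which relies on the invertibility after restricting to $S\hil$; this is why the shorting step is needed.

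Finally, if $W_x^0=W_y^0$ for all $x,y\in\supp P$, then $S=W_x^0$ and $\acc{W_x}{S}=W_x$. The two ends of the chain then coincide, $-\log\Tr G_P^{\LE}(\acc{W}{S})=-\log\Tr G_P^{\LE}(W)$, which forces equality throughout and gives \eqref{eq:main3}--\eqref{eq:main4}.
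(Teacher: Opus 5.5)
Your proposal is correct and follows essentially the paper's route: reduce to $\acc{W}{S}$ via Lemma~\ref{lemma:sharp restriction equality2}, note that for this equal-support family the correction term in \eqref{eq:finiten bound5} vanishes, and identify the value via Proposition~\ref{prop:sharp0 explicit}. The only cosmetic difference is in the lower bound $\breg{D}_P^{\bary,\max}(W)\ge D_P^{\bary,\Um,\#_0}(W)$: you get it from $\DBS\ge D^{\Um,\#_0}$ plus the additivity in Proposition~\ref{prop:strong cq add}, whereas the paper first treats the equal-support case through the sandwich \eqref{eq:finiten bound2-2}--\eqref{eq:finiten bound5} and then applies it to $\acc{W}{S}$.
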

\begin{proof}
The equalities in \eqref{eq:main0} and in \eqref{eq:main3}
are immediate from Lemma \ref{lemma:regularized meas barycentric} and \eqref{eq:minimal explicit},
and the inequalities in \eqref{eq:main1}--\eqref{eq:main2}
are immediate from Lemma \ref{lemma:regularized bounds}.
The second equality in \eqref{eq:main2} follows from 
Proposition \ref{prop:sharp0 explicit}.

If $S=0$ then all the quantities in \eqref{eq:main0}--\eqref{eq:main4} are equal to $+\infty$, and hence 
all the (in)equalities in \eqref{eq:main0}--\eqref{eq:main4}
hold trivially. Hence, for the rest we assume that $S\ne 0$.

Assume now that $\ch{x}^0=\ch{y}^0$ for every $x,y\in\supp (P)$. Then 
$\logn(\jsp{P}{W} \ch{x}\inv \jsp{P}{W})+\logn \ch{x}=0$ for every $x\in\supp(P)$, and 
by Lemma \ref{lemma:regularized bounds}, 
$\breg{D}_P^{\bary,\max}(W)=D_P^{\bary,\Um}(W)$, proving the 
all but the last equality
in \eqref{eq:main4}. Since in this case $\ch{x}^0=S$, and hence
$\acc{\ch{x}}{S}=\ch{x}$, for every $x\in\supp P$, and therefore 
$D_P^{\bary,\Um,\#_0}(W)=-\log\Tr G_P^{\LE}(W)$ according to \eqref{eq:shorted minimal explicit}, 
the last equality in \eqref{eq:main4} also holds.
This completes the proof of \eqref{eq:main3}--\eqref{eq:main4}.

Finally, consider a general $W\in\B(\X,\hil)\p$ with $S\ne 0$. 
Then 
\begin{align*}
\breg{D}_P^{\bary,\max}(W)
=
\breg{D}_P^{\bary,\max}\bz\acc{W}{S}\jz
=
-\log\Tr G_P^{\LE}\bz\acc{W}{S}\jz,
\end{align*}
where the first equality is due to Lemma \ref{lemma:sharp restriction equality2}, and the second equality follows from 
\eqref{eq:main3}--\eqref{eq:main4} by noting that 
$(\acc{\ch{x}}{S})^0=S=(\acc{\ch{y}}{S})^0$ for every $x,y\in\supp(P)$.
This proves the first equality in \eqref{eq:main2}, completing the proof of 
\eqref{eq:main0}--\eqref{eq:main2}.
\end{proof}

The above immediately yields that $D^{\bary,\Um}$ and $D^{\bary,\Um,\#_0}$ are the smallest and the largest ones, respectively, among the 
weakly additive barycentric R\'enyi divergences generated by monotone quantum relative entropies. 
That is, we have the following:

\begin{thm}\label{thm:main2}
For every finite set $\X$, every 
$P\in\S(\X)$, every 
$D^{\qv}=(D^{q_x})_{x\in\X}$ such that $D^{q_x}$ is monotone for every 
$x\in\supp P$, and every $W\in\B(\X,\hil)\p$, 
\begin{align}\label{eq:main sandwich wa}
D_P^{\bary,\Um}(W)\le D_P^{\bary,\qv}(W)\le D_P^{\bary,\Um,\#_0}(W),
\end{align}
provided that $D_P^{\bary,\qv}$ is weakly additive. Moreover, both inequalities 
in \eqref{eq:main sandwich wa} hold as
equalities  when $W_x^0=W_y^0$ for all $x,y\in\supp P$.
\end{thm}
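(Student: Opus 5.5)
The plan is to combine weak additivity with Theorem \ref{thm:main}. If $D_P^{\bary,\qv}$ is weakly additive, then for every $W\in\B(\X,\hil)\p$ and every $n\in\bN$ we have $D_P^{\bary,\qv}(W^{\otimes n})=nD_P^{\bary,\qv}(W)$. Dividing by $n$, the sequence $\frac{1}{n}D_P^{\bary,\qv}(W^{\otimes n})$ is constant, equal to $D_P^{\bary,\qv}(W)$. Hence its limit exists and
\begin{align*}
\lreg{D}_P^{\bary,\qv}(W)=\ureg{D}_P^{\bary,\qv}(W)=D_P^{\bary,\qv}(W).
\end{align*}
The case $D_P^{\bary,\qv}(W)=\pm\infty$ needs a brief remark. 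By definition, weak additivity means equality in \eqref{eq:weak subadd def}. With the convention $n\cdot(+\infty)=+\infty$, the constant-sequence argument still goes through. The value $-\infty$ cannot occur, because $D^{q_x}$ is monotone, so by \eqref{eq:monotone relentr sandwich} it dominates $D^{\meas}\ge -\Tr\omega\log\Tr\ch{x}>-\infty$ on states $\omega$.

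Next, I substitute this into the chain \eqref{eq:main0}--\eqref{eq:main2} of Theorem \ref{thm:main}, which holds for any family of monotone relative entropies $D^{q_x}$, $x\in\supp P$. It gives
\begin{align*}
D_P^{\bary,\Um}(W)\le\lreg{D}_P^{\bary,\qv}(W)=D_P^{\bary,\qv}(W)=\ureg{D}_P^{\bary,\qv}(W)\le D_P^{\bary,\Um,\#_0}(W),
\end{align*}
which is \eqref{eq:main sandwich wa}.

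For the equality case, assume $W_x^0=W_y^0$ for all $x,y\in\supp P$. Then \eqref{eq:main3}--\eqref{eq:main4} of Theorem \ref{thm:main} give $D_P^{\bary,\Um}(W)=D_P^{\bary,\Um,\#_0}(W)$, since $\acc{W_x}{S}=W_x$ in that case. This collapses the sandwich, so both inequalities are equalities. This part does not even need weak additivity, but the statement only asserts it under that hypothesis anyway.

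There is essentially no obstacle here. The only point requiring care is the infinite values: if $S=0$, everything equals $+\infty$ by Theorem \ref{thm:main}, so the statement is trivial in that case.
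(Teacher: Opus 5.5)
Your proof is correct, and it is exactly the argument the paper has in mind when it says Theorem~\ref{thm:main2} follows immediately from Theorem~\ref{thm:main}: weak additivity makes $\frac{1}{n}D_P^{\bary,\qv}(W^{\otimes n})$ constant, so the regularized quantities in \eqref{eq:main1} equal $D_P^{\bary,\qv}(W)$, and \eqref{eq:main3}--\eqref{eq:main4} then collapse the sandwich. One caution about your closing remark that the equality case ``does not even need weak additivity'': this is true only for the outer identity $D_P^{\bary,\Um}(W)=D_P^{\bary,\Um,\#_0}(W)$, because placing the middle term $D_P^{\bary,\qv}(W)$ between the outer ones uses the sandwich and hence weak additivity, and Proposition~\ref{prop:nonadd} gives, for suitable $D^{\qv}$, invertible $W$ (so all supports are equal) with $D_P^{\bary,\Um}(W)<D_P^{\bary,\qv}(W)$.
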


Theorem \ref{thm:main} yields the following:

\begin{cor}\label{cor:sharp gamma regularized}
For every $\gamma\in(0,1]$, $\breg{D}^{\bary,\Um,\#_{\gamma}}=D^{\bary,\Um,\#_0}$. That is, for every 
finite set $\X$, every $P\in\S(\X)$, and every $W\in\B(\X,\hil)\p$, 
\begin{align}\label{eq:sharp gamma regularized}
\breg{D}_P^{\bary,\Um,\#_{\gamma}}(W)=D_P^{\bary,\Um,\#_0}(W)=-\log\Tr G_P^{\LE}\bz\acc{W}{\jsp{P}{W}}\jz.
\end{align}
\end{cor}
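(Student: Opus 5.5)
The plan is to sandwich $\breg{D}_P^{\bary,\Um,\#_{\gamma}}$ between the two ends of Theorem \ref{thm:main}. First I check that $D^{\Um,\#_\gamma}$ fits that theorem. For $\gamma\in(0,1)$ it is monotone and additive, as shown in \cite{mosonyi2022geometric}. For $\gamma=1$ it is $\DBS$, which is monotone and additive by Lemma \ref{lemma:sandwich}. Hence the preceding Corollary applies: for every $\gamma\in(0,1]$ the limit $\breg{D}_P^{\bary,\Um,\#_\gamma}(W)$ exists.

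Theorem \ref{thm:main} with $D^{q_x}:=D^{\Um,\#_\gamma}$ for all $x$ gives the upper bound
\[
\breg{D}_P^{\bary,\Um,\#_\gamma}(W)\le \breg{D}_P^{\bary,\max}(W)=D_P^{\bary,\Um,\#_0}(W)=-\log\Tr G_P^{\LE}\bz\acc{W}{S}\jz,
\]
where $S=\jsp{P}{W}$. If $S=0$, every quantity is $+\infty$ and there is nothing to prove, so I assume $S\ne0$ from here on.

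The lower bound is the only step that needs an idea. Theorem \ref{thm:main} alone only gives $D_P^{\bary,\Um}(W)$ from below, which can be strictly smaller. To close the gap I will use Lemma \ref{lemma:sharp restriction equality2}, which gives $\breg{D}_P^{\bary,\Um,\#_\gamma}(W)=\breg{D}_P^{\bary,\Um,\#_\gamma}(\acc{W}{S})$.

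For the family $\acc{W}{S}$, all supports are equal: $(\acc{W_x}{S})^0=S$ for every $x\in\supp P$. This holds because $S\le W_x^0$, so $S=S\wedge W_x^0$ is the support of $\acc{W_x}{S}$ by \eqref{eq:ac explicit}. The joint support of $\acc{W}{S}$ is therefore again $S$.

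The equal-support case \eqref{eq:main3}--\eqref{eq:main4} of Theorem \ref{thm:main} then applies to $\acc{W}{S}$. It yields
\[
\breg{D}_P^{\bary,\Um,\#_\gamma}(\acc{W}{S})=D_P^{\bary,\Um}(\acc{W}{S})=-\log\Tr G_P^{\LE}(\acc{W}{S}),
\]
which equals $D_P^{\bary,\Um,\#_0}(W)$ by \eqref{eq:shorted minimal explicit}. This gives the claimed equality directly, and the upper bound above serves as a consistency check.

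The main obstacle is confirming that the support of $\acc{W_x}{S}$ is exactly $S$, and that the joint support of $\acc{W}{S}$ is $S$, so that both the equal-support hypothesis and the identification $\acc{(\acc{W_x}{S})}{S}=\acc{W_x}{S}$ hold. Both follow from the explicit formula \eqref{eq:ac explicit}, since $\acc{\sigma}{\rho}=((\rho^0\wedge\sigma^0)\sigma^{-1}(\rho^0\wedge\sigma^0))^{-1}$ has support $\rho^0\wedge\sigma^0$.
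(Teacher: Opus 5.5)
Your proof is correct, but it takes a different route from the paper's. The paper argues by a sandwich in $\gamma$,
\[
D_P^{\bary,\Um,\#_0}(W)=\breg{D}_P^{\bary,\Um,\#_0}(W)\le\breg{D}_P^{\bary,\Um,\#_{\gamma}}(W)\le\breg{D}_P^{\bary,\max}(W)=D_P^{\bary,\Um,\#_0}(W).
\]
The first equality is the additivity of $D_P^{\bary,\Um,\#_0}$ (Proposition \ref{prop:strong cq add}). The two inequalities come from the monotonicity of $\gamma\mapsto D^{\Um,\#_\gamma}$ imported from \cite[Proposition 4.14]{mosonyi2022geometric}. The last equality is Theorem \ref{thm:main}.

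You instead rerun, for arbitrary $\gamma$, the mechanism that the proof of Theorem \ref{thm:main} uses only at $\gamma=1$. Lemma \ref{lemma:sharp restriction equality2} replaces $W$ by $\acc{W}{S}$, whose members $\acc{W_x}{S}$, $x\in\supp P$, all have support exactly $S$; your support computation from \eqref{eq:ac explicit} is right. The equal-support clause \eqref{eq:main3}--\eqref{eq:main4} then fixes the regularized value at $D_P^{\bary,\Um}(\acc{W}{S})=D_P^{\bary,\Um,\#_0}(W)$. In your chain, the preliminary upper bound and the identity $\acc{(\acc{W_x}{S})}{S}=\acc{W_x}{S}$ are never actually used.

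Each route buys something different. Yours avoids the external ordering in $\gamma$ and the additivity at $\gamma=0$. It also works verbatim for any monotone $D^{q_x}$ with $D^{q_x}(\rho\|\sigma)=D^{q_x}(\rho\|\acc{\sigma}{S})$ whenever $\rho^0\le S^0$; this is exactly the invariance that Lemma \ref{lemma:sharp0 bound} supplies for the whole family $D^{\Um,\#_\gamma}$. The paper's sandwich needs no such invariance, and instead applies to any monotone $D^{q_x}\ge D^{\Um,\#_0}$.
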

\begin{proof}
The second equality in \eqref{eq:sharp gamma regularized} is due to Proposition \ref{prop:sharp0 explicit}.
For any $\gamma\in(0,1]$, we have
\begin{align*}
D_P^{\bary,\Um,\#_0}(W)
=
\breg{D}_P^{\bary,\Um,\#_0}(W)
\le
\breg{D}_P^{\bary,\Um,\#_{\gamma}}(W)
\le
\breg{D}_P^{\bary,\Um,\#_1}(W)=
\breg{D}_P^{\bary,\max}(W)=
D_P^{\bary,\Um,\#_0}(W),
\end{align*}
where the first equality is by the additivity of 
$D_P^{\bary,\Um,\#_0}$ (Proposition \ref{prop:strong cq add}),
the inequalities are due to the monotonicity of 
$D^{\Um,\#_{\gamma}}$ in $\gamma$ (\cite[Proposition 4.14]{mosonyi2022geometric}),
the penultimate equality is by definition, and the last equality is due to Theorem \ref{thm:main}.
This proves the first equality in \eqref{eq:sharp gamma regularized}.
\end{proof}

\subsection{Non-additive barycentric R\'enyi divergences}
\label{sec:nonadd}

Theorem \ref{thm:main} implies that barycentric R\'enyi divergences generated by quantum relative entropies 
satisfying a few basic properties are typically not even weakly additive. We have the following general 
result. 

\begin{prop}\label{prop:nonadd}
Let $\X$ be a finite set, $P\in\S(\X)$ be such that $|\supp P|\ge 2$, and 
for every $x\in\supp P$, let $D^{q_x}$ be weakly subadditive, monotone and lower semi-continuous, and 
assume that there exist two different $x_1,x_2\in\supp P$ such that 
for any non-commuting $\rho,\sigma\in\B(\hil)\pne$ with $\rho^0\le\sigma^0$, 
$\DU(\rho\|\sigma)<D^{q_{x_k}}(\rho\|\sigma)$, $k=1,2$. 
Assume that $W\in\B(\X,\hil)\pp$ is such that 
the intersection of the commutants of $W_{x_1}$ and of $W_{x_2}$
is $\bC I_{\hil}$, 
and that $\sum_xP(x)\jsp{W}{P}(\logn W_x)\jsp{W}{P}$ is not a scalar multiple of the identity operator.
Then
\begin{align*}
\breg{D}_P^{\bary,\qv}(W)< D_P^{\bary,\qv}(W).
\end{align*}
In particular, $D_P^{\bary,\qv}$ is not weakly additive.
\end{prop}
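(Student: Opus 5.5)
Since $W$ is invertible, $S:=\jsp{P}{W}=I$ and all $W_x^0$ coincide. Theorem \ref{thm:main} therefore gives
\begin{align*}
\breg{D}_P^{\bary,\qv}(W)=D_P^{\bary,\Um}(W)=-\log\Tr\GM_P^{\LE}(W).
\end{align*}
By Lemma \ref{lemma:barycentric weak subadd}, this limit exists and equals the infimum, because each $D^{q_x}$ is weakly subadditive. By Lemma \ref{lemma:barycentric sandwich} we also have $D_P^{\bary,\Um}(W)\le D_P^{\bary,\qv}(W)$. It therefore suffices to rule out equality $D_P^{\bary,\qv}(W)=D_P^{\bary,\Um}(W)$. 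Once that is done, weak additivity fails for some $n$, since otherwise $\breg{D}_P^{\bary,\qv}(W)=D_P^{\bary,\qv}(W)$.

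\textbf{Step 1: the infimum is attained.} The infimum defining $D_P^{\bary,\qv}(W)$ is over the compact set $\S(\hil)$. Each $\omega\mapsto D^{q_x}(\omega\|W_x)$ is lower semi-continuous by assumption, and takes values in $(-\infty,+\infty]$ because of the sandwich with $\DU$. Hence there is a minimizer $\omega^*$. Since the $W_x$ are invertible, $\omega^{*0}\le W_x^0$ holds automatically.

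\textbf{Step 2: chain of inequalities under the equality assumption.} Assume equality. Then
\begin{align*}
D_P^{\bary,\Um}(W)\le\sum_xP(x)\DU(\omega^*\|W_x)\le\sum_xP(x)D^{q_x}(\omega^*\|W_x)=D_P^{\bary,\qv}(W)=D_P^{\bary,\Um}(W),
\end{align*}
where the second inequality uses Lemma \ref{lemma:sandwich}, so all inequalities are equalities. From the first equality, uniqueness of the Umegaki minimizer (Lemma \ref{lemma:variational} with $n=1$) forces $\omega^*=\oll\omega=\GM_P^{\LE}(W)/\Tr\GM_P^{\LE}(W)$. From the second, together with $\DU\le D^{q_x}$ termwise and $P(x_k)>0$ (since $x_k\in\supp P$), we get $\DU(\oll\omega\|W_{x_k})=D^{q_{x_k}}(\oll\omega\|W_{x_k})$ for $k=1,2$.

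\textbf{Step 3: commutation and the contradiction.} By the strictness hypothesis, the equality in Step 2 forces $\oll\omega$ to commute with both $W_{x_1}$ and $W_{x_2}$. By the trivial-intersection-of-commutants assumption, $\oll\omega$ is then a scalar multiple of $I$, i.e.\ $\oll\omega=I/d$. But $\oll\omega$ is proportional to $\exp\bz\sum_xP(x)\logn W_x\jz$ (here $S=I$), so $\sum_xP(x)\logn W_x$ would be a scalar multiple of $I$, contradicting the hypothesis. Hence $D_P^{\bary,\qv}(W)>D_P^{\bary,\Um}(W)=\breg{D}_P^{\bary,\qv}(W)$.

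The main obstacle is Step 1, the existence of the minimizer. It relies on lower semi-continuity and on the values being bounded below, both of which hold here. The remaining steps are direct applications of earlier results.
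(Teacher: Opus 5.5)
Your proof is correct and follows essentially the same route as the paper. Both arguments use Theorem \ref{thm:main} to identify the regularized quantity with $D_P^{\bary,\Um}(W)$, take a minimizer via lower semi-continuity and compactness, and then combine uniqueness of the Umegaki optimizer with the strict-inequality and trivial-commutant hypotheses. The only difference is presentation: the paper does a case split on whether the minimizer equals $I/d$, while you argue by contradiction, and these are logically equivalent.
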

\begin{proof}
Note that there always exists a $W$ as in the statement. 
Indeed, choose $W_{x_1}:=\sum_{k=1}^dk\pr{e_k}$,
$W_{x_2}:=\sum_{k=1}^dk\pr{f_k}$, where 
$(e_i)_{i\in[d]}$ and $(f_i)_{i\in[d]}$ are orthonormal bases on $\hil$ such that 
$|\inner{e_i}{f_j}|\in(0,1)$ for every $i,j\in[d]$ (one way to ensure this is by choosing
$(e_i)_{i\in[d]}$ and $(f_i)_{i\in[d]}$ to be mutually unbiased), and 
define $W_x:=I$, $x\in\X\setminus\{x_1,x_2\}$. Then 
any operator that commutes with both $W_{x_1}$ and $W_{x_2}$ would need to be diagonal in both bases
$(e_i)_{i\in[d]}$ and $(f_i)_{i\in[d]}$, which is is only satisfied by scalar multiples of the identity.
Moreover, 
\begin{align*}
\sum_xP(x)\jsp{P}{W}(\logn W_x)\jsp{P}{W}=P(x_1)\logn W_{x_1}+P(x_2)\logn W_{x_2},
\end{align*}
and if this was equal to $cI$ 
for some $c\in\bR$ then it would imply that 
$W_{x_1}$ and $W_{x_2}$ commute, which contradicts our assumptions.


Let us now take any $W$ as in the statement.
According to Theorem \ref{thm:main} and Remark \ref{rem:Umegaki lower bound},
\begin{align}\label{eq:nonadd proof0}
\breg{D}_P^{\bary,\qv}(W)=D_P^{\bary,\Um}(W)\le D_P^{\bary,\qv}(W).
\end{align}
By the lower semi-continuity of the $D^{q_x}$ and the compactness of $\S(\hil)$, there exists an 
$\omega\in\S(\hil)$ such that 
\begin{align}\label{eq:nonadd proof1}
D_P^{\bary,\qv}(W)
=
\sum_{x\in\X}P(x)D^{q_x}(\omega\|W_x)
\ge
\sum_{x\in\X}P(x)\DU(\omega\|W_x)
\ge
D_P^{\bary,\Um}(W),
\end{align}
where the first inequality is due to Lemma \ref{lemma:sandwich}, 
and the second inequality is by definition.

Note that $\omega_P^{\bary,\Um}(W)=G_P^{\LE}(W)/\Tr G_P^{\LE}(W)$ is the unique optimizer for the
definition of $D_P^{\bary,\Um}(W)$, and it is not equal to $I/d$, due to the assumptions on $W$. Hence, if 
$\omega=I/d$ then the last inequality in \eqref{eq:nonadd proof1} is strict, and therefore so is 
the inequality in \eqref{eq:nonadd proof0} as well.

If $\omega\ne I/d$ then 
by the assumptions on $W$, 
$\omega$ does not commute with at least one of $W_{x_1}$ and $W_{x_2}$, and hence, by the assumptions on 
$D^{\qv}$, 
\begin{align*}
P(x_1)D^{q_{x_1}}(\omega\|W_{x_1})+P(x_2)D^{q_{x_2}}(\omega\|W_{x_2})
>
P(x_1)\DU(\omega\|W_{x_1})+P(x_2)\DU(\omega\|W_{x_2}).
\end{align*}
Hence, the first inequality in \eqref{eq:nonadd proof1} is strict, and therefore
the inequality in \eqref{eq:nonadd proof0} is strict as well.
\end{proof}

\begin{cor}\label{cor:sharp nonadd1}
The statement of Proposition \ref{prop:nonadd} holds if 
$D^{q_{x_1}}=D^{\Um,\#_{\gamma_1}}$,
$D^{q_{x_2}}=D^{\Um,\#_{\gamma_2}}$ for some $\gamma_1,\gamma_2\in(0,1]$.
\end{cor}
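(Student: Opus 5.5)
The plan is to check the hypotheses of Proposition \ref{prop:nonadd} when $D^{q_{x_k}}=D^{\Um,\#_{\gamma_k}}$, $k=1,2$, for the remaining $D^{q_x}$ (with $x\in\supp P$ and $x\neq x_1,x_2$) assumed to be as in that proposition. Three properties of $D^{\Um,\#_\gamma}$, $\gamma\in(0,1]$, are needed: weak subadditivity, monotonicity, and lower semi-continuity. We also need the strict separation from the Umegaki relative entropy on non-commuting pairs with $\rho^0\le\sigma^0$.

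\emph{Monotonicity and weak subadditivity.} For $\gamma\in(0,1)$, \cite{mosonyi2022geometric} shows that $D^{\Um,\#_\gamma}$ is monotone and additive. Additivity gives weak additivity, hence weak subadditivity. For $\gamma=1$ we have $D^{\Um,\#_1}=\DBS$, which is monotone and additive by Lemma \ref{lemma:sandwich}.

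\emph{Strict separation.} Proposition \ref{prop:gamma-relentr mon}\ref{item:gamma-relentr mon3} states exactly this: for non-commuting $\rho,\sigma\in\B(\hil)\pne$ with $\rho^0\le\sigma^0$, $\DU(\rho\|\sigma)<D^{\Um,\#_\gamma}(\rho\|\sigma)$ for every $\gamma\in(0,1]$. So the separation condition holds for $x_1$ and $x_2$.

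\emph{Lower semi-continuity.} This is the step I expect to be the main obstacle. I would sidestep full lower semi-continuity on all of $\B(\hil)\pne\times\B(\hil)\p$, since it is only used in Proposition \ref{prop:nonadd} to guarantee an optimizer $\omega$ for $D_P^{\bary,\qv}(W)$ with $W$ invertible. With $\sigma=W_x>0$ fixed, we have $\acc{W_x}{\omega}\ge\acc{W_x}{\omega}$ trivially, and more usefully the expression
\[
\omega\mapsto \frac{1}{1-\gamma}\Tr\omega\left[\logn\omega-\logn\bz\omega^{1/2}\bz\omega^{-1/2}\acc{W_x}{\omega}\omega^{-1/2}\jz^{1-\gamma}\omega^{1/2}\jz\right].
\]
The cleaner route is a supremum representation. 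I would write $D^{\Um,\#_\gamma}(\omega\|W_x)$ as a supremum of continuous functions of $\omega$ — for instance, via the variational formula for $\DU$ and the limit $\sigma\#_\gamma\omega=\lim_{\ep\searrow0}(\sigma+\ep I)\#_\gamma(\omega+\ep I)$ with monotonicity in $\ep$, using joint concavity and monotonicity of $\#_\gamma$. Lower semi-continuity in $\omega$ would then follow.

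For $\gamma=1$, I would use $\DBS(\omega\|W_x)=\Tr\omega\log(\omega^{1/2}W_x^{-1}\omega^{1/2})$. This is continuous in $\omega$ for invertible $W_x$, because $\eta(t)=t\log t$ applied to $W_x^{-1/2}\omega W_x^{-1/2}$ gives $\Tr W_x\,\eta(W_x^{-1/2}\omega W_x^{-1/2})$. For $\gamma\in(0,1)$, if $\omega$ is invertible then $\acc{W_x}{\omega}=W_x$ and everything is continuous. At the boundary, I would rely on the monotone increasing limit in $\gamma$ or $\ep$ to get lower semi-continuity, or else cite the joint lower semi-continuity established in \cite{mosonyi2022geometric}. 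Once these properties are in hand, the conclusion follows directly from Proposition \ref{prop:nonadd}.
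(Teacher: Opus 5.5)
Your proposal is correct and is essentially the paper's argument. The paper proves the corollary in one line from Propositions \ref{prop:nonadd} and \ref{prop:gamma-relentr mon}, getting strict separation from item \ref{item:gamma-relentr mon3} and taking monotonicity, weak subadditivity and lower semi-continuity of $D^{\Um,\#_\gamma}$ for granted; you make these explicit. For lower semi-continuity with $\gamma\in(0,1)$, the $\ep$-regularization is the option that actually works, and neither the limit in $\gamma$ nor joint concavity is needed: $(W_x+\ep I)\#_\gamma(\omega+\ep I)$ decreases to $W_x\#_\gamma\omega$ as $\ep\searrow 0$, and $\DU(\omega\|\cdot)$ is antitone and lower semi-continuous, so $(1-\gamma)D^{\Um,\#_\gamma}(\omega\|W_x)$ is the supremum over $\ep>0$ of the continuous functions $\omega\mapsto\DU\bz\omega\|(W_x+\ep I)\#_\gamma(\omega+\ep I)\jz$.
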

\begin{proof}
Immediate from Propositions \ref{prop:nonadd} and \ref{prop:gamma-relentr mon}.
\end{proof}

As a special case of Corollary \ref{cor:sharp nonadd1}, we get the following:
\begin{cor}
For any $\gamma\in(0,1]$, $D^{\bary,\Um,\#_{\gamma}}$ is not weakly additive, i.e., 
for any finite set $\X$ and any $P\in\S(\X)$ with $|\supp P|\ge 2$, 
$D_P^{\bary,\Um,\#_{\gamma}}$ is not weakly additive. 
\end{cor}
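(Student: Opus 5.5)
This is a direct specialization of Corollary \ref{cor:sharp nonadd1}. I would take $\gamma_1=\gamma_2=\gamma$ and $D^{q_x}=D^{\Um,\#_\gamma}$ for every $x\in\X$. Then I would check that the hypotheses of Proposition \ref{prop:nonadd} hold for this choice, and that a witness $W$ exists for every $P$ with $|\supp P|\ge 2$.

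\emph{Hypotheses on the relative entropies.} For $\gamma\in(0,1)$, $D^{\Um,\#_\gamma}$ is monotone and additive by \cite{mosonyi2022geometric}, hence weakly subadditive. For $\gamma=1$ the same holds for $\DBS$ by Lemma \ref{lemma:sandwich}.

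Lower semi-continuity is needed only to guarantee an optimizer $\omega$ in the proof of Proposition \ref{prop:nonadd}. The relevant arguments are $W_x$ invertible and $\omega$ ranging over all states, so I would verify it there:
\begin{itemize}
\item For invertible $\sigma$ we have $\acc{\sigma}{\rho}=\sigma$ whenever $\rho^0\le\sigma^0$ fails to restrict anything, so $D^{\Um,\#_\gamma}(\rho\|\sigma)=\frac{1}{1-\gamma}\DU(\rho\|\sigma\#_\gamma\rho)$.
\item The map $\rho\mapsto\sigma\#_\gamma\rho$ is continuous in $\rho$ for invertible $\sigma$, as the monotone limit of continuous expressions.
\item $\DU$ is jointly lower semi-continuous.
\item Alternatively, $D^{\Um,\#_\gamma}$ is the supremum of the continuous regularized quantities, using the $\ep$-definition of the Kubo-Ando mean together with monotone convergence in $\ep$.
\end{itemize}
This continuity check is where I expect the only real work. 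Since Corollary \ref{cor:sharp nonadd1} already asserts the conclusion for these divergences, I would simply invoke it and let it carry this point.

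\emph{Strict inequality for non-commuting pairs.} This is exactly Proposition \ref{prop:gamma-relentr mon}\ref{item:gamma-relentr mon3}: for non-commuting $\rho,\sigma$ with $\rho^0\le\sigma^0$ and $\gamma\in(0,1]$, we have $\DU(\rho\|\sigma)<D^{\Um,\#_\gamma}(\rho\|\sigma)$. So any two distinct $x_1,x_2\in\supp P$, which exist because $|\supp P|\ge2$, serve as the required pair.

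\emph{Conclusion.} The proof of Proposition \ref{prop:nonadd} constructs an explicit $W$ for $d\ge2$: two mutually unbiased spectral bases, and identities elsewhere. I would note that the argument there uses only $P(x_1),P(x_2)>0$ and never needs strict positivity of $P$ elsewhere, since $W_x=I$ contributes nothing to the log-sum. For this $W$, Proposition \ref{prop:nonadd} gives $\breg{D}_P^{\bary,\Um,\#_\gamma}(W)<D_P^{\bary,\Um,\#_\gamma}(W)$.

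If $D_P^{\bary,\Um,\#_\gamma}$ were weakly additive, then $\frac1nD_P^{\bary,\Um,\#_\gamma}(W^{\otimes n})=D_P^{\bary,\Um,\#_\gamma}(W)$ for all $n$. The regularization would then equal $D_P^{\bary,\Um,\#_\gamma}(W)$, a contradiction. Hence $D_P^{\bary,\Um,\#_\gamma}$ is not weakly additive, for every such $P$ and every $\gamma\in(0,1]$.
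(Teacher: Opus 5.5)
Your proposal is correct and follows the paper's route: the paper obtains this corollary as the special case $\gamma_1=\gamma_2=\gamma$ (with $D^{q_x}=D^{\Um,\#_\gamma}$ for all $x$) of Corollary \ref{cor:sharp nonadd1}, taking the remaining hypotheses of Proposition \ref{prop:nonadd} as known. Your side remarks on lower semi-continuity are not load-bearing, but two of them are off: $\acc{\sigma}{\rho}=\sigma$ fails for invertible $\sigma$ unless $\rho$ is invertible too, and a monotone limit of continuous maps gives only semicontinuity --- continuity of $\rho\mapsto\sigma\#_\gamma\rho$ for invertible $\sigma$ follows instead from $\sigma\#_\gamma\rho=\sigma^{1/2}(\sigma^{-1/2}\rho\sigma^{-1/2})^{\gamma}\sigma^{1/2}$.
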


\section*{Acknowledgments}

This research was partially supported by the National Research, Development and Innovation Office
of Hungary (NKFIH) via the research grants K 146380 and EXCELLENCE 151342, and by the
Ministry of Culture and Innovation and the National Research, Development and Innovation
Office within the Quantum Information National Laboratory of Hungary (Grant No. 2022-2.1.1-
NL-2022-00004). 
\bigskip

\centerline{\textbf{AI statement}}
\medskip

ChatGPT-6 Astra was instrumental in obtaining the results of the paper. 
The author provided the original problem and possible proof strategies and references, verified and reorganized the proof steps, and wrote the paper.

\bibliography{bibliography_br}

\end{document}